\documentclass{article}
\usepackage[T1]{fontenc}
\usepackage{xcolor}
\usepackage[margin=1in]{geometry}
\usepackage{amsfonts,amsmath,amsthm,amssymb,mathtools}  %

\mathtoolsset{centercolon}
\usepackage{xfrac,nicefrac}
\usepackage{mathdots}
\usepackage{mleftright}  %
\let\left\mleft
\let\right\mright
\usepackage{todonotes}

\usepackage{xspace}
\xspaceaddexceptions{]\}}  %
\usepackage{regexpatch}

\usepackage{bm,bbm,dsfont}  %
\usepackage{caption}
\usepackage[normalem]{ulem}
\usepackage{enumitem}

\usepackage{graphicx}
\usepackage{float}
\usepackage{subcaption}  %
\usepackage{tcolorbox}
\usepackage{tikz}
\usetikzlibrary{decorations.pathreplacing}
\usetikzlibrary{calc}
\usetikzlibrary{positioning}
\usetikzlibrary{arrows.meta}
\usetikzlibrary{fit}

\usepackage[linesnumbered,boxed,ruled,vlined]{algorithm2e}

\SetCommentSty{mycommfont}

\usepackage{thmtools,thm-restate}
\usepackage[colorlinks,citecolor=blue,linkcolor=blue,urlcolor=red]{hyperref}

\theoremstyle{plain}

\newtheorem{theorem}{Theorem}[section]  %
\newtheorem{lemma}[theorem]{Lemma}
\newtheorem{fact}[theorem]{Fact}

\newtheorem{corollary}[theorem]{Corollary}

\newtheorem{claim}[theorem]{Claim}

\theoremstyle{definition}  %
\newtheorem{definition}[theorem]{Definition}
\newtheorem{remark}[theorem]{Remark}

\usepackage[capitalise]{cleveref}
\crefname{algocf}{Algorithm}{Algorithms}
\Crefname{algocf}{Algorithm}{Algorithms}
\crefname{claim}{Claim}{Claims}
\Crefname{claim}{Claim}{Claims}
\Crefname{fact}{Fact}{Facts}

\usepackage{xurl}

\newfloat{Distribution}{htbp}{loa}
\crefname{Distribution}{Distribution}{Distributions}
\Crefname{Distribution}{Distribution}{Distributions}
\newfloat{Protocol}{htbp}{loa}
\crefname{Protocol}{Protocol}{Protocols}
\Crefname{Protocol}{Protocol}{Protocols}

\SetKwProg{Function}{Function}{:}{}

\SetKwProg{CodeBlock}{}{}{}
\SetKwProg{Repeat}{Repeat}{:}{}

\DeclarePairedDelimiter{\bk}{(}{)}
\DeclarePairedDelimiter{\Bk}{[}{]}
\DeclarePairedDelimiter{\BK}{\{}{\}}

\DeclarePairedDelimiterX\mysetbase[2]{\lbrace}{\rbrace}{#1\,\delimsize\vert\,#2}
\NewDocumentCommand{\myset}{sO{}m m}{%
  \IfBooleanTF{#1}%
    {\mysetbase*{#3}{#4}}%
    {\mysetbase[#2]{#3}{#4}}%
}

\DeclareMathOperator*{\E}{\mathbb{E}}

\let\Pr\PrAux
\DeclareMathOperator{\poly}{poly}

\renewcommand{\d}{\mathrm{d}}

\newcommand{\defeq}{\coloneqq}
\newcommand{\eps}{\varepsilon}

\renewcommand{\emptyset}{\varnothing}
\renewcommand{\epsilon}{\eps}

\newcommand{\defn}[1]{\emph{\boldmath\textbf{#1}}}

\usepackage{regexpatch}
\makeatletter
\xpatchcmd\thmt@restatable{%
\csname #2\@xa\endcsname\ifx\@nx#1\@nx\else[{#1}]\fi
}{%
\ifthmt@thisistheone
\csname #2\@xa\endcsname\ifx\@nx#1\@nx\else[{#1}]\fi
\else
\csname #2\@xa\endcsname[{Restated}]
\fi}{}{}
\makeatother

\usetikzlibrary{shapes.geometric}

\newcommand{\gap}{\mathrm{gap}}
\newcommand{\gapbar}{\overline{\mathrm{gap}}}
\newcommand{\trie}{\mathrm{trie}}
\newcommand{\enc}{\mathrm{enc}}

\title{Optimal Time-Space Tradeoff for Dynamic Difference-Encoded Dictionaries}
\author{
  Guy E. Blelloch\thanks{Carnegie Mellon University. \texttt{guyb@cs.cmu.edu}}
  \and
  Yang Hu\thanks{Tsinghua University. \texttt{y-hu22@mails.tsinghua.edu.cn}}
  \and
  William Kuszmaul\thanks{Carnegie Mellon University. \texttt{kuszmaul@cmu.edu}}
  \and
  Jingxun Liang\thanks{Carnegie Mellon University. \texttt{jingxunl@andrew.cmu.edu}}
  \and
  Renfei Zhou\thanks{Carnegie Mellon University. \texttt{renfeiz@andrew.cmu.edu}}
}
\date{}

\begin{document}

\maketitle

\begin{abstract}
The dynamic dictionary is a fundamental data structure that maintains a set $S\subset [U]$ of size $n$ (we assume $n=U^{1-\Theta(1)}$), supporting insertions, deletions and membership queries. Previous works mostly focused on constructing dictionaries that support operations in $O(1)$ time and use space as close to the \emph{information-theoretic bound} of $\log\binom{U}{n}$ bits as possible.

In this paper, we study \emph{difference-encoded} dictionaries, which are dictionaries that use space close to the gap entropy $\text{gap}(S)\coloneqq\sum_{i=2}^{|S|}\bigl(\lceil\log(x_i-x_{i-1}+1)\rceil+1\bigr)$ bits to store the set $S=\{x_1<\dots<x_n\}$. On many real-world datasets where the keys are clustered, we have $\text{gap}(S)\ll \log\binom{U}{n}$, making difference-encoded dictionaries more favorable than standard dictionaries in practice.

Prior to this work, the best dynamic difference-encoded dictionary is by Blandford and Blelloch [SODA'04], whose construction supports operations in $O(\log n)$ time and uses $O(\text{gap}(S))$ bits of space. In the static case, Gupta, Hon, Shah and Vitter [DCC'06] presented a dictionary that uses
$$
\text{gap}(S)+O(n\log\log U)
$$
bits of space and supports membership queries in $O(\log\log n)$ time.

In this work, we go beyond these bounds and fully settle the optimal time-space tradeoff for difference-encoded dictionaries. For an arbitrary parameter $0<\varepsilon<1/4$, we construct a dynamic dictionary that supports operations in $O(\log\varepsilon^{-1}/\log\log\varepsilon^{-1})$ expected amortized time and uses
$$
    \text{gap}(S)\cdot(1+O(\varepsilon))+O\left(n\log\frac{\text{gap}(S)}{n}\right)
$$
bits of space. We also prove a matching lower bound, showing that our time-space tradeoff is optimal even in the \emph{static} case.
\end{abstract}

\section{Introduction}

An \emph{unordered dictionary} is a data structure that supports insertions, deletions, and membership queries on some dynamically changing set $S \subseteq [U]$ of $n$ keys (where $n$ may change over time).
The classic solution to this problem is to use a hash table, which supports operations in $O(1)$ time while using $O(n \log U)$ bits of space. In this paper, as in most previous work on the subject \cite{fiat1988nonoblivious,bender2023iceberg,bender2022optimal,kuszmaul2022hash,li2023tight,li2024dynamic,DBLP:conf/stoc/HuLYZZ25,kuszmaul2025tight,DBLP:conf/soda/KuszmaulLZ26}, we will be most interested in the parameter regime $\log n = (1 - \Theta(1))\log U$.

Starting in the mid 1950s \cite{peterson1957addressing,brent1973reducing,celis1985robin,munro1986techniques}, there has been a long line of work on building hash tables that are as space-efficient as possible. Early works in this area focused on achieving space as close to 
\begin{equation}n \log U
    \label{eq:plainbound}
\end{equation}
bits as possible, while still supporting efficient operations (see, e.g., \cite{munro1986techniques} for an early survey, or \cite{bender2024tight,farach2024optimal,bender2025optimal} for more recent discussions). However, starting in the late-1990s \cite{pagh1999low,raman2003succinct,demaine2006dictionariis}, researchers began to target a stronger goal, aiming to achieve space close to the \emph{information-theoretic optimum} of
\begin{equation}\log \binom{U}{n} = n \log U - n \log n + O(n)
    \label{eq:succinctbound}
\end{equation}
bits. After a long line of work \cite{pagh1999low,raman2003succinct,demaine2006dictionariis,arbitman2010backyard,bender2023iceberg,bender2022optimal,li2023tight,li2024dynamic}, it is known how to construct hash tables that use space within a $(1 + o(1))$ factor of \eqref{eq:succinctbound}, while supporting $O(1)$-time operations, and it is even known what the optimal $1 + o(1)$ factor is \cite{li2023tight,li2024dynamic}. 

Of course, for real-world data sets, the space bound given by \eqref{eq:succinctbound} may still be far from optimal. In practice, data sets tend to exhibit clustering, where some parts of the key space $[U]$ are used much more frequently than others. For such data sets, one may be able to achieve a much smaller space bound by storing the \emph{differences} between keys, rather than the keys themselves. This raises a natural question: can one hope to build a hash table that uses space close to the \defn{gap entropy}
\begin{equation}
    \gap(S)\ \coloneqq\ \sum_{i=2}^{n} \bigl(\bigl\lceil \log(x_i-x_{i-1}+1)\bigr\rceil+1\bigr),
    \label{eq:deltaencodingbound}
\end{equation}
where $x_1 < x_2 < \ldots < x_{n}$ denotes the keys in $S$ in sorted order?

Intuitively, if one wants to achieve a bound close to the gap entropy, one should use an \emph{ordered dictionary}, such as a binary search tree, a van-Emde-Boas tree \cite{DBLP:journals/mst/BoasKZ77}, a y-fast trie \cite{DBLP:journals/ipl/Willard83}, etc. But these data structures require super-constant time per operation \cite{DBLP:journals/combinatorica/Ajtai88}. Hash tables, on the other hand, are designed to achieve close to $O(1)$-time operations, but do so by treating the data as fundamentally unordered. This barrier has led previous researchers \cite{gupta2007compressed} to conjecture that any data structure which measures its space in terms of $\gap(S)$ cannot hope to support $O(1)$-time operations. 

\paragraph{This Paper: Optimal difference-encoded hash tables. }In this paper, we show that it \emph{is}, in fact, possible to use space close to $\gap(S)$ in an $O(1)$-time data structure.

\begin{restatable}{theorem}{MainDict}
    \label{thm:main}
    Let $U$ be an integer. Let $\Omega(\log\log U/\log U)\le \varepsilon<1/4$ be a parameter\footnote{The constraint on $\varepsilon$ may seem strange. However, it turns out that $\Omega(\log\log U/\log U)\le \varepsilon<1/4$ is the only interesting regime. Indeed, the data structure constructed for $\varepsilon=\log\log U/\log U$ also satisfies the time and space constraints for any $\varepsilon<\log\log U/\log U$: For the time constraint, this is straightforward because $\log\varepsilon^{-1}/\log\log\varepsilon^{-1}$ is decreasing in $\varepsilon$ for $\varepsilon=o(1)$. For the space constraint, when $\varepsilon=O(\log\log U/\log U)$, we have that $|S|\log(\gap(S)/|S|)=\Omega(\varepsilon\cdot \gap(S))$, so the actual value of $\varepsilon$ does not affect the space bound \eqref{eq:actualbound}. To see this, note that $\gap(S)/|S|\le \log U$, which implies that $\log(\gap(S)/|S|)/\log\log(\gap(S)/|S|)=O(\log \log U/\log\log \log U)$ by monotonicity of $\log x/\log\log x$.}. There exists a data structure in the word RAM model with word size $w=\Theta(\log U)$ that maintains a key set $S\subseteq [U]$. In $O(\log\varepsilon^{-1}/\log\log\varepsilon^{-1})$ expected amortized time, the data structure supports the following operations:
    \begin{itemize}
        \item \textbf{Insertion}: Insert a key $x$ to $S$.
        \item \textbf{Deletion}: Delete a key $x$ from $S$.
        \item \textbf{Membership}: Given $x\in [U]$, output whether $x\in S$.
    \end{itemize} The data structure uses
    \begin{align}
        \label{eq:actualbound}
        \gap(S)+O(\varepsilon \cdot\gap(S))+O\bk*{|S|\log \frac{\gap(S)}{|S|}}+O\bk*{U^{\delta}}
    \end{align}
    bits of space, where $\delta>0$ is a small constant parameter of our choice.
\end{restatable}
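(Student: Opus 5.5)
The plan is to reduce the gap-entropy dictionary to a collection of standard succinct hash tables, one per "gap scale", and then pay for navigation between scales with the $O(|S|\log(\gap(S)/|S|))$ slack.

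\textbf{Step 1: split by gap size.} Write $x_1<\dots<x_n$ for the keys and $g_i=\lceil\log(x_i-x_{i-1}+1)\rceil$ for the gap lengths. For each level $k$, let $S_k$ be the set of keys whose preceding gap has length about $k$. A set of $m$ keys with all gaps near $2^k$ can be viewed at resolution $2^k$: its high parts live in a universe of size roughly $m\cdot(\text{span})/2^k$, and the key is the high part plus a $k$-bit low part. So I would partition $[U]$ into blocks and, inside each block, cluster consecutive keys into runs of comparable gap size.

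\textbf{Step 2: store each run succinctly.} A run that covers an interval of length $L$ with $m$ keys is stored in a succinct dictionary whose space is $\log\binom{L}{m}(1+O(\varepsilon))$. The bound $\log\binom{L}{m}\approx m\log(L/m)+O(m)$ is at most the sum of the run's gap terms plus $O(m)$. For this I would use a quotienting hash table in the style of the known optimal succinct dictionaries \cite{li2023tight,li2024dynamic}, which gives $(1+\varepsilon)$ redundancy at $O(\log\varepsilon^{-1}/\log\log\varepsilon^{-1})$ time. That tradeoff has the same form as the target here, so I would reuse its tradeoff machinery. The additive $O(m)$ terms are absorbed into the $\gap(S)$ term, since each gap term is at least $1$. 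The remaining worry is the rounding loss from grouping gaps of different sizes into one run. I would bound that loss by $O(m\log(\gap/m))$ via concavity, choosing runs so that the average gap length inside a run is within constant factors of every member.

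\textbf{Step 3: route queries to runs.} For a query $x$ we must find the run containing $x$. I would use a two-level scheme. The top level is a hash table keyed by $x$ rounded down to a coarse power of two, built from buckets of size $\Theta(\log U)$. This bucket table stores a pointer to the responsible run, costing $O(\log(\gap/n))$ bits amortized per key after appropriately choosing the bucket granularity. A lookup table of size $O(U^{\delta})$ handles in-word manipulation, which explains the additive $U^\delta$ term. Dynamic updates are handled in three ways: rebuild a run when its size doubles or halves, split or merge runs when the gap distribution changes, and deamortize by standard rebuilding. The result is expected amortized bounds.

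\textbf{Main obstacle.} The main difficulty is Step 3 combined with Step 2's rounding. Locating the run of an arbitrary $x$ in $O(\log\varepsilon^{-1}/\log\log\varepsilon^{-1})$ time, rather than the $\log\log$ predecessor time, requires that run boundaries be aligned to a hashable grid. Aligning them wastes space whenever a cluster straddles a grid boundary. My first attempt would align each run to dyadic intervals at its own scale. I would then charge the boundary overhead to the $|S|\log(\gap(S)/|S|)$ term, because each key participates in $O(1)$ boundary decisions of cost $O(\log(\text{avg gap}))$ bits.
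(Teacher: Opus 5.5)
Your Step~2 does not deliver the space bound, and the problem is structural rather than a matter of tuning. A run of $m$ consecutive keys spanning an interval of length $L$, stored as a generic succinct set, costs about $m\log(L/m)$ bits. Concavity of $\log$ gives $\sum_i\log(x_i-x_{i-1})\le m\log(L/m)$, which shows the loss is nonnegative, not that it is small. The loss is driven by the largest gaps. Suppose half the gaps in a run have $g$ bits and half have $2g$ bits, so every member is within a factor $2$ of the mean, as you propose. Then the run costs about $2gm$ bits against gap entropy about $1.5gm$. That is a constant-factor loss, not $O(\varepsilon\,\gap+m\log\gapbar)$. To stay within budget, gap lengths within a run must agree up to an additive $O(\varepsilon g+\log g)$.

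Now take a hierarchically clustered input, such as the one behind the lower bound (\cref{lem:lb_reduction}). Inside each cluster of $2^d=\Theta(\varepsilon^{-1})$ keys, consecutive gap lengths follow the ruler pattern $l,2l,l,3l,l,2l,l,4l,\dots$ with $l=\Theta(\log U/\log\varepsilon^{-1})$. Any run containing two consecutive gaps loses $\Omega(l)$ bits per key. For small $\varepsilon$ this is far above the per-key budget $O(\varepsilon l+\log l)$. Storing a whole cluster as one run instead inflates its space by a factor $\Theta(\log\varepsilon^{-1})$. So on exactly the inputs that matter, your runs have $O(1)$ keys, there are $\Theta(|S|)$ of them, and the succinct tables do no work. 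For the same reason, the $\log\varepsilon^{-1}/\log\log\varepsilon^{-1}$ bound cannot be imported from succinct hash tables. It is the cost of navigating a depth-$\log\varepsilon^{-1}$ hierarchy of clusters, which is what the path-verification lower bound of \cref{sec:lb} captures.

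Hence everything rests on Step~3, and as written it cannot work. With $\Theta(|S|)$ runs, a pointer to a run costs $\Theta(\log|S|)$ bits. A table keyed by $\lfloor x/2^b\rfloor$ has two options, and neither helps. If it stores the set of nonempty buckets, that is the original problem again at a coarser resolution; an ordinary hash table pays $\log\binom{U/2^b}{m_b}$ for $m_b$ buckets, which is not gap-sensitive. If it is a retrieval structure, the per-bucket value is exactly that unaffordable pointer. A single granularity $b$ also cannot suit both dense and sparse regions. Your boundary-charging step conflates the logarithm of a numerical gap with $\log\gapbar(S)$. The former is about $\gapbar(S)$ bits per boundary, which sums to $\Theta(\gap(S))$. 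Finally, inserting a key requires finding its run, which is a predecessor query, and you do not address this.

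The ideas you are missing are the following.
\begin{enumerate}
\item Chunks of $\poly\log U$ keys, routed by a distributor that stores no pointers. It keeps only $v_x$, the length of the longest common prefix of $x$ and its pivot, in retrieval structures. The pivot is then recovered from a prefix-to-pivot dictionary, affordable at $\poly\log U$ bits per pivot. After a random shift, nested sub-universes at thresholds $T_1\approx\varepsilon^{-1}(\log\log U)^2$, $T_{j+1}=\lfloor\sqrt{T_j}\rfloor$ cut the average cost of $v_x$ from $\log\log U$ to $O(\varepsilon\gapbar(S)+\log\gapbar(S))$ bits per key.
\item Buffering insertions, so that a whole batch finds its pivots by radix sort instead of predecessor queries.
\item Inside a chunk, blocks of $\Theta(\log U)$ gap entropy are encoded relative to their endpoints. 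Block representatives are grouped into groups of $\Theta(\varepsilon^{-1})$ keys, located by a fusion tree. Each group is stored as a randomly shifted binary trie cut into $B$ layers, whose components are difference-encoded biased B-trees.
\end{enumerate}
That last structure is where the time bound comes from: it runs in $O(B+\log_B\varepsilon^{-1})$ with $B=\Theta(\log\varepsilon^{-1}/\log\log\varepsilon^{-1})$. The random shift is what makes its trie entropy match $\gap(S)$ in expectation.
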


Note that the $O(|S|\log (\gap(S)/|S|))$ term in \eqref{eq:actualbound} is, in general, information-theoretically necessary, so that the data structure can encode the \emph{number of bits} needed to encode each difference $x_i - x_{i - 1}$.

We also prove that, somewhat surprisingly, the space-time tradeoff above is optimal. We give a cell-probe lower bound showing that any hash table achieving the space bound in \eqref{eq:actualbound} must incur an expected query time of $\Omega(\log \epsilon^{-1}/\log \log \epsilon^{-1})$. This lower bound applies even in the static setting, where the set $S$ of keys is fixed in advance and never changes.  

In the process of proving our results, we also revisit an even more basic question: how to encode \emph{binary search trees} using space close to \eqref{eq:deltaencodingbound}. We show how to take any rotation-based binary search tree and encode it in space
\begin{equation}\gap(S) + O(n \log \log U)
    \label{eq:treebound}
\end{equation}
(not counting the auxiliary information) without changing the time per operation. Note that, as we discuss in \cref{sec:binary-tree}, a solution that uses space $ \approx \log |c - p|$ bits for each parent-child pair $(p, c)$ may be up to a factor of $\Omega(\log n)$ away from matching \eqref{eq:treebound} \cite{blandford2008compact,gupta2007compressed}.

As an immediate corollary, we get dynamic predecessor/successor and rank/select data structures that achieve optimal time bounds while using space given by \eqref{eq:actualbound}.

\begin{restatable}{corollary}{CorMainPredecessor}
    The data structure of \cref{thm:main} can be modified to support insertion/deletion and predecessor/successor queries in
    \begin{align*}
        O\bk*{\frac{\log\varepsilon^{-1}}{\log\log\varepsilon^{-1}}+\log\log U}
    \end{align*}
    time and membership queries in $O(\log\varepsilon^{-1}/\log\log\varepsilon^{-1})$ time. The space usage is the same as \eqref{eq:actualbound}.
\end{restatable}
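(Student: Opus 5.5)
The plan is to layer a small predecessor structure on top of the dictionary of \cref{thm:main} and to charge its space to the $O(|S|\log(\gap(S)/|S|))$ and $O(\varepsilon\cdot\gap(S))$ terms. I will use two facts about that dictionary, which I expect its construction to provide. First, keys are partitioned into contiguous ``blocks'' (buckets) of $\Theta(\varepsilon^{-1})$-ish keys in sorted order, and each block is difference-encoded. Second, locating the block containing a key $x$ costs $O(\log\varepsilon^{-1}/\log\log\varepsilon^{-1})$ time. If the construction hashes keys instead, I will replace that top-level routing with ordered routing, as follows.

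\textbf{Step 1: sampling.} Maintain a sample $R\subseteq S$ consisting of the minimum key of each block, so that $|R|=O(|S|/B)$ with $B=\Theta(\log U)$ or larger. Store $R$ in a dynamic predecessor structure that uses $O(|R|\log U)=O(|S|)$ bits and supports operations in $O(\log\log U)$ time. A y-fast trie with hashing gives expected amortized bounds; to keep its space $O(|R|w)$, I would use dynamic perfect hashing on the levels. This cost is absorbed by the $O(|S|\log(\gap(S)/|S|))$ term, because that term is $\Omega(|S|)$ whenever $\gap(S)\ge 2|S|$, which holds since each gap contributes at least $2$ bits.

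\textbf{Step 2: predecessor and successor queries.} To answer $\mathrm{pred}(x)$, find $r=\mathrm{pred}_R(x)$ in $O(\log\log U)$ time. Then search inside $r$'s block. A block holds $O(B)$ keys encoded in $O(B\cdot\gap/|S|)$ bits, and the differences are stored with their lengths. I will decode it word-parallel using the same lookup-table technique (tables of size $U^{\delta}$) that the dictionary uses for membership. This takes $O(\log\varepsilon^{-1}/\log\log\varepsilon^{-1})$ time if block search reuses the paper's intra-bucket membership routine, generalized to return rank rather than a yes/no answer. Successor is symmetric: if the answer falls outside the current block, take the next sample.

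\textbf{Step 3: updates and membership.} Insertions and deletions update the block, as in \cref{thm:main}. When a block splits or merges, which happens amortized every $\Omega(B)$ updates, the change is reflected in $R$ at a cost of $O(\log\log U)$. Membership bypasses $R$ entirely: the hash-based path of \cref{thm:main} is kept, so membership still runs in $O(\log\varepsilon^{-1}/\log\log\varepsilon^{-1})$ time. Only predecessor, successor and updates pay the extra $O(\log\log U)$.

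\textbf{Main obstacle.} The hard step is reconciling the order needed for predecessor search with the hashing that \cref{thm:main} likely uses to reach $O(1)$-like bounds. If its buckets group keys by hash value rather than by key range, predecessor inside a bucket is meaningless. My first attempt would be to show that the construction's top level is range-based, for example quotienting by high-order bits or a trie over gaps, so that blocks are contiguous. Failing that, I would keep a second, order-based difference-encoded layer that shares the encoded gaps, costing only $O(|S|\log(\gap(S)/|S|))$ extra pointer bits.
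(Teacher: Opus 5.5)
Your overall architecture matches the paper's. You put an $O(\log\log U)$-time predecessor structure on the chunk minima; the paper uses a hashed van Emde Boas tree on the pivots, and your y-fast trie works equally well. You then query the located chunk and possibly its neighbor, and leave membership on its original path. The ``hashing versus order'' obstacle you anticipate does not arise: the chunks of \cref{thm:main} are contiguous key ranges, and the small dictionaries of \cref{lem:polylog-dict} natively answer predecessor/successor in $O(\log\varepsilon^{-1}/\log\log\varepsilon^{-1})$ time.

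The real obstacle is elsewhere, and your Step 3 misses it. In \cref{thm:main}, new keys are not placed into chunks on arrival. They are held in an unordered buffer dictionary of up to $|S|/\poly\log U$ keys and flushed in batches. The buffer exists because the distributor needs the pivot $p_x$ as advice to insert $x$, and it is only correct on keys already in $S$. So your assumed fact, that the chunk of an arbitrary key can be found in $O(\log\varepsilon^{-1}/\log\log\varepsilon^{-1})$ time, fails exactly for keys being inserted. With the buffer present, your query (sample structure plus chunk) never sees the buffered keys and can return a wrong predecessor or successor.

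The missing step, which is the substance of the paper's proof, is that updates may now take $O(\log\log U)$ time. Each insertion can therefore query the pivot predecessor structure to get $p_x$ directly. It then gathers the predecessor/successor advice required by \cref{thm:better-distributor} from that chunk and its neighbors, and inserts immediately, so the buffer is discarded. Indexing the buffer with its own predecessor structure would also work, but some such step is needed.

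Two smaller points:
\begin{itemize}
\item The sample must be taken at chunk granularity ($\poly\log U$ keys), not at $\Theta(\varepsilon^{-1})$-key groups as your opening suggests. The latter costs $\Theta(\varepsilon|S|\log U)$ bits, which is not absorbed by the space bound when $\gap(S)=O(|S|)$.
\item In-chunk search should use the small dictionary's own predecessor operation. Decoding a chunk with key set $S_i$ word-parallel could take $\Theta(\gap(S_i)/\log U)$ time, which is up to $\poly\log U$.
\end{itemize}
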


\begin{restatable}{corollary}{CorMainRank}
    The data structure of \cref{thm:main} can be modified to support insertion/deletion and rank/select queries in
    \begin{align*}
        O\bk*{\frac{\log\varepsilon^{-1}}{\log\log\varepsilon^{-1}}+\frac{\log|S|}{\log\log U}}
    \end{align*}
    time, predecessor/successor queries in
    \begin{align*}
        O\bk*{\frac{\log\varepsilon^{-1}}{\log\log\varepsilon^{-1}}+\log\log U}
    \end{align*}
    time and membership queries in $O(\log\varepsilon^{-1}/\log\log\varepsilon^{-1})$ time. The space usage is the same as \eqref{eq:actualbound}.
    \label{cor:foo}
\end{restatable}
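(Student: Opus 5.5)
We prove the rank/select corollary (\cref{cor:foo}) by augmenting the dictionary of \cref{thm:main} with a small ordered ``skeleton'' over \emph{blocks} of consecutive keys. The dictionary handles membership. All ordered queries reduce to locating the right block and then working inside it.

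First we recall the structure of \cref{thm:main}. We expect it to partition $[U]$, or equivalently the sorted key sequence, into buckets or blocks, each containing $\Theta(\mathrm{polylog}\,U)$ keys (or $\Theta(\varepsilon^{-O(1)})$ keys). Each block is stored difference-encoded, and a hash-based top level maps a key to its block. We then choose a sparse set of \emph{representative} keys, roughly one per block of $B=\Theta(\log^{2} U)$ consecutive keys. Given the block sizes in the construction, we would instead pick $B$ as a polylog large enough that the extra information is $O(n/B\cdot \log U)=o(n)$ bits, which is dominated by the $O(|S|\log(\gap(S)/|S|))$ term.

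Next we build the two skeleton structures over the $n/B$ representatives.
\begin{itemize}
\item For predecessor/successor, we store the representatives in a dynamic predecessor structure with $O(\log\log U)$ time and $O((n/B)\log U)$ bits, for instance a y-fast trie implemented with the dynamic hashing of the main theorem, or a dynamic fusion-style structure.
\item For rank/select, we store the representatives, each annotated with its block's cardinality, in a dynamic fusion-node B-tree of fanout $\log^{\Theta(1)}U$ (\`a la P\u{a}tra\c{s}cu--Thorup or Dietz's dynamic rank). This gives $O(\log|S|/\log\log U)$ time for prefix-count and select-by-count, again using $O((n/B)\log U)$ bits.
\end{itemize}

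A query then proceeds in three steps. It first uses the skeleton to find the block containing the answer. It then decodes that block, which holds $B$ keys in $O(B\log U/w)$ words of difference encoding, and finishes locally. Doing this locally in $O(1)$ or $O(\log\varepsilon^{-1}/\log\log\varepsilon^{-1})$ time requires the in-block layout from the paper, namely the trie or encoded-BST from \cref{sec:binary-tree}, which supports in-block predecessor and rank with word-parallel tricks. Updates proceed as follows. An insert or delete first updates the dictionary, then adjusts the block count in the skeleton along a root-to-leaf path. Blocks are split or merged when their sizes drift outside $[B/2,2B]$, which changes $O(1)$ representatives, and this cost amortizes over $\Omega(B)$ operations.

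The main obstacle is the in-block search. The theorem's dictionary is organized for hashing rather than order, so consecutive keys may not sit together. We must either re-organize the blocks by key order, which the gap-encoding naturally does since gaps are between sorted neighbours, or show that the representative's block within the main structure can be scanned in $O(\log\varepsilon^{-1}/\log\log\varepsilon^{-1})$ time. Our first attempt would be to make representatives coincide with the boundaries of the main structure's own sorted chunks. In that case in-block rank and predecessor become the same local decode that membership already performs, plus a popcount-style count, and the stated time bounds are additive in the skeleton cost and the membership cost.
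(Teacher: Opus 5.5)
Once you take the representatives to be the chunk pivots of \cref{thm:main} (your ``first attempt''), your skeleton is exactly what the paper adds. That is a predecessor structure on the pivots (the paper uses a hashed van Emde Boas tree) and a weighted rank/select structure on the pivots with chunk sizes as weights (an extension of P\u{a}tra\c{s}cu--Thorup), followed by a local query inside the chunk. The obstacle you worry about does not exist. Chunks are intervals of $[U]$, and the chunk dictionaries of \cref{lem:polylog-dict} already answer predecessor/successor queries. What they lack is rank/select. You get that by augmenting them with counts (e.g.\ per-block and per-group key counts, count sums in the B-tree nodes, lookup tables for rank inside a short encoding), not by a popcount, since nothing is stored as a bitmap.

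The genuine gap is the insertion buffer. In \cref{thm:main}, an inserted key is not placed into its chunk right away. It sits in an unordered buffer dictionary holding up to $|S|/\poly\log U$ keys and is flushed in batches. The reason is that the distributor needs the pivot $p_x$ as advice, and a per-key predecessor search on the pivots does not fit the $O(\log\varepsilon^{-1}/\log\log\varepsilon^{-1})$ budget. Your update rule (``first updates the dictionary, then adjusts the block count'') inherits this buffer. As a result, predecessor/successor queries miss buffered keys, and your skeleton counts disagree with the actual chunk contents, which breaks rank and select.

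The paper's fix is to drop the buffer. The new update bounds contain the additive $\log\log U$ (resp.\ $\log|S|/\log\log U$) term, so each insertion can find its chunk directly by a predecessor search on the pivots. It then obtains the predecessor/successor advice required by \cref{thm:better-distributor} from that chunk and its neighbours, and inserts into the distributor and the chunk dictionary immediately.
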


These data structures settle in the affirmative an open question previously posed by Gupta, Hon, Shah, and Vitter \cite{gupta2007compressed} as to whether such data structures exist.

\paragraph{Paper outline. }
In \cref{sec:warm-up}, we present a warm-up version of our data structure (\cref{thm:main-warmup}). This data structure consists of two parts: We partition the key set into chunks of $\poly\log U$ keys each. Then, we use a distributor (\cref{thm:distributor}) to map keys to their chunks, and design dictionaries that are efficient when the number of keys is only $\poly\log U$ (\cref{lem:polylog-dict-warmup}). These constructions are suboptimal: The distributor uses $O(|S|\log\log U)$ bits of space, and the small dictionaries take $O(\varepsilon^{-1})$ time to perform operations when they are allowed to use
\begin{align}
    \label{equ:warm-up-bound}
    \gap(S)+O(\varepsilon\cdot \gap(S))+O\bk*{|S|\log\frac{\gap(S)}{|S|}}
\end{align}bits of space.

In \cref{sec:binary-tree}, we improve our construction of small dictionaries by showing a difference-encoding scheme for binary search trees (\cref{lem:binary-tree}), reducing the query time to $O(\log\varepsilon^{-1})$. This is further improved in \cref{sec:trie-plus-btree}, where we use difference-encoded B-trees to maintain a trie structure on the keys (\cref{lem:trie-plus-btree}), giving us the optimal time-space tradeoff for small key sets (i.e., using $O(\log\varepsilon^{-1}/\log\log\varepsilon^{-1})$ query time and space equal to \eqref{equ:warm-up-bound}). In \cref{sec:better-distributor}, we optimize the distributor (\cref{thm:better-distributor}), improving its space usage from $O(|S|\log\log U)$ to
\begin{align*}
    O(\varepsilon\cdot \gap(S))+O\bk*{|S|\log\frac{\gap(S)}{|S|}}
\end{align*} bits (while slightly increasing the time cost). Finally, in \cref{sec:final}, we combine the improved data structures \cref{lem:trie-plus-btree} and \cref{thm:better-distributor}, and prove the main result \cref{thm:main} using the same framework as the warm-up.

In \cref{sec:lb}, we prove a matching lower bound, showing that our dynamic data structure is tight even in the static case.

\section{Related Works}

For dictionaries whose space is analyzed in terms of the gaps of the stored set, Blandford and Blelloch~\cite{blandford2004ordered,blandford2008compact} gave a general approach for replacing fixed-length representations by compact variable-length encodings. They partition the key set $S$ into contiguous subsets $\{S_i\}$, such that each subset has total gap entropy $\Theta(\log U)$ (so the number of subsets is $O(\gap(S)/\log U)$). Then, they maintain the set $S$ using $\Theta(\log U)$ bits of space per subset $S_i$. As a result, their framework yields dynamic dictionaries using
\[
  O(\gap(S))+O(U^{\alpha}\log U)
\]
bits of space ($0<\alpha<1$ is an arbitrary constant), while supporting rank/select queries in $O(\log n)$ time. Grossi and Sadakane~\cite{grossi2006squeezing} further reduced the space for static dictionaries to
\[
  \gap(S)+O\bk*{\frac{U\log\log U}{\log U}}
\]
bits (answering rank/select queries in constant time), by noting that when $n$ is very close to $U$, we can maintain each subset $S_i$ using $\gap(S_i)+o(\log U)$ bits of space. Gupta, Hon, Shah, and Vitter~\cite{gupta2007compressed} obtained a static dictionary on $n$ keys from $[U]$ using
\begin{align}
  \label{equ:gupta-space}
  \gap(S)+O\bk*{\frac{n\log(U/n)}{\log n}}+O\bk*{n\log\log(U/n)}
\end{align}
bits of space, supporting rank queries in \begin{align*}
    AT(U,n)=\min\BK*{\sqrt{\frac{\log n}{\log\log n}},\log\log n\cdot \frac{\log\log U}{\log\log\log U},\log\log n+\frac{\log n}{\log\log U}}
\end{align*}time and select queries in $O(\log\log n)$ time. In the standard parameter regime of $\log U = (1 + \Theta(1)) \log n$ (and for $\epsilon \ge 1/\log n$), these time bounds are the same as those in Corollary \ref{cor:foo}. 

Gupta, Hon, Shah, and Vitter~\cite{gupta2007compressed} also pose several open questions that we resolve. They conjecture that any solution using $O(\gap(S))$ space must incur $\omega(1)$-time membership queries (this turns out not to be true), and they leave open whether a dynamic data structure can hope to achieve good rank/select/update time bounds (a question that we answer in the affirmative with Corollary \ref{cor:foo}).

An interesting feature of our work is how, to avoid time bounds depending on $n$, we are able to combine techniques from a variety of settings in which one can interact with ordered data very efficiently, without relying on any settings in which one cannot. Indeed, our warmup result in Section \ref{sec:warm-up} is achieved in large part by strategically combining together the following three facts within a larger data structure: (1) that ordered dictionaries of $\poly(w)$ $w$-bit keys can support $O(1)$ time operations (via fusion nodes \cite{fredman1990transdichotomous,fredman1993surpassing,patrascu2014dynamic}); (2) that integer sorting of $n$ $\Theta(\log n)$-bit keys is linear time (via radix sort \cite{seward1954information}); and (3) that \emph{static} ordered data structures which answer rank queries on a specific set of $n$ keys can be implemented very space efficiently with $O(1)$-time queries (via monotone minimal perfect hashing \cite{belazzougui2009monotone}). We note that the last of these results is used only implicitly, in that we apply the high-level technique \cite{belazzougui2009monotone} (mapping each key $k$ to a prefix length $\ell$ such that the closest pivot $p$ to $k$ matches $k$ on those $\ell$ bits), but in the context of a dynamic data structure.

Finally, regarding the lower bound, our proof adapts the \emph{round elimination} technique from \cite{patrascu2010cellprobe}, which was originally used to prove space lower bounds for partial sum queries. This technique applies when there exists some correlation between answers to different queries. In the case of partial sum queries, the sum of the first $i$ items and the sum of the first $i-1$ items are highly correlated. Later works generalized the round elimination technique to prove lower bounds for other problems such as range minimum queries \cite{liu2020lower,liu2021nearly}, which exploited the correlation between queries \emph{from a subset}, and queries on interval graphs \cite{chakraborty2023interval}, which exploited the correlation between \emph{pairs of queries}. In comparison, our lower bound exploits the correlation between \emph{positive queries}: When $\gap(S)$ is small, knowing that one key $x$ belongs to $S$ gives us some information about other keys in $S$.

\section{Preliminaries}

\label{sec:prelim}

We define $\text{bin}(x)$ as the binary representation of $x$, which is a $\lceil \log U\rceil$-bit string where the most significant bits come first. Here, $U$ is the universe that $x$ lives in, which will always be clear from context. In particular, we always have $0\le x\le U-1$.

We use $[U]$ to denote the set $\{0,\dots,U-1\}$.
    
\subsection{The Gap Entropy}

The standard data-aware space measure used in this paper is called the \defn{gap entropy} (also called the gap measure in \cite{gupta2007compressed}), which measures the cost of difference-encoding a set of keys. Let $S=\{x_1<x_2<\dots<x_{|S|}\}$ be a non-empty set of integers. The gap entropy of $S$ is defined as
\begin{align*}
    \gap(S)\ \coloneqq\ \sum_{i=2}^{|S|} \bigl(\bigl\lceil \log(x_i-x_{i-1}+1)\bigr\rceil+1\bigr).
\end{align*}

For notational simplicity, we sometimes write $\gap(x_1,\dots,x_k)$ instead of $\gap(\{x_1,\dots,x_k\})$.

Our definition of the gap entropy is different from those in prior works, which gives it additional properties that make our lives much easier later:
\begin{itemize}
    \item Our gap entropy is additive, in the sense that for integers $x_1<\dots<x_n$ and some index $1\le m\le n$,
    \begin{align*}
        \gap(x_1,\dots,x_n)=\gap(x_1,\dots,x_m)+\gap(x_m,\dots,x_n).
    \end{align*}
    \item For any integers $x<x'$, $\gap(x,x')$ is at least $2$. This guarantees that $\log\gap(x,x')$ is always nonzero.
\end{itemize}

Note that the gap entropy is not strictly speaking an achievable space bound. This is because, in addition to encoding the differences $\{x_i-x_{i-1}\}$, we have to also encode the lengths of these differences $\{\log(x_i-x_{i-1})\}$ and the first key $x_1$, whose costs are not counted in the gap entropy. After accounting for these costs, an actually achievable space bound for encoding a non-empty subset of $[U]$ (denoted as $S=\{x_1<x_2<\dots<x_{|S|}\}\subset[U]$) is 
\begin{align*}
    &\gap(S)+\lceil\log U\rceil+O\bk*{\sum_{i=2}^{|S|}\log\gap(x_{i-1},x_i)} \\
    \le{}&\gap(S)+\lceil\log U\rceil+O\bk*{|S|\log\gapbar(S)}.
\end{align*}
The last step is due to Jensen's inequality and additivity of the gap entropy, and we define $\gapbar(S)\coloneqq \gap(S)/(|S|-1)$ for notational simplicity (with the convention $\gapbar(S)=1$ when $|S|=1$).

\subsection{Dynamic Retrieval Data Structures}

Retrieval data structures are similar to key-value dictionaries, except that they are only responsible for returning the correct value for actually present keys, and do not have to store the key set. This allows them to be much more space-efficient than dictionaries. It has been known for a long time that dynamic retrieval data structures can be constructed using $n\log V+O(n\log\log U)$ bits of space and constant query time \cite{demaine2006dictionariis}. In this paper, we use the following construction of \emph{resizable} dynamic retrieval data structures by Kuszmaul et al.~\cite{kuszmaul2026resizable}, which allows the set size $n$ to change over time.

\begin{theorem}[\cite{kuszmaul2026resizable}]
    \label{thm:retrieval}
    Let $U$ and $V$ be parameters with $\log V\le O(\log U)$. In the word RAM model with word size $w=\Omega(\log U+\log V)$, there is a fully dynamic retrieval data structure maintaining a set $S\subseteq [U]$ for which $\poly\log U\le |S|\le U/2$, where each key is associated with a value in $[V]$. In $O(1)$ time, the data structure supports insertion, deletion and value queries. In particular, when answering a value query on key $x\in [U]$, the data structure returns the value associated with $x$ if $x\in S$, and an arbitrary value otherwise. At any time, the data structure uses
    \begin{align*}
        |S|\cdot \log V+O(|S|\log\log(U/|S|))
    \end{align*}
    bits of space. The data structure assumes access to lookup tables and hash functions of total size at most $U^{\delta}$ bits, where $\delta>0$ is an arbitrary parameter. At any given moment, the space and time bounds hold with high probability in $|S|$.\footnote{Upon close inspection, when given an arbitrarily long sequence of operations, the data structure of \cref{thm:retrieval} is \emph{always} guaranteed to return the correct answer for each query. Furthermore, the worst-case time and space bounds are $O(|S|)$ and $O(|S|\log U)$, respectively.}
\end{theorem}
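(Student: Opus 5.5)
Since this theorem is imported from \cite{kuszmaul2026resizable}, the paper itself does not need a proof here. If I had to reconstruct one, I would follow the standard \emph{bucketing plus local fingerprints} approach of \cite{demaine2006dictionariis} and then add resizability on top.

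\textbf{Bucketing.} Fix a hash function $h$ drawn from a family that is sufficiently independent. It can be stored within the $U^{\delta}$-bit budget, for instance a tabulation-style or $\poly\log U$-wise independent family. Let $k$ be the current level, with $2^k\approx |S|/B$ for a bucket-size parameter $B=\Theta(\log^c U)$. A key $x$ goes to bucket $h(x)\bmod 2^k$. Chernoff bounds, valid with high probability in $|S|$, show every bucket has load $(1\pm o(1))B$.

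\textbf{Local fingerprints.} Inside a bucket I do not store keys. Each present key gets a fingerprint $f(x)$ consisting of $O(\log B)=O(\log\log U)$ further hash bits. The bucket stores the sorted list of fingerprints together with the corresponding values in $[V]$. The fingerprints are kept in a succinct form, such as a quotiented bitvector encoding of the sorted fingerprints. That costs $O(1)$ bits per key above $\log\binom{2^{|f|}}{B}$, which is $O(\log\log U)$ bits per key.

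\textbf{Queries and collisions.} A value query hashes $x$, finds the rank of $f(x)$ among the bucket's fingerprints using word-level parallelism or lookup tables (a bucket fits in $O(\poly\log U)$ bits, so the tables have size $U^{\delta}$), and returns the value at that rank. If two present keys share a fingerprint, I would move one of them to a small backyard dictionary that stores full keys. The expected backyard size is $|S|/\poly\log U$, so its space is negligible.

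\textbf{Sharpening the overhead to $O(\log\log(U/|S|))$.} For this I would take the fingerprint to be the low-order part of a random permutation of $x$ after removing the $k$ bucket bits (quotienting). Within a bucket it then suffices to distinguish keys whose fingerprints are drawn from a universe of size roughly $U/2^k$. The overhead becomes $\log$ of the \emph{ratio} between fingerprint universe and bucket load, which gives the $O(\log\log(U/|S|))$ term. I would also need to check that the sorted-fingerprint encoding costs only $O(1)$ extra bits per key. The values are stored in fixed $\lceil\log V\rceil$-bit slots, packed by rank and rounded into $\log V$ bits per key by grouping several values into one word.

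\textbf{Resizability (the main obstacle).} When $|S|$ crosses $2^{k+1}B$ or $2^{k-1}B$, the level must change without an $\Omega(|S|)$ rebuild. Because the bucket index is a prefix of the hash bits, doubling $k$ splits each bucket into two children, and halving merges siblings. I would perform the splits and merges incrementally, a constant number of buckets per update, maintaining a pointer that separates already-migrated buckets from the rest so a query knows which level to use. The delicate part is making the split free of information loss. The next hash bit must be recoverable from the stored quotiented fingerprint, which holds if the fingerprint is the top bits of the permuted remainder. The space accounting must also stay at $|S|\log V+O(|S|\log\log(U/|S|))$ during the transition, which requires the fingerprint length to change consistently by one bit per split. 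Memory itself would be managed by a standard allocator for $\poly\log$-size blocks with $o(1)$ overhead per bit.
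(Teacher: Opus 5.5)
You are right that the paper gives no proof of this statement: it is imported from \cite{kuszmaul2026resizable}, and the citation is the entire argument. The reconstruction you append, however, would not establish the theorem, and it breaks at specific steps.

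\emph{Time.} A bucket of $B=\Theta(\log^c U)$ keys with $\Theta(\log\log U)$-bit fingerprints occupies $\Theta(\log^c U\log\log U)$ bits, which is $\omega(1)$ words. Word-level parallelism therefore needs $\omega(1)$ steps. A lookup table indexed by such a string would have $2^{\poly\log U}$ entries, not $U^{\delta}$; a table of size $U^{\delta}$ only accepts $O(\delta\log U)$-bit inputs. Storing values by rank makes this worse, since an insertion shifts up to $B$ values of $\log V$ bits each. Shrinking buckets to word size destroys the $(1\pm o(1))B$ concentration. You therefore need the usual two-level design (polylog-size buckets for concentration, subdivided into $O(w)$-bit pieces with overflow handling), together with stable per-key slots instead of rank-indexed values.

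\emph{Space.} In your sharpening step the remainder universe has size $U/2^k\approx UB/|S|$, so its ratio to the load $B$ is $U/|S|$. Storing quotiented remainders then costs $\log(U/|S|)$ bits per key. That is a dictionary, not $O(\log\log(U/|S|))$ overhead. Your base parametrization (backyard of size $|S|/\poly\log U$, so $2^{|f|}/B=\poly\log U$) gives $\Theta(\log\log U)$ bits per key. For a fixed $|S|$, the space can instead be repaired by tuning the per-key collision rate to about $1/\log(U/|S|)$ and making the backyard a compact dictionary. But then the right fingerprint length depends on the current $|S|$.

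\emph{Resizing.} You correctly flag this as the main obstacle, but the sketch leaves it unresolved. Each split moves one fingerprint bit into the bucket index, and since keys are not stored, nothing can replenish that bit. Two problems follow:
\begin{itemize}
\item The shortened fingerprints of old keys raise the collision probability of later insertions above the designed rate.
\item A key inserted when $|S|$ was $n_0$ has no fingerprint bits left once $|S|$ exceeds about $n_0\cdot 2^{|f|}=n_0\poly\log U$. At that point a split cannot tell which child the key belongs to. The key also cannot be evicted to the backyard, because the key itself is unknown. For the same reason, ``move one of them to the backyard'' is only possible for the key currently being inserted.
\end{itemize}
Only your full-remainder variant splits losslessly, and that is exactly the variant with the wrong space bound. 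Reconciling lossless resizing with $O(\log\log(U/|S|))$ bits per key is the actual content of \cite{kuszmaul2026resizable}. Present the sketch as motivation, not as a proof.
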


\begin{remark}    
    \label{rem:retrieval-variable-length}
    The same paper~\cite{kuszmaul2026resizable} also gives an extension in which the value associated with each key $x\in S$ may have an arbitrary length $v_x\in[O(\log U)]$ specified at insertion time, using $\sum_{x\in S}v_x+O(|S|\log\log(U/|S|))$ bits of space and the same $O(1)$-time operations. This extension is later used to implement the variable-length word RAM model.
\end{remark}

\subsection{The Variable-Length Word RAM Model}

While our main result is stated in the word RAM model, we will mostly work in a slightly different model that we call the \defn{variable-length word RAM model}, in which each word stores variable-length information. Note that this model does not add new abilities to the word RAM model; it is only a level of indirection that sweeps the use of retrieval data structures under the rug.

\begin{definition}[Variable-Length Word RAM Model]
    \label{def:vl_word_RAM}
    In the variable-length word RAM model with \defn{word size} $w$, the memory is a collection of $2^w$ words. At any point in time, each word is either \defn{idle} or \defn{active}. When a word is active, it can store as its \defn{content} a bit string of length at most $w$. In unit time, an algorithm in the variable-length word RAM model can perform the following operations:
    \begin{enumerate}
        \item Activate a word and set its content to be empty.
        \item Deactivate a word.
        \item Write to/Read the content of an active word. When we read an active word, the memory returns the length of its content as well as the content itself.
        \item Perform arithmetic or bitwise operations on two $w$-bit integers.
    \end{enumerate}
    Note that the memory is not responsible for remembering which words are active. If the queried word of some type 1 (resp.~2, 3) operation is currently active (resp.~idle), the behaviour of the memory is undefined. We say that a data structure in the variable-length word RAM model is \defn{legitimate} if it never causes undefined behavior.

    The \defn{space usage} of the memory in the variable-length word RAM model with word size $w$ is defined to be
    \begin{align*}
        \text{(total length of word contents)}+C_{retr}\cdot \log w\cdot \text{(\# of active words)},
    \end{align*}
    where $C_{retr}$ is a sufficiently large constant. We say that a data structure has \defn{address limit} $M$, if it only ever accesses the first $M$ words, regardless of its input.

\end{definition}

The variable-length word RAM model possesses many nice properties: It is easy to implement the variable-length word RAM model in the word RAM model (using a dynamic retrieval data structure), and vice versa. It is also easy to combine multiple data structures in the variable-length word RAM model (just assign disjoint sets of addresses to each data structure).

Another implicit benefit of this model is that we no longer need to store pointers to nodes when implementing binary search trees (and other similar data structures). This is because the address space is sufficiently large, allowing us to index each node by its value.

We defer the formal discussions regarding these properties to \cref{app:vl_word_RAM}.

\section{Warm-Up: A Dictionary Using \texorpdfstring{$O(\varepsilon^{-1})$}{1/eps} Time}

\label{sec:warm-up}

In this section, we prove a weakened version of our main result, stated below as a data structure in the variable-length word RAM model.

\begin{theorem}
    \label{thm:main-warmup}
    Let $U$ be an integer. Let $\Omega(\log\log U/\log U)\le \varepsilon<1/4$ be a parameter of our choice. There exists a legitimate data structure in the variable-length word RAM model with word size $w=100\log U$ that maintains a key set $S\subseteq [U]$. In $O(\varepsilon^{-1})$ expected amortized time, the data structure supports the following operations:
    \begin{itemize}
        \item \textbf{Insertion}: Insert a key $x$ to $S$.
        \item \textbf{Deletion}: Delete a key $x$ from $S$.
        \item \textbf{Membership}: Given $x\in [U]$, output whether $x\in S$.
    \end{itemize} The data structure uses
    \begin{align*}
        \gap(S)+O(\varepsilon \cdot\gap(S))+O(|S|\log \log U)+O\bk*{U^\delta}
    \end{align*}
    bits of space, where $\delta>0$ is an arbitrary constant parameter.
\end{theorem}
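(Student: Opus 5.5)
The plan follows the outline already given for \cref{sec:warm-up}. We partition the sorted key set into contiguous \emph{chunks} of $\Theta(\log^c U)$ keys each, for a constant $c$. A \emph{distributor} (\cref{thm:distributor}) routes each query to its chunk, and each chunk is stored in a small dictionary (\cref{lem:polylog-dict-warmup}) that is efficient for $\poly\log U$ keys.

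\textbf{Distributor.} Each chunk has a pivot, namely its minimum key. To map $x$ to its chunk we use the monotone-minimal-perfect-hashing idea of \cite{belazzougui2009monotone}, applied dynamically:
\begin{itemize}
\item Store the pivots in a fusion-node / B-tree structure. There are only $O(|S|/\log^c U)$ pivots, so this costs $o(|S|)$ words' worth of $O(\log U)$ bits and is negligible.
\item To avoid $\omega(1)$ predecessor search on the pivots, store in a dynamic retrieval structure (\cref{thm:retrieval}) for every key $x\in S$ a value of $O(\log\log U)$ bits. This value is the length $\ell$ of the longest common prefix of $x$ with its chunk's pivot, or an equivalent short hint that identifies the chunk.
\item The chunk is then recovered in $O(1)$ time by hashing the prefix of $x$ of length $\ell$ into a dictionary of pivot prefixes.
\end{itemize}
This accounts for the $O(|S|\log\log U)$ term. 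When $x\notin S$ the retrieval structure returns garbage. That is harmless, because we check membership inside the returned chunk: a garbage route can only make us answer ``no'', which is correct.

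\textbf{Small dictionary.} Inside a chunk with keys $y_1<\dots<y_m$, write the difference encodings consecutively. Each gap uses $\lceil\log(y_i-y_{i-1}+1)\rceil+1$ bits, and the length fields cost $O(\log\gapbar)$ per key, which is at most $O(\log\log U)$ per key. Group consecutive keys into blocks whose total gap entropy is about $\varepsilon^{-1}\log U$, i.e.\ about $\varepsilon^{-1}$ words. Each block stores its first key in full, at a cost of $O(\log U)$ bits per block. The number of blocks is at most $\varepsilon\cdot\gap(S)/\log U + |S|/\log^c U$, so these headers cost $O(\varepsilon\gap(S))+o(|S|)$ bits.

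Searching a chunk proceeds in two steps:
\begin{itemize}
\item Locate the block by a fusion-tree predecessor search over block heads. There are $\poly\log U$ heads, so this takes $O(1)$ time.
\item Scan the block sequentially. Decoding a word at a time with lookup tables of size $U^{\delta}$ gives $O(\varepsilon^{-1})$ time.
\end{itemize}
Insertions and deletions rewrite one block in $O(\varepsilon^{-1})$ time, splitting or merging blocks when their size leaves $[\tfrac12,2]\cdot\varepsilon^{-1}\log U$. The words of a block are variable-length words in the model of \cref{def:vl_word_RAM}, and each active word adds an $O(\log w)$ overhead. There are $O(\gap/\log U+|S|/\text{polylog})$ active words, so this overhead is also absorbed.

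\textbf{Maintaining chunks.} Chunks are split when they exceed $2\log^c U$ keys and merged when they fall below $\tfrac12\log^c U$. A split or merge forces us to rewrite the retrieval values of $\poly\log U$ keys and to rebuild the chunk. Using linear-time radix sort on the chunk's keys, this costs $O(\poly\log U)$, which amortizes to $O(1)$ per operation.

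\textbf{Main obstacle.} The hardest step is making the distributor's hint genuinely short and stable. The prefix length $\ell$ must fit in $O(\log\log U)$ bits, so that the overhead is only $O(|S|\log\log U)$. We also need that inserting a key next to a pivot does not invalidate many other keys' hints. My first attempt would be to define the hint relative to the fixed pivot trie, with pivots changing only at rebuilds. Then each insertion writes one value, deletion erases one, and pivot changes happen only in amortized rebuilds.

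Finally, we sum the components: $\gap(S)$ from the gap codes, $O(\varepsilon\gap(S))$ from block headers, $O(|S|\log\log U)$ from length fields and retrieval values, and $O(U^{\delta})$ from tables and hash functions. The time bound is $O(\varepsilon^{-1})$ expected amortized, where the expectation comes from the retrieval structure.
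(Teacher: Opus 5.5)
Your proposal has a genuine gap in the insertion path. The retrieval structure stores the hint $\ell$ only for keys already in $S$. So when a new key $x$ is inserted, you must first find its pivot $p_x$, which is a predecessor query for $x$ among the pivots. You need it both to compute $\ell$ and to know which chunk dictionary to insert into, and your plan never says how it is obtained. The only candidate in your plan is the fusion tree over the pivots, which takes $\Theta(\log_w|P|)=\Theta(\log|S|/\log\log U)$ time. That can be $\Theta(\log U/\log\log U)$, which exceeds $O(\varepsilon^{-1})$ whenever $\varepsilon\gg\log\log U/\log U$; for constant $\varepsilon$ the target is $O(1)$. By the predecessor lower bound the paper cites \cite{beame1999predecessor}, no dynamic predecessor structure answers individual queries in $O(1)$ time once $|P|=U^{\Omega(1)}$. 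You correctly note that predecessor search on the pivots is $\omega(1)$, but the retrieval trick only avoids it for keys that are already stored.

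The missing idea, which is what the paper uses, is to batch insertions. New keys go into a small buffer hash table, which membership queries and deletions consult first. Once the buffer holds $m=\max(U^{\delta/2},|S|/\log^2U)$ keys, you radix sort them in $O(m)$ time, which is possible since $m=U^{\Omega(1)}$. Merging them with the sorted list of pivots then gives all $m$ pivots in $O(m+|P|)=O(m)$ time, i.e., $O(1)$ amortized per insertion. Only then do you compute each $v_x$ and insert into the distributor and the chunk dictionaries. Deletions need no such advice, since the distributor already returns $p_x$ for $x\in S$.

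The point you single out as the main obstacle, stability of hints under pivot changes, is comparatively easy. A pivot is inserted or deleted only after the affected keys have been moved out of the distributor, so no remaining key's pivot, $v_x$, or prefix-dictionary answer changes. For $x\notin S$, you should also check that the returned value is an actual pivot before touching that chunk's words, since reading idle words is undefined in the variable-length model.

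Your small-dictionary design is a different but workable route to the warm-up bound. It uses blocks of about $\varepsilon^{-1}\log U$ bits of gap entropy with full-key heads, a fusion tree over the heads, and table-driven scanning. The paper instead layers \cref{lem:difference-encode-array}, groups of $\Theta(\varepsilon^{-1})$ keys, and blocks of $\Theta(\log U)$ gap entropy.
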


Note that, in comparison to Theorem \ref{thm:main}, the guarantees of Theorem \ref{thm:main-warmup} are weaker in two ways: (1) it offers $O(\epsilon^{-1})$-time operations, instead of $O(\log \epsilon^{-1}/\log\log \epsilon^{-1})$; and (2) it includes an additive $O(|S| \log \log U)$ term in the space bound, instead of the $O(|S| \log \gapbar(S))$ term from Theorem \ref{thm:main}. As we will see in later sections, each of these issues will require significant additional technical ideas to resolve. Nonetheless, Theorem \ref{thm:main-warmup} serves as a natural warmup for the paper, allowing us to present several of the basic building blocks that will be used throughout the paper in their simplest forms.

\subsection{Reducing to Chunks of Size \texorpdfstring{$\poly\log U$}{poly log U}}
\label{sec:outer-reduction}

To present the proof of \cref{thm:main-warmup}, we begin by assuming black-box access to the following lemma, which handles the case where the set $S$ has size $|S| \le \poly\log U$. (We will then prove Lemma \ref{lem:polylog-dict-warmup} in the next subsection.) This allows us to focus on the basic task of reducing from an arbitrary set $S$ to the case where $|S|$ is small.

\begin{lemma}
    \label{lem:polylog-dict-warmup}
    Let $U,\varepsilon, S$ be as defined in \cref{thm:main-warmup}. If it is guaranteed that $\log U\le |S|=\poly\log U$, then there exists a legitimate data structure in the variable-length word RAM model with word size $w=100\log U$ that maintains $S$. In $O(\varepsilon^{-1})$ expected amortized time, the data structure supports the following operations:
    \begin{itemize}
        \item \textbf{Insertion}: Insert a key $x$ to $S$.
        \item \textbf{Deletion}: Delete a key $x$ from $S$.
        \item \textbf{Predecessor/Successor}: Given $x\in [U]$, output the predecessor or successor of $x$ in $S$ (or $\bot$ if it does not exist).
    \end{itemize}
    The data structure also supports an operation called $\textbf{deallocate}$, that deactivates every active word in $O(|S|)$ time. The data structure has address limit $U^{90}$. It uses
    \begin{align*}
        \gap(S)+O(\varepsilon \cdot\gap(S))+O\bk*{|S|\log \gapbar(S)}
    \end{align*}
    bits of space. The data structure assumes access to lookup tables and hash functions of total size at most $U^{\delta}$ bits, where $\delta>0$ is an arbitrary parameter.
\end{lemma}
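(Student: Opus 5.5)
We plan to keep $S$ (with $|S|=\poly\log U$) as a sorted sequence cut into \emph{blocks} of consecutive keys, each stored difference-encoded in active words of the variable-length word RAM model, plus a small index that locates the block for a query. Fix a target block size of $B=\Theta(\varepsilon^{-1})$ keys. Each block $\{y_1<\dots<y_k\}$ with $B\le k\le 2B$ stores its head $y_1$ explicitly ($O(\log U)$ bits). It also stores the $k-1$ gaps $y_j-y_{j-1}$ as self-delimiting codes: each gap costs $\lceil\log(y_j-y_{j-1}+1)\rceil+1$ bits, plus an Elias-gamma-style code of its length, which costs $O(\log\gap(y_{j-1},y_j))$ bits. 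The encoding is spread over consecutive words of content length at most $w$. A block's encoding therefore occupies $O(1+\gap(\text{block})/w)$ words.

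Over all blocks, the space is as follows. The gaps sum to at most $\gap(S)$, since additivity of $\gap$ lets the inter-block gaps be charged as well. The length codes sum to $O(|S|\log\gapbar(S))$ by Jensen's inequality, exactly as in the preliminaries. The heads cost $O((|S|/B)\log U)=O(\varepsilon|S|\log U)$, and this is $O(\varepsilon\gap(S))$ provided the gaps are large. They need not be, so instead we choose blocks by \emph{gap mass} rather than by count. We cut a new block whenever the accumulated gap entropy exceeds $\Theta(\varepsilon^{-1}\log U)$, so there are $O(\varepsilon\gap(S)/\log U+1)$ blocks and the head cost becomes $O(\varepsilon\gap(S)+\log U)$. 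We absorb the additive $\log U\le|S|$ into $O(|S|\log\gapbar(S))$ using $\gapbar\ge2$. The per-word overhead $C_{retr}\log w$ times the number of words is $O(\log\log U)$ per word. Each block uses $O(\varepsilon^{-1})$ words, so this overhead totals $O(\varepsilon\gap(S)\log\log U/\log U)$, which is within $O(\varepsilon\gap(S))$.

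For the index, we insert the block heads into a fusion-tree/dynamic predecessor structure over $\poly\log U$ keys of $w$ bits, which supports predecessor and updates in $O(1)$ time \cite{patrascu2014dynamic}. Its space is $O(\log U)$ bits per head, the same order as the heads themselves, and we already paid that. We address its nodes and each block's words by hashing the head into the address range $[U^{90}]$, or simply by using the head value times a large constant, as the model allows. This keeps us within the address limit and needs no stored pointers.

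A predecessor or successor query finds the block through the index and then decodes it sequentially. The block has at most $O(\varepsilon^{-1}\log U/w)=O(\varepsilon^{-1})$ words of encoding, and a lookup table of size $U^{\delta}$ decodes $\delta\log U$ bits of gap codes per step. Since each decoded prefix sum is a full key, decoding word by word costs $O(\varepsilon^{-1})$. Insertions and deletions re-encode the affected block in $O(\varepsilon^{-1})$ time. When a block's mass leaves $[\tfrac12,2]$ times the target, we split it or merge it with a neighbour and update the index. This gives $O(\varepsilon^{-1})$ amortized time, and merges and splits are charged to the mass change. Deallocation walks the index and the blocks in $O(|S|)$ time.

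The main obstacle is the decoding cost. A single gap can span $\Theta(\log U)$ bits, while many gaps can be tiny, so one block can contain up to $O(\varepsilon^{-1}\log U)$ keys. A naive key-by-key scan would then cost $\varepsilon^{-1}\log U$. We plan to handle this with table lookup on $\Theta(\delta\log U)$-bit chunks, which returns the number of complete codes in the chunk, their sum, and whether the target has been passed. Each word is then processed in $O(1)$ steps. Codes that straddle word boundaries need careful bookkeeping in this scheme.
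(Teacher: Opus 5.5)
Your proposal is correct, but it takes a different route from the paper's proof.

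\textbf{How the paper does it.} The paper works in three layers:
\begin{enumerate}
\item a flat difference encoding with $O(|S|)$-time operations (\cref{lem:difference-encode-array});
\item groups of $\Theta(\varepsilon^{-1})$ keys indexed by a fusion tree (\cref{lem:array-intermediate});
\item fine-grained blocks of only $\Theta(\log U)$ gap entropy, whose heads are fed into the group structure.
\end{enumerate}
There are $\Theta(\gap(S)/\log U)$ such heads, too many to store verbatim at $\log U$ bits each. So the paper difference-encodes the heads and encodes each block with \cref{lem:diff-encode-larger-interval}. That encoding omits the block's largest gap, so the head-to-head gaps paid inside the group structure telescope away. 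All in-block work is then $O(1)$ table lookups on $O(\log U)$-bit encodings (\cref{lem:query-encodings}).

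\textbf{How yours differs.} You make the blocks coarse, with $\Theta(\varepsilon^{-1}\log U)$ gap mass each. Then only $O(\varepsilon\gap(S)/\log U+1)$ heads exist, and storing them explicitly (and again in the fusion tree) already fits in $O(\varepsilon\gap(S))$. This removes the need for the telescoping trick and for the inner $O(|S|)$-time layer. The price is multi-word, table-driven scanning of a block, including codes that straddle chunk boundaries.

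\textbf{What each approach buys.} The paper's decomposition buys modularity. Its only $\Theta(\varepsilon^{-1})$-time component is the flat encoding of at most $\varepsilon^{-1}$ heads in the special regime $\gap=\Theta(|S|\log U)$. That component is later swapped for \cref{lem:binary-tree} and \cref{lem:trie-plus-btree} to reach $O(\log\varepsilon^{-1}/\log\log\varepsilon^{-1})$ time in \cref{lem:polylog-dict}. Your linear scan over an $\varepsilon^{-1}$-word block is inherently $\Theta(\varepsilon^{-1})$. It cannot be upgraded that way without essentially reintroducing the paper's two-level structure.

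\textbf{Minor fixes.}
\begin{itemize}
\item The variable-length-word overhead is $O((\gap(S)/\log U+\varepsilon^{-1})\log\log U)$, not $O(\varepsilon\gap(S)\log\log U/\log U)$: you multiplied the block count by $\varepsilon^{-1}$ words but kept the extra $\varepsilon$. It is still within budget, since $\varepsilon=\Omega(\log\log U/\log U)$ and $\varepsilon^{-1}\log\log U=O(\log U)\le O(|S|)$.
\item ``Whether the target has been passed'' cannot be read from a table indexed by the chunk alone. Once the chunk's gap sum shows that the target lies inside it, do a second lookup indexed by the chunk together with the target's offset from the running sum. That offset is at most the chunk's gap sum, so it fits in $O(\delta\log U)$ bits.
\item Use the deterministic layout rather than hashing heads into addresses, since a collision would make the structure illegitimate. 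The per-head stride must be at least the maximum number of words in a block, i.e.\ $O(\log U)$ rather than a constant.
\end{itemize}
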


We remark that \cref{lem:polylog-dict-warmup} implicitly supports membership queries: $x\in S$ if and only if the predecessor of $x+1$ in $S$ is $x$. Also note that the last term $O(|S|\log \gapbar(S))$ of the space cost is better than the $O(|S|\log\log U)$ term of \cref{thm:main-warmup}. This improvement is not necessary for the warm up, but will be useful to us later on.
    
At a high level, we prove \cref{thm:main-warmup} by partitioning the key set $S$ into subsets of $\poly\log U$ keys, and building a small dictionary for each subset using \cref{lem:polylog-dict-warmup}. We then show how to build a small auxiliary data structure that, given a key, locates the subset that it belongs to. And, finally, we show how to implement insertions/deletions while still incurring only $O(\epsilon^{-1})$ expected amortized time.

\subsubsection{Partitioning into Chunks}  Let $C_{\text{chunk}}$ be a sufficiently large constant. We partition the universe $[U]$ into disjoint intervals called \defn{chunks}, such that each chunk contains $\Theta(\log^{C_{\text{chunk}}}U)$ keys\footnote{Note that we can assume $|S|\ge \poly\log U$ for any $\poly\log U$ by adding dummy keys, which only increase the space cost by $U^{o(1)}$ and is acceptable.}. This size invariant is maintained by dynamically splitting and merging chunks as updates occur. Whenever a chunk grows too large, it is split into two smaller chunks; whenever it becomes too small, it is merged with a neighboring chunk. By scheduling these splits and merges at the appropriate times, one can straightforwardly ensure the following property.
\begin{claim}
    \label{clm:dynamic_chunks}
    An amortized number of $O(1/\log^{C_{\text{chunk}}}U)$ chunks are created and destroyed per update. Each chunk creation/destruction involves moving $O(\log^{C_{\text{chunk}}}U)$ keys from one chunk to another.
\end{claim}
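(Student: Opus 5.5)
Write $B\defeq\log^{C_{\text{chunk}}}U$. The plan is a standard slack-based split/merge scheme with a potential argument. We keep the invariant that every chunk holds between $B$ and $4B$ keys; the dummy keys from the footnote guarantee $|S|\ge B$, so at least one chunk exists. Chunks are maintained as follows.
\begin{enumerate}
  \item \emph{Split.} When an insertion brings a chunk to $4B+1$ keys, split it at its median key into two adjacent chunks of sizes about $2B$ each. This destroys one chunk and creates two, and moves $O(B)$ keys.
  \item \emph{Merge.} When a deletion brings a chunk to $B-1$ keys, merge it with an adjacent chunk (one always exists unless it is the only chunk, in which case no invariant is needed). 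The merged chunk has between $2B-1$ and $5B-1$ keys.
  \item \emph{Re-split after merge.} If the merged chunk has more than $3B$ keys, immediately split it at its median. Both halves then have at least $1.5B$ and at most $2.5B$ keys.
\end{enumerate}
In all cases each creation or destruction moves $O(B)$ keys, since all chunks involved have size $O(B)$. This gives the second sentence of the claim.

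For the amortized count, I would use the potential
\[
\Phi \defeq \sum_{\text{chunks } c} \frac{\bigl||c|-2.5B\bigr|}{B/4},
\]
or more simply charge each restructuring to the updates that hit the chunk since its creation. The key point is that every chunk produced by a split or merge has size in $[1.5B,\,3B]$. Such a chunk must absorb at least $\min(4B-3B,\ 1.5B-B)=B/2$ updates before it triggers another restructuring. Each update touches exactly one chunk, so the charging is to distinct updates. Each restructuring creates and destroys $O(1)$ chunks. Hence over $T$ updates the number of chunk creations and destructions is $O(T/B)$ plus $O(\#\text{initial chunks})$. The additive term can be absorbed by starting from the empty (dummy-only) configuration, which gives an amortized $O(1/B)=O(1/\log^{C_{\text{chunk}}}U)$ per update.

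The only mildly delicate step is the merge case. A plain merge can leave a chunk too close to the upper threshold $4B$, which would break the "at least $B/2$ updates of slack" argument. That is exactly why step 3 re-splits a merged chunk of size above $3B$. Once that is in place, the rest is routine.
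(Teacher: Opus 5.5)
Your proposal is correct and follows the paper's approach: the paper only asserts that splitting a chunk when it grows too large and merging it with a neighbor when it becomes too small, scheduled appropriately, gives the bound, and your slack window $[B,4B]$, re-split of an oversized merged chunk, and charging of each restructuring to the at least $B/2$ updates absorbed by the triggering chunk since it was produced supply exactly those omitted details. One caution: drop the displayed potential $\Phi$ and keep only the charging argument, because $\Phi$ does not pay for a merge followed by a re-split (when the neighbor has size $m\in[2B+2,2.5B]$, that restructuring leaves $\Phi$ exactly unchanged).
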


The smallest values of the chunks are called \defn{pivots}. These pivots play a crucial role in our data structure.

\subsubsection{A Distributor That Maps Keys to Chunks}

We now describe a small auxiliary data structure that, given a key, returns the chunk it belongs to. A similar (but static) data structure appears in the construction of monotone minimal perfect hash functions (see e.g., \cite{belazzougui2009monotone,boldi2015minimal}). Following the naming of \cite{boldi2015minimal}, we call this data structure a distributor. Note that, whereas the distributor in this section takes roughly $O(|S| \log \log U)$ bits of space, in Section \ref{sec:better-distributor} we will see a more sophisticated construction that brings the space down to roughly $O(|S| \log \gapbar(S))$.

\begin{theorem}[Distributor]
    \label{thm:distributor}
    Let $U$ be an integer. There is a data structure in the word RAM model with word size $w=\Theta(\log U)$ that maintains two sets $P,S\subset [U]$, where $0\in P$ always holds. For each key $x\in [U]$, its \defn{pivot} is defined as the largest key in $P$ smaller than or equal to $x$ (denoted as $p_x$). Our data structure supports the following operations in expected amortized time:
    \begin{itemize}
        \item \textbf{Insertion/Deletion on $P$}: Insert a key to or delete a key from $P$ in $O(\log U)$ time.

        It is guaranteed that this operation would \emph{not} change the pivot of any key in $S$, and that for each insertion (resp.~deletion), the given key did not exist (resp.~did exist) in $P$.
        \item \textbf{Insertion on $S$ with advice}: Given a key $x\notin S$ \emph{and its pivot} $p_x\in P$, insert $x$ to $S$ in $O(1)$ time.
        \item \textbf{Deletion on $S$}: Given a key $x\in S$, delete $x$ from $S$ in $O(1)$ time.
        \item \textbf{Pivot}: Given a key $x\in S$, return its pivot $p_x$ in $O(1)$ time. If the given key is not in $S$, the distributor may return anything.
    \end{itemize}
    The data structure uses
    \begin{align*}
        O(|P|\cdot \log^2U+|S|\cdot \log\log U)
    \end{align*}
    bits of space with high probability in $|S|$ when $|S|\ge (\log U)^c$ for a sufficiently large constant $c$. The data structure assumes access to lookup tables and hash functions of total size at most $U^{\delta}$ bits, where $\delta>0$ is an arbitrary parameter.
\end{theorem}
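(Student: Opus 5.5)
We build the distributor with the monotone-minimal-perfect-hashing idea mentioned in the related-work discussion: each key $x\in S$ stores a short \emph{prefix length} in a retrieval structure, and the pivot is recovered from that prefix. Concretely, we maintain a binary trie $T$ on $\{\text{bin}(p) : p\in P\}$. For each $x\in S$, let $\ell_x$ be the length of the longest common prefix of $\text{bin}(x)$ with any element of $P$. Let $v_x$ be the trie node for that prefix, i.e.\ the node where the root-to-leaf search for $x$ exits the trie.

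Since $p_x$ is the predecessor of $x$ in $P$, it is determined by $v_x$ together with the next bit $b$ of $x$. If $b=1$ and the $1$-child of $v_x$ is missing, then $p_x$ is the maximum of $P$ in the subtree of $v_x$. If $b=0$ and the $0$-child is missing, then $p_x$ is the predecessor in $P$ of the minimum of $P$ in the subtree of $v_x$. So for each trie node we store its subtree minimum and maximum, and for each element of $P$ its predecessor. We keep all trie nodes in a hash table keyed by (prefix, length), and all pivots in a hash table mapping each pivot to its predecessor. With one word per node, the total is $O(|P|\log U)$ nodes times $O(\log U)$ bits, which gives $O(|P|\log^2U)$ bits.

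For the keys of $S$, we store $\ell_x\in[\lceil\log U\rceil+1]$ in the dynamic retrieval structure of \cref{thm:retrieval}, with $V=O(\log U)$. This costs $|S|\log\log U+O(|S|\log\log(U/|S|))=O(|S|\log\log U)$ bits, holds with high probability in $|S|$, and needs $|S|\ge\poly\log U$. The operations then go as follows.
\begin{itemize}
\item \textbf{Pivot.} Retrieve $\ell_x$, look up the node for the length-$\ell_x$ prefix of $x$, and read off $p_x$ as described above, in $O(1)$ time. For $x\notin S$ the retrieved value is arbitrary, and we only have to make sure we return something, e.g.\ $0$ if the lookup fails.
\item \textbf{Insertion on $S$ with advice.} Given $p_x$, we still need $\ell_x$, which we cannot get by walking the trie in $O(1)$ time. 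The LCP of $x$ with $P$ is attained at either $p_x$ or the successor $s$ of $p_x$ in $P$, because the sorted neighbours of $x$ in $P$ maximise the common prefix. So we also store successor pointers, compute both LCPs with a most-significant-bit operation, take the maximum, and insert it into the retrieval structure.
\item \textbf{Deletion on $S$.} Delete $x$ from the retrieval structure.
\end{itemize}

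The main obstacle is updates to $P$. Inserting or deleting a pivot $q$ changes $O(\log U)$ trie nodes along its path, together with their min/max fields and the predecessor/successor pointers of $q$'s neighbours. All of this costs $O(\log U)$ hash-table operations, which is fine. The real problem is that it can change $\ell_x$ for keys $x\in S$, whose stored prefix lengths would then be stale, and we cannot afford to scan them.

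My plan is to use the guarantee that no pivot $p_x$ changes and check that the stored (possibly stale) $\ell_x$ still decodes correctly. On insertion of $q$, every $x\in S$ has $p_x<q$ or $x<q$'s successor side, i.e.\ no key of $S$ lies in $[q,\mathrm{succ}(q))$. The true LCP of such an $x$ can grow only if $x$ shares a longer prefix with $q$. Even then, the old node $v_x$ at depth $\ell_x$ is still a trie node, and it now has both children or a child toward $q$. So I will change the decoding rule to be robust. At the node $v_x$ (which always exists, since nodes are never removed while some key's prefix points at them), compute $p_x$ as the predecessor of $x$ among the $P$-elements in the subtree of $v_x$ when one exists, and otherwise as the predecessor of the subtree minimum.

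Making that query $O(1)$ is the delicate part. I would store at each node the largest $P$-element in its $0$-subtree. Because no pivot lies between $p_x$ and $x$, it will then suffice to compare $x$ against that value and step once via the pointers. Deletions of $q$ are symmetric. When $q$'s path nodes disappear, we must ensure that no $x\in S$ stored a prefix length pointing to them. If some did, we keep such nodes as ``ghost'' nodes and charge them to $P$'s space, or else rebuild the affected keys when the chunk's keys are next moved. Getting this invariant exactly right is where I expect the difficulty to be.
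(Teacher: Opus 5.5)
Your construction is sound except at the point you flag, and the ghost-node patch does not give the claimed bounds. The failure is confined to keys whose exit bit is $0$. In that case every pivot below the exit node $v_x$ exceeds $x$, so the node you stored is anchored at the successor of $x$ in $P$, and the guarantee does not protect the successor. Here is a concrete sequence. Start with $P=\{0\}$, and for $j=1,\dots,m$:
\begin{enumerate}
\item insert $4j+2$ into $P$;
\item insert $x_j=4j+1$ into $S$ with advice $p_{x_j}=0$ (its exit node is $\text{bin}(x_j)_{1\dots\lceil\log U\rceil-2}$, with exit bit $0$);
\item delete $4j+2$ from $P$.
\end{enumerate}
No update changes the pivot of any key in $S$. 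Yet at the end all $m$ stored prefix lengths point to trie nodes that no longer exist, while $|P|=1$. So ghosts cannot be charged to $P$. Keeping a $\Theta(\log U)$-bit node record per ghost costs $\Theta(|S|\log U)$ bits, and you would also need reference counts to ever discard them. Rebuilding ``when the chunk's keys are next moved'' is outside this theorem's interface. The bound must hold for every permitted update sequence, and the retrieval structure cannot list which keys point at a given node.

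The missing idea is to record the longest common prefix of $\text{bin}(x)$ with $\text{bin}(p_x)$, rather than with all of $P$. Let $v_x$ be this length and suppose $x\neq p_x$. Then $\text{bin}(p_x)$ continues with $0$ and $\text{bin}(x)$ continues with $1$, so every pivot with prefix $\text{bin}(x)_{1\dots v_x}\circ\text{``0''}$ is smaller than $x$. Since no pivot lies in $(p_x,x]$, $p_x$ is the largest of them. Both facts survive every permitted update to $P$, and $p_x$ itself can never be deleted. Hence $v_x$ never goes stale and the queried prefix always exists.

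The rest of the structure follows. A dictionary mapping every prefix of a pivot to the largest pivot having that prefix answers pivot queries in $O(1)$. It is repaired bottom-up along the $O(\log U)$ prefixes of an updated pivot, since each value is the max of its two children. Insertion with advice just takes the most significant bit of $x\oplus p_x$, with no successor pointers. This is what the paper does.

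For exit bit $1$, your $\ell_x$ already equals this quantity, and your ``max of the $0$-subtree'' rule is exactly this decoding. Only the exit-bit-$0$ case needs to be re-anchored at $p_x$.
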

When using \cref{thm:distributor} in the dictionary problem, we will take $S$ to be the set of all keys and $P$ to be the set of pivots.

Note that while \cref{thm:distributor} is stated in the word RAM model, it can also be implemented in the variable-length word RAM model while keeping its (asymptotic) time and space guarantees.

Also note that, when insertions are performed, the distributor defined above requires \emph{advice} (the exact pivot $p_x$ of $x$) in order for a new key $x$ to be inserted. The question of how to obtain this advice will be addressed in the next subsection.

\paragraph{Intuition.} Before presenting our construction, we first analyze two na\"ive constructions of distributors to gain some intuition.
\begin{itemize}
    \item Perhaps the simplest idea is to build a predecessor data structure on $P$, which can answer pivot queries for any key in $[U]$. Such a data structure can be constructed using $O(\log U)$ bits of space per key in $P$. However, as shown by \cite{beame1999predecessor}, dynamic predecessor data structures require at least $\Omega(\log\log U/\log\log\log U)$ time to answer queries (when $|P|=U^{\Omega(1)}$), no matter the space usage.
    
    Thus, in order to answer queries in $O(1)$ time, we crucially need to use the fact that distributors are only required to correctly answer pivot queries for keys in $S$.
    
    \item In order to achieve $O(1)$ query time, we can use a retrieval data structure to directly store the mapping $x\mapsto p_x$ for each $x\in S$. However, it necessarily costs $O(\log U)$ bits per key to store the $p_x$'s, which is unacceptable.
    
    To improve this space cost, we note that, even though we cannot afford $\Theta(\log U)$ bits per key $x \in S$, we \emph{can} afford $\Theta(\log^2 U)$ bits of space per \emph{pivot} $p_x$. This fact will be crucial to our construction.
\end{itemize}

\paragraph{The distributor construction.} Instead of directly storing the pivots $p_x$ for each $x\in S$ in the retrieval data structure, we store $\log\log U$ bits of information per $x$ that help us find $p_x$.

For each $x\in S$, let $v_x$ denote the length of the longest common prefix of $\text{bin}(x)$ and $\text{bin}(p_x)$ (recall from the preliminaries that $\text{bin}(x)$ is the binary representation of $x$, in which the most significant bits come first).

Our construction consists of two parts: 

\begin{itemize}
    \item We use a dynamic retrieval data structure to store the mapping $x\mapsto v_x$, which uses $O(|S|\log\log U)$ bits of space and answers queries in $O(1)$ time. This retrieval data structure satisfies the requirement of \cref{thm:retrieval} because the number of keys in $S$ is always at least $\poly\log U$.
    
    After querying this retrieval data structure and knowing the value of $v_x$, we can recover the following information about $p_x$:
    \begin{itemize}
        \item If $v_x=\lceil\log U\rceil$, then $p_x=x$.
        \item Otherwise, by definition of $v_x$, we know that $\text{bin}(p_x)_{1\dots v_x}=\text{bin}(x)_{1\dots v_x}$ and $\text{bin}(p_x)_{v_x+1}\ne \text{bin}(x)_{v_x+1}$. Since $x\ge p_x$, we must have $\text{bin}(p_x)_{v_x+1}=0$ and $\text{bin}(x)_{v_x+1}=1$. Moreover, since $p_x$ is the largest key in $P$ that is smaller than $x$, it must also be the largest key in $P$ whose first $v_x+1$ bits are equal to $\text{bin}(x)_{1\dots v_x}\circ \text{``0''}$.
    \end{itemize}
    
    \item Based on the aforementioned information about $p_x$, we build a key-value dictionary (see \cite{bender2022optimal}) that is used to find $p_x$. The keys in this dictionary correspond to non-empty bit strings of length at most $\lceil\log U\rceil$, where a key $s\in \{0,1\}^{\le \lceil\log U\rceil}$ is present in the dictionary if and only if $s$ is a prefix of $\text{bin}(p)$ for some $p\in P$, and the value associated with key $s$ is the largest such $p$. As a convention, if key $s$ is not present in the dictionary, we say its value is $-1$.

    The number of keys in this dictionary is at most $O(|P|\cdot \log U)$ (because each pivot has $O(\log U)$ prefixes), so it uses $O(|P|\log^2U)$ bits overall. Moreover, after querying the retrieval data structure on $x$, we know that $p_x$ is the largest pivot that has $\text{bin}(x)_{1\dots v_x}\circ \text{``0''}$ as a prefix, and can therefore find $p_x$ by directly querying the dictionary at $\text{bin}(x)_{1\dots v_x}\circ \text{``0''}$, using $O(1)$ time.
\end{itemize}

\paragraph{The analysis.} The distributor uses $O(|P|\log^2U+|S|\log\log U)$ bits of space overall. Pivot queries are answered by first querying the mapping $x\mapsto v_x$, then querying the dictionary on $\text{bin}(x)_{1\dots v_x}\circ \text{``0''}$, using $O(1)$ time. 

Insertions and deletions to $S$ are performed by directly modifying the mapping $x\mapsto v_x$ (this is easy since the algorithm is given $p_x$ as advice during insertions, and knows the value of $p_x$ during deletions by definition of the distributor). Insertions and deletions to $P$ are done as follows. Let $p\in P$ be the key that is being inserted or deleted. Observe that by definition of the key-value dictionary, only the values of the keys $\text{bin}(p)_{1\dots i}$ for $1\le i\le \lceil\log U\rceil$ may change. Also note that, for any $s\in \{0,1\}^{\le\lceil\log U\rceil}$, the value of key $s$ is always equal to the maximum of the values of $s\circ \text{``0''}$ and $s\circ \text{``1''}$. We can thus maintain the dictionary as follows:
    \begin{itemize}
        \item First, we set the value of $\text{bin}(p)$ to be $p$ if we are creating $p$, and remove this key from the dictionary otherwise.
        \item Next, we enumerate $i=\lceil\log U\rceil-1\dots 1$ (note that we enumerate the indices in decreasing order). For each $i$, we set the value of $\text{bin}(p)_{i}$ in the dictionary to be its correct value.
        
        When enumerating $i$, we first query the dictionary for the values of $\text{bin}(p)_{1\dots  i}\circ \text{``0''}$ and $\text{bin}(p)_{1\dots  i}\circ \text{``1''}$ (these values are already guaranteed to be correct), and set the value of $\text{bin}(p)_{1\dots i}$ to be their maximum. If $\text{bin}(p)_{1\dots i}$ has value $-1$, then we remove it from the dictionary.
    \end{itemize}
This process costs $O(\log U)$ time per key insertion/deletion to $P$. We thus conclude the construction of the distributor.

\subsubsection{Exploiting Sort Efficiency for Insertions}

So far, our dictionary consists of the small dictionaries of \cref{lem:polylog-dict-warmup} and a distributor. A query on key $x$ first probes the distributor to locate the chunk of $x$, then accesses the corresponding small dictionary.

Next, we address the issue of performing insertions efficiently. When the dictionary inserts a key $x$ into $S$, we need to provide \cref{thm:distributor} with the pivot $p_x \in P$ for $x$, which intuitively would seem to require a predecessor query on the set of pivots. However, predecessor queries take $\Omega(\log \log |S|)$ time, which would be unacceptable.

To resolve this issue, we observe that predecessor queries can be very efficient if they are performed in batches. That is, answering $m$ predecessor queries simultaneously on a set of size $k$ only takes $O(m+k)$ time. This is done by first sorting the query keys using radix sort (taking $O(m)$ time when $m=U^{\Omega(1)}$), then answering the queries in increasing order of key, whose time cost is the same as merging the set of query keys with the pivots, i.e., $O(m+k)$.

Inspired by this, we use a dynamic dictionary to buffer the insertions. We only insert keys to the distributor and the small dictionaries when the number of buffered insertions reaches $\max(U^{\delta/2},|S|/\log^2 U)$ ($\delta$ is the constant parameter in the statement of \cref{thm:main-warmup}). The buffer dictionary uses $o(|S|)+O(U^{\delta})$ bits of space which is negligible to \cref{thm:main-warmup}, and performing predecessor queries on $\max(U^{\delta/2},|S|/\log^2 U)$ keys simultaneously takes
\begin{align*}
    O(\max(U^{\delta/2},|S|/\log^2 U)+|S|/\log^{C_{\text{chunk}}}U)
\end{align*}time (the latter term accounts for the number of pivots), which is only $O(1)$ time per inserted key. This resolves the issue of insertions.

We only need to buffer the insertions and not the deletions, because the distributor does not require advice for deletions. When we need to delete a key, we simply have to check whether it exists in the buffer dictionary or in one of the small dictionaries of a chunk, and remove it from there.

\subsubsection{Putting Everything Together} 
We now formally describe the entire construction. Our dictionary consists of the following parts: A buffer dictionary, a distributor, a dynamic dictionary that maintains the set of pivots, and a set of small dictionaries, one for each chunk, constructed using \cref{lem:polylog-dict-warmup}. These data structures are combined using \cref{clm:virtual_address_combine}. More specifically, we instantiate \cref{clm:virtual_address_combine} with $U+3$ logical data structures where the first $U$ logical data structures are used to store the small dictionaries of the chunks, and the last three logical data structures are used to store the buffer dictionary, the distributor and the pivot dictionary. The small dictionary of a chunk whose pivot is $p$ is stored in the $p$-th logical data structure. This instantiation of \cref{clm:virtual_address_combine} is valid because each small dictionary has address limit $U^{90}$, and the other three data structures all have address limit $O(U)$ by \cref{clm:word_RAM_to_vl}.

Also note that the space guarantee of \cref{thm:distributor} only holds with high probability in $|S|$, so we need to perform a rebuild whenever the distributor takes too much space. A rebuild can be performed in $O(|S|\cdot \varepsilon^{-1})$ time and is negligible because $\varepsilon^{-1}=O(\log U/\log\log U)=o(|S|)$ and rebuilds only happen with probability $1/\poly |S|$ at each time step.

\paragraph{Performing operations.} To answer a membership query on key $x$, we first access the buffer dictionary to see if it contains $x$. If not, we then query the distributor to obtain the pivot $p_x$ (if $x$ does exist in $S$), and query the small dictionary corresponding to $p_x$ to see if $x$ exists. Note that if $x$ does not exist in $S$, the distributor may return anything, but this is acceptable: If the returned value is not a pivot, we can detect it by checking the pivot dictionary. If the returned value is a pivot but does not correspond to the chunk containing $x$, the corresponding small dictionary will report that $x$ does not exist.

To perform an insertion, we simply add the new key to the buffer dictionary. Then, if the buffer dictionary is too large (i.e., at least $\max(U^{\delta/2},|S|/\log^2U)$), we perform insertions in batches, and clear the buffer dictionary. We first sort the keys in the buffer dictionary, then enumerate the pivots in increasing order (by accessing the pivot dictionary) and check the predecessor of each buffered key. After knowing the predecessors, we insert the keys into the distributor and the small dictionaries, and clear the buffer dictionary.

Deletions are performed more straightforwardly. When deleting key $x$, if $x$ is in the buffer dictionary, then we can easily delete it. Otherwise, we find the correct pivot $p_x$ by accessing the distributor, then delete $x$ from both the distributor and the small dictionary corresponding to $p_x$.

We also need to maintain the chunk structure, i.e., create and destroy chunks. This is straightforward given the operations supported by the distributor. Note that when creating or destroying a chunk, the distributor requires that the chunk currently contains no keys (see the operation ``insertion/deletion to $P$'' of \cref{thm:distributor}). This is guaranteed by our algorithm at the cost of moving $O(\log^{C_{\text{chunk}}}U)$ keys from one chunk to another for each chunk creation/deletion.

\paragraph{Time efficiency.} When answering a query on key $x$, we need to access the buffer dictionary, then possibly the distributor and one small dictionary. These operations cost $O(\varepsilon^{-1})$ expected amortized time in total. After this, if the number of buffered insertions grows too large, we insert them into the distributor and the small dictionaries, and clear the buffer dictionary. This also costs $O(\varepsilon^{-1})$ expected amortized time.

As for the cost of creating and destroying chunks: By \cref{clm:dynamic_chunks}, the amortized number of chunk creations/destructions is $O(1/\log^{C_{\text{chunk}}}U)$, and each such operation involves updating the set of pivots in the distributor (costing $O(\log U)$ time) and moving $O(\log^{C_{\text{chunk}}}U)$ keys between chunks (each costing $O(\varepsilon^{-1})$ time to access the small dictionaries and $O(1)$ time to access the distributor). 
When destroying a chunk, we also need to run the deallocate operation on a small dictionary, which costs $O(\log^{C_{\text{chunk}}}U)$ time. All the above time costs are expected and amortized. Therefore, each chunk creation/destruction costs $O(\log^{C_{\text{chunk}}}U\cdot \varepsilon^{-1})$  expected amortized time.

Overall, the data structure uses $O(\varepsilon^{-1})$ expected amortized time per operation.

\paragraph{Space efficiency.} The buffer dictionary and the pivot dictionary both use $o(|S|)+O(U^{\delta})$ bits of space. The distributor uses
\begin{align*}
    O(|S|\cdot \log U/\log^{C_{\text{chunk}}}U+|S|\cdot \log\log U)={}O(|S|\cdot \log\log U)
\end{align*}bits of space, and each small dictionary uses
\begin{align*}
    \gap(S_i)+O(\varepsilon\cdot \gap(S_i))+O\bk*{|S_i|\cdot \log\gapbar(S_i)}
\end{align*}bits of space (letting $S_i$ denote the set of keys in the $i$-th chunk). In total, the space cost is
\begin{align*}
    &o(|S|)+O(U^{\delta})+O(|S|\cdot \log\log U) \\
    &+\sum_{i=1}^{O(|S|/\log^{C_{\text{chunk}}}U)}\bk*{\gap(S_i)+O(\varepsilon \gap(S_i))+O\bk*{|S_i|\cdot \log\gapbar(S_i)}} \\
    ={}&O(|S|\cdot \log\log U)+\sum_{i=1}^{O(|S|/\log^{C_{\text{chunk}}}U)}\bk*{\gap(S_i)+O(\varepsilon \gap(S_i))+O(|S_i|\cdot\log \log U)}+O(U^{\delta}) \\
    ={}&\gap(S)+O(\varepsilon \cdot\gap(S))+O(|S|\cdot\log\log U)+O(U^{\delta}).
\end{align*}

This concludes the proof of \cref{thm:main-warmup} (assuming \cref{lem:polylog-dict-warmup}).

\subsection{Solving a Chunk in \texorpdfstring{$O(\varepsilon^{-1})$}{O(eps \^-1)} Time}

\label{sec:chunk}

Finally, to complete the warmup construction, we prove \cref{lem:polylog-dict-warmup}. Note that, whereas the construction in this section incurs $O(\epsilon^{-1})$ time per operation, this is not optimal. With significant additional data structural techniques, we will be able to bring this down to $O(\log \epsilon^{-1})$ (in Section \ref{sec:binary-tree}) and then to $O(\log \epsilon^{-1} / \log \log \epsilon^{-1})$ (in Section \ref{sec:trie-plus-btree}), which we will show to be optimal (even just for membership queries) in Section \ref{sec:lb}.

\subsubsection{A Special Parameter Regime}

We first solve the special (and arguably most natural) regime of \cref{lem:polylog-dict-warmup} where $\varepsilon=1/|S|$ and $\gap(S)=\Theta(|S|\log U)$, then generalize to other regimes. Note that, in later sections (\cref{sec:binary-tree,sec:trie-plus-btree}) when we improve the dependency on $\epsilon^{-1}$, it will suffice to focus on just this parameter regime, as our reductions for other parameter regimes will carry over to the new solutions.

\begin{lemma}
    \label{lem:difference-encode-array}
    Let $U$ be an integer. There exists a legitimate data structure in the variable-length word RAM model with word size $w=100\log U$ that maintains a key set $S\subseteq [U]$, where $|S|\ge \log U$ always holds. The data structure supports the following operations in $O(|S|)$ time:
    \begin{itemize}
        \item \textbf{Insertion}: Insert a key $x$ to $S$.
        \item \textbf{Deletion}: Delete a key $x$ from $S$.
        \item \textbf{Predecessor/Successor}: Given $x\in [U]$, output the predecessor or successor of $x$ in $S$ (or $\bot$ if it does not exist).
    \end{itemize}
    The data structure also supports an operation called $\textbf{deallocate}$, that deactivates every active word in $O(|S|)$ time. The data structure has address limit $U$, and uses
    \begin{align*}
        \gap(S)+O(|S|\log\log U)
    \end{align*}
    bits of space.
\end{lemma}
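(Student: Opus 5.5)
We store $S=\{x_1<\dots<x_n\}$ as a plain difference-encoded array spread over consecutive active words of the variable-length word RAM. The sorted sequence is written as $\bin$-free self-delimiting codes: first $x_1$ in $\lceil\log U\rceil$ bits, then for each $i\ge 2$ the difference $d_i=x_i-x_{i-1}$. The exact cost of each $d_i$ is $\lceil\log(d_i+1)\rceil+1$ bits, which is precisely the $i$-th term of $\gap(S)$. The length $\ell_i=\lceil\log(d_i+1)\rceil\le\lceil\log U\rceil$ is stored separately in $O(\log\log U)$ bits, as a fixed-width field of $\lceil\log(\lceil\log U\rceil+1)\rceil$ bits. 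The value itself then costs $\ell_i$ bits, and the leading bit of $d_i$ can even be dropped, so that the $+1$ slack covers rounding. Total: $\gap(S)+O(n\log\log U)+O(\log U)$ bits, and $O(\log U)=O(|S|)$ since $|S|\ge\log U$.

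\textbf{Packing into words.} We concatenate all these codes into one bit string of length $L$ and cut it into pieces of $w$ bits (the last piece shorter), stored in words at addresses $0,1,\dots,\lceil L/w\rceil-1$. This keeps the address limit at most $U$, since $L=O(n\log U)$ and $n\le U$. We also store $n$, or equivalently $L$, in a header word at address $0$, shifting the rest by one. The number of active words is $O(L/w+1)$, so the $C_{retr}\log w$ charge per active word contributes $O((L/w)\log\log U)=O(n\log\log U)$. This is where the choice $w=100\log U$ matters: each word holds $\Omega(1)$ codes on average, as $L\ge n$ and also $L/w\le O(n)$. The total space is therefore $\gap(S)+O(|S|\log\log U)$.

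\textbf{Operations.} Every operation decodes the whole stream sequentially, reading word by word and parsing fixed-width length fields followed by variable-length values with shifts and masks. Each code is $O(\log U)$ bits, so it straddles at most two words and is parsed in $O(1)$ time. A full scan therefore costs $O(n+L/w)=O(n)$.
\begin{itemize}
\item Predecessor and successor: scan, reconstructing the prefix sums $x_i$, until we pass $x$.
\item Insertion or deletion: scan to the position and locally replace at most two codes. For insertion, $d_i$ becomes $x-x_{i-1}$ and $x_i-x$; for deletion the inverse change occurs, and updating $x_1$ is a special case. We then rewrite the entire suffix of the bit string into freshly laid-out words, activating or deactivating the last word if $\lceil L/w\rceil$ changes.
\item Deallocate: deactivate words $0,\dots,\lceil L/w\rceil$, which takes $O(n)$ time.
\end{itemize}
Legitimacy holds because the header lets us always know exactly which words are active.

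\textbf{Main obstacle.} The only delicate step is the word-level bit shuffling on insertion and deletion. Shifting the suffix by a non-multiple of $w$ bits must be done in $O(1)$ time per word. We do this with shifts and ORs of adjacent words while streaming through the suffix, so each output word is assembled from at most two input words. Everything else is routine accounting.
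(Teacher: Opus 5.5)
Your proposal is correct and matches the paper's proof: store the difference encoding of $S$ in $O(|S|)$ words, and on every operation decode all of $S$, perform the operation, and write the new encoding back. The $O(|S|\log\log U)$ term absorbs both the length fields and the per-word charge of the variable-length model, and the suffix bit-shuffling you flag as the main obstacle is unnecessary, since within the $O(|S|)$ budget you can simply re-encode the whole set from scratch, as the paper does.
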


The proof of \cref{lem:difference-encode-array} is extremely simple:

\begin{proof}
    Since we are allowed $O(|S|)$ time per operation, we simply have to use $|S|$ words to store the difference encoding of $S$. This uses $\gap(S)+\lceil \log U\rceil+O(|S|\log\log U)$ bits of space, where the last term also accounts for the cost of using $|S|$ words in the variable-length word RAM model. When performing any operation, we will decode $S$, perform the operation (possibly updating $S$ in the process), then write the new encoding of $S$ to the memory.
\end{proof}

In the following, we show that \cref{lem:difference-encode-array} implies \cref{lem:polylog-dict-warmup}.

\subsubsection{Step 1: Generalizing to Any \texorpdfstring{$\varepsilon$}{eps}}

To better describe our data structure, it is convenient to define the following intermediate goal, which solves \cref{lem:polylog-dict-warmup} for any $\varepsilon$ provided that $\gap(S)=\Theta(|S|\log U)$.

\begin{lemma}
    \label{lem:array-intermediate}
    Let $0<\varepsilon<1/4$ be a parameter of our choice. There exists a legitimate data structure having address limit \textcolor{blue}{$U^{10}$} that supports the same operations as \cref{lem:difference-encode-array} while using \textcolor{blue}{$O(\varepsilon^{-1}+\log_{w}|S|)$ amortized} time and
    \begin{align*}
        \gap(S)+\textcolor{blue}{O(\varepsilon\cdot |S|\log U)}+O(|S|\log\log U)
    \end{align*}
    bits of space.
\end{lemma}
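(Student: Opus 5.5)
We will build this structure by splitting $S$ into consecutive \emph{blocks} of about $B\defeq\Theta(\varepsilon^{-1})$ keys each. Each block is stored with the difference-encoded array of \cref{lem:difference-encode-array}, and a navigation structure built on the block minima sends each operation to its block.

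\textbf{Step 1: the blocks.} Maintain the invariant that every block has between $B$ and $4B$ keys. When a block exceeds $4B$ keys, split it into two. When it drops below $B$, merge it with a neighbour, splitting again if the merged block is too large. Each split or merge costs $O(B)$ time and occurs only once every $\Omega(B)$ updates, so the amortized cost is $O(1)$. Every other operation touches a single block and costs $O(B)=O(\varepsilon^{-1})$ by \cref{lem:difference-encode-array}.

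A small problem is that \cref{lem:difference-encode-array} requires at least $\log U$ keys, while $B$ may be smaller. Its proof never uses that hypothesis except to absorb the $\lceil\log U\rceil$ term, so we reuse the same encoding for blocks. We store the block's first key explicitly in $\lceil\log U\rceil$ bits and the remaining keys by their differences. This costs $\gap(\text{block})+\log U+O(B\log\log U)$ bits. Give the block with minimum $m$ its own logical address range indexed by $m$, combining the ranges with \cref{clm:virtual_address_combine}. Then no pointers are needed, and the total address limit stays below $U^{10}$.

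\textbf{Step 2: space accounting.} By additivity, $\sum_j\gap(\text{block}_j)\le\gap(S)$. The extra cost per block is $O(\log U)$ bits for its first key, and there are $O(|S|/B)$ blocks. This adds up to $O(|S|\log U/B)=O(\varepsilon|S|\log U)$, which is exactly the blue term. The word overhead of the variable-length model is $O(\log\log U)$ per active word, i.e.\ $O(|S|\log\log U)$ in total.

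\textbf{Step 3: navigation.} The remaining task is to find the block containing a given $x$, i.e.\ the predecessor of $x$ among the block minima. There are $M=O(|S|/B)$ minima. We keep them in a B-tree of fanout $w^{\Theta(1)}$ whose nodes are fusion nodes \cite{fredman1993surpassing,patrascu2014dynamic}. Such a tree answers predecessor queries in $O(\log_w M)=O(\log_w|S|)$ time, and its updates take $O(\log_w|S|)$ amortized time; updates happen only when blocks split or merge. Each minimum is stored in $O(\log U)$ bits, and internal nodes add only a constant-factor overhead, so the tree uses $O(M\log U)=O(\varepsilon|S|\log U)$ bits. Tree nodes can be addressed by their separator values, which avoids storing pointers.

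\textbf{Step 4: operations.} A query or update first descends the B-tree, costing $O(\log_w|S|)$. It then decodes and rewrites one block in $O(\varepsilon^{-1})$ time. For predecessor and successor queries, if the answer falls outside the located block, we look in the adjacent block; the minima in the B-tree let us find it. Deallocation walks all blocks and tree nodes in $O(|S|)$ time.

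The main obstacle I expect is the bookkeeping in the variable-length word RAM model. I must check that the fusion-node B-tree can be laid out in words of length at most $w=100\log U$, with legitimate activation and deactivation of those words. I must also ensure that every address stays below $U^{10}$, for example by assigning the B-tree node with separator $s$ and level $\ell$ the address $\ell\cdot U+s$. If the fusion nodes turn out to be inconvenient to lay out this way, a fallback is a plain B-tree with fanout $\Theta(\log U)$ and a linear scan inside each node. That gives the same $O(\log_w|S|)$ depth up to constant factors, since $w=\Theta(\log U)$.
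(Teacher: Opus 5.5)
Your proposal is correct and essentially matches the paper's proof. The paper likewise maintains groups of $\Theta(\varepsilon^{-1})$ keys like the chunks, stores each with \cref{lem:difference-encode-array} in its own logical address range via \cref{clm:virtual_address_combine}, and puts a fusion tree on the group minima at $O(\log U)$ bits per group; your treatment of groups with fewer than $\log U$ keys is more careful than the paper's. One caveat is that your fallback, a fanout-$\Theta(\log U)$ B-tree with a linear scan, would take $\Theta(\log U)$ time per node just to read the $\Theta(\log^2 U)$ bits of keys, so it does not give $O(\log_w|S|)$ time; keep the fusion tree instead, which \cref{clm:word_RAM_to_vl} moves into the variable-length model and which resolves your layout concern.
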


The differences between \cref{lem:array-intermediate} and \cref{lem:difference-encode-array} are marked in blue.

Compared to \cref{lem:difference-encode-array}, \cref{lem:array-intermediate} allows us to trade space for time. This is achieved by partitioning the set into groups, which are subsets of size $O(\varepsilon^{-1})$, and building a dictionary for each group using \cref{lem:difference-encode-array} (so that it only costs $O(\varepsilon^{-1})$ time to access each group). We then use a fusion tree to help locate the relevant group of each query, which uses $O(\log U)$ bits of extra space per group.

\paragraph{Partitioning into groups.} We partition the universe $[U]$ into disjoint intervals called \defn{groups} (not to be confused with the \emph{chunks} from earlier), where the number of keys in each group is $\Theta(\varepsilon^{-1})$. Here, we assume that $\varepsilon\ge 1/|S|$ since otherwise we simply have to have one group and apply \cref{lem:difference-encode-array} on it. This partitioning is maintained in the same way as the chunks in \cref{sec:outer-reduction}, so it also satisfies \cref{clm:dynamic_chunks} (but with $O(\log^{C_{\text{chunk}}}U)$ replaced by $O(\varepsilon^{-1})$). Similarly to \cref{sec:outer-reduction}, with a slight abuse of notation, we also define the \defn{pivots} to be the smallest values of the groups.

Note that the number of groups is $O(\varepsilon\cdot |S|)$, so we are allowed to use $O(\log U)$ bits of extra space per group in the data structure.

\paragraph{The data structure.} Our data structure consists of two parts:
\begin{itemize}

    \item We build a fusion tree (e.g., \cite{patrascu2014dynamic}) on the pivots. This fusion tree uses $O(\log U)$ bits of space per group (which is $O(\varepsilon\cdot |S|\log U)$ bits of space overall), and can answer predecessor/successor queries on the set of pivots in $O(\log_w |S|)$ time.
    
    \item We build a dictionary for each group using \cref{lem:difference-encode-array}, so that each dictionary can be accessed in $O(\varepsilon^{-1})$ time. These dictionaries use
    \begin{align*}
        \gap(S)+O(\varepsilon\cdot |S|\log U)+O(|S|\log\log U)
    \end{align*}
    bits of space in total.
\end{itemize}

We combine the data structures using \cref{clm:virtual_address_combine}. Since each of the $U$ dictionaries has address limit $U$, the address limit of the overall data structure is at most $U^{10}$. The query algorithms are standard (first query the fusion tree, then go to the specific dictionary) and omitted.

\paragraph{Time efficiency.} Since the partitioning of groups satisfies a property similar to \cref{clm:dynamic_chunks}, the dictionaries and the fusion tree can be maintained in $O(\varepsilon^{-1})$ and $O(\log_w|S|)$ amortized time per update to $S$, respectively. We omit the details. In total, the data structure answers queries in $O(\varepsilon^{-1}+\log_w|S|)$ amortized time.

\paragraph{Space Efficiency.} In total, the data structure uses
\begin{align*}
    \gap(S)+O(\varepsilon\cdot |S|\log U)+O(|S|\log\log U)
\end{align*}
bits of space, as desired. 

\subsubsection{Step 2: Generalizing to Any \texorpdfstring{$\gap(S)$}{gap(S)}}

Finally, we prove \cref{lem:polylog-dict-warmup} using \cref{lem:array-intermediate}. Let us compare the differences between \cref{lem:array-intermediate} and \cref{lem:polylog-dict-warmup}: If we directly apply the data structure of \cref{lem:array-intermediate} to solve \cref{lem:polylog-dict-warmup}, its time cost would be acceptable to \cref{lem:polylog-dict-warmup} since $\log_w|S|=O(1)$ when $|S|=\poly\log U$. However, this data structure would use too much space: \cref{lem:polylog-dict-warmup} only allows
\begin{align*}
    \gap(S)+O(\varepsilon \cdot\gap(S))+O\bk*{|S|\log\gapbar(S)}
\end{align*}
bits of space, whereas \cref{lem:array-intermediate} uses
\begin{align*}
    \gap(S)+O(\varepsilon\cdot |S|\log U)+O(|S|\log\log U)
\end{align*}
bits.

Note that in the typical case where $\gap(S)=\Theta(|S|\log U)$, these two bounds are indeed equivalent to each other. In the following, we reduce the general case to the typical case by partitioning the set of keys into subsets called blocks, such that each block has $\Theta(\log U)$ total gap entropy. We then show that when viewing each block as a whole, the general case of \cref{lem:polylog-dict-warmup} can be solved using \cref{lem:array-intermediate}.

For simplicity, we assume that $0$ is always stored in the key set (this only increases the total space by $O(\log U)$, and is negligible since $|S|\ge \log U$).

\paragraph{Partitioning into blocks.} Denote the keys in $S$ as $x_1<\dots<x_{|S|}$. We partition the set $S$ into consecutive subsets called \defn{blocks} (not to be confused with either the chunks or groups from earlier), where each block has total gap entropy $\Theta(\log U)$. Formally, if a block contains keys $x_l,\dots,x_r$, then $\gap(x_l,\dots,x_{r+1})=\Theta(\log U)$. Let $t$ denote the current number of blocks and let $a_i(1\le i\le t)$ denote the index of the smallest key of the $i$-th block (for notational simplicity, let $a_{t+1}\defeq |S|+1$ and $x_{|S|+1}\defeq U$).

Note that, contrary to the groups and chunks, we are expected to create or destroy a constant number of blocks per update. This is acceptable because, unlike the chunks and groups, the information of each block is so small that it is possible to rebuild the data structure for an entire block in $O(1)$ time.

\paragraph{Difference encoding for a small number of keys.} We encode each block of $S$ using the following lemma.
    
\begin{lemma}
    \label{lem:diff-encode-larger-interval}
    Let $k$ be a positive integer and fix two integers $x_0<x_{k+1}$. For any set of keys $x_1<\dots<x_k\in (x_0,x_{k+1})$, there exists an encoding $\enc_{x_0,x_{k+1}}(\{x_1,\dots,x_k\})$ of $\{x_1,\dots,x_k\}$ using
    \begin{align*}
        |\enc_{x_0,x_{k+1}}(\{x_1,\dots,x_k\})|&\le\gap(x_0,\dots,x_{k+1})-\gap(x_0,x_{k+1})+O\bk*{\sum_{i=1}^{k+1}\log\gap (x_{i-1},x_i)} \\
        &\le O(\gap(x_0,\dots,x_{k+1}))
    \end{align*}
    bits. The encoding also encodes the number of keys $k$, which can be learned by reading a prefix of length $O(\log (k+1))$.
\end{lemma}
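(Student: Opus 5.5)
The idea is to encode each difference $d_i \defeq x_i-x_{i-1}$ ($1\le i\le k$) with a self-delimiting code, and to charge the length of $d_i$'s binary representation against $\gap(x_{i-1},x_i)$. The last difference $x_{k+1}-x_k$ is never written; the decoder recovers it from the known endpoints. The encoding is
\[
\enc_{x_0,x_{k+1}}(\{x_1,\dots,x_k\}) \defeq \gamma(k)\circ \gamma(\ell_1)\circ \text{bin}_{\ell_1}(d_1)\circ\cdots\circ\gamma(\ell_k)\circ\text{bin}_{\ell_k}(d_k).
\]
Here $\gamma(\cdot)$ is the Elias gamma code (length $O(\log(m+1))$ for integer $m$), $\ell_i\defeq\lceil\log(d_i+1)\rceil$, and $\text{bin}_{\ell_i}(d_i)$ is the $\ell_i$-bit binary representation of $d_i$. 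The leading $\gamma(k)$ is a prefix of length $O(\log(k+1))$, which gives the last claim of the lemma. Decoding is immediate, since $x_0$ is known and the $x_i$ are prefix sums.

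For the length bound, write $g_i\defeq\gap(x_{i-1},x_i)=\ell_i+1$. The pieces for index $i\le k$ cost $\ell_i+O(\log(\ell_i+1))=g_i-1+O(\log g_i)$. Summing and using additivity, $\gap(x_0,\dots,x_{k+1})=\sum_{i=1}^{k+1}g_i$, so the total length is
\[
\gap(x_0,\dots,x_{k+1})-g_{k+1}-k+O\bk*{\log(k+1)+\textstyle\sum_{i=1}^{k}\log g_i}.
\]
Since $g_i\ge 2$, we have $\log g_i\ge 1$, so $\log(k+1)\le k\le\sum_{i=1}^{k+1}\log g_i$ and the $\gamma(k)$ term is absorbed.

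The one genuinely non-routine step is replacing $-g_{k+1}$ by $-\gap(x_0,x_{k+1})$, which the lemma requires. This needs $g_{k+1}\ge\gap(x_0,x_{k+1})-O(\text{slack})$, and that fails when $x_k$ lies close to $x_{k+1}$. The fix is to note that the largest difference is the one that can be omitted. Among $d_1,\dots,d_{k+1}$ (with $d_{k+1}=x_{k+1}-x_k$), choose $j$ maximizing $d_j$, write $\gamma(j)$ in the header (another $O(\log(k+1))$ bits, absorbed as before), and encode every difference except $d_j$. The decoder recovers $d_j=x_{k+1}-x_0-\sum_{i\ne j}d_i$. Because $d_j\ge (x_{k+1}-x_0)/(k+1)$,
\[
g_j\ \ge\ \gap(x_0,x_{k+1})-\log(k+1)-O(1).
\]
The length is then at most $\gap(x_0,\dots,x_{k+1})-\gap(x_0,x_{k+1})+O\bk*{\log(k+1)+\sum_{i=1}^{k+1}\log g_i}$, and the $\log(k+1)$ is absorbed by the sum as before. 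This gives the first inequality.

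The second inequality follows from $\log g_i\le g_i$, so that $\sum_i\log g_i\le\gap(x_0,\dots,x_{k+1})$, together with dropping the nonpositive term $-\gap(x_0,x_{k+1})$.
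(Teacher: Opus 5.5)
Your proposal is correct and matches the paper's proof: omit the largest difference $d_j$, store $k$ and $j$ in $O(\log(k+1))$ bits, use $d_j\ge (x_{k+1}-x_0)/(k+1)$ to get $\gap(x_{j-1},x_j)\ge\gap(x_0,x_{k+1})-O(\log(k+1))$, and absorb $\log(k+1)$ into $\sum_i\log\gap(x_{i-1},x_i)$ because each term is at least $1$. The remaining differences are only presentational: you spell out the self-delimiting Elias-gamma length prefixes, and you first try omitting the last difference before correctly replacing it with the largest one.
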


Compared to the na\"ive difference encoding which uses
\begin{align*}
    \gap(x_0,\dots,x_{k})+O\bk*{\sum_{i=1}^{k+1}\log\gap (x_{i-1},x_i)}
\end{align*}
bits of space, the improvement of \cref{lem:diff-encode-larger-interval} seems negligible. However, this improvement turns out to be crucial in many of our applications.

\begin{proof}
    The idea behind \cref{lem:diff-encode-larger-interval} is simple: Find the index $1\le i^*\le k+1$ that maximizes $\gap(x_{i^*-1},x_{i^*})$. Our encoding stores $k,i^*$ (using $O(\log (k+1))$ bits) and all differences except one $\{x_i-x_{i-1}\}_{i\ne i^*}$. Given these differences, we can recover all the keys as we know $x_0$ and $x_{k+1}$. It takes at most
    \begin{align*}
        \gap(x_0,\dots,x_{k+1})-\gap(x_{i^*-1},x_{i^*})+O\bk*{\sum_{i=1}^{k+1}\log\gap (x_{i-1},x_i)}
    \end{align*}
    bits to store the differences. Since $x_{i^*}-x_{i^*-1}\ge (x_{k+1}-x_0)/(k+1)$ by definition, we have that
    \begin{align*}
        \gap(x_{i^*-1},x_{i^*})\ge \gap(x_0,x_{k+1})-O(\log(k+1)).
    \end{align*}
    Finally, the lemma is proven by noting that
    \begin{align*}
        \log(k+1)\le k+1\le\bk*{\sum_{i=1}^{k+1}\log\gap (x_{i-1},x_i)}.&\qedhere
    \end{align*}
\end{proof}

In order to use \cref{lem:diff-encode-larger-interval} in our data structure, we need to also support efficient queries on the encodings. This is achieved by the following lemma, where we assume access to a small lookup table.

\begin{lemma}
    \label{lem:query-encodings}
    Let $\delta>0$ be a constant. An encoding is \defn{short} if it uses no more than $(\delta/3)\cdot \log U$ bits, or if $k=1$. In the word RAM model with word size $w=\Theta(\log U)$, we support the following operations on the encodings (in the following, all the keys are in the range $\{-1,\dots,U\}$):
    \begin{itemize}
        \item \textbf{Partitioning}: Given keys $x_0<x_{k+1}$ and $\enc_{x_0,x_{k+1}}(\{x_1,\dots,x_k\})$, output an integer 
        $m$ that is at most $O(1+\gap(x_0,\dots,x_{k+1})/\log U)$, followed by $m+1$ indices $0=i_0<\dots<i_m=k+1$ and $m$ short encodings
        \begin{align*}
            \{\enc_{x_{i_{j-1}},x_{i_{j}}}(\{x_{i_{j-1}+1},\dots,x_{i_{j}-1}\})\}_{1\le j\le m}.
        \end{align*}
        This operation runs in $O(1+\gap(x_0,\dots,x_{k+1})/w)$ time.
        \item \textbf{Merging}: Given an integer $m$, $m+1$ keys $y_0<\dots<y_m$ and $m$ short encodings $\{\enc_{y_{j-1},y_j}(S_j)\}_{1\le j\le m}$, output the merged encoding
        \begin{align*}
            \enc_{y_0,y_{m}}(S_1\cup\{y_1\}\cup\dots\cup \{y_{m-1}\}\cup S_m)
        \end{align*}
        in $O(m)$ time.
        \item \textbf{Insertion/Deletion}: Given $y_0<y_1$, a short encoding $\enc_{y_0,y_1}(S)$ and a key $x\in (y_0,y_1)$, return the encoding $\enc_{y_0,y_1}(S)\oplus \{x\}$ in $O(1)$ time.
        \item \textbf{Special Deletion}: Given $y_0<y_1<y_2$ and short encodings $\enc_{y_0,y_1}(S_1)$, $\enc_{y_1,y_2}(S_2)$, return the encoding $\enc_{y_0,y_2}(S_1\cup S_2)$ in $O(1)$ time.
        \item \textbf{Predecessor/Successor}: Given $y_0<y_1$ and a short encoding $\enc_{y_0,y_1}(S)$, answer a predecessor or successor query on $S$ in $O(1)$ time.
    \end{itemize}
    The data structure assumes access to lookup tables and hash functions of total size at most $U^{\delta}$ bits, where $\delta>0$ is an arbitrary parameter.
\end{lemma}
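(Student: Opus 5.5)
The plan is to realize every operation on short encodings by table lookup, and to realize partitioning and merging by word-parallel scanning of a long encoding.

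\textbf{Lookup tables.} A short encoding with $k\ge 2$ has at most $(\delta/3)\log U$ bits. Hence there are at most $U^{\delta/3}$ possible short encodings. The encoding is relative to $x_0$, because it stores only differences and the index $i^*$ of the omitted gap. So an operation on short encodings depends on the endpoints only through the length $y_1-y_0$ (or $y_2-y_0$, or $y_1-y_0$ together with $y_2-y_1$) and the relative query position $x-y_0$.

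These quantities range over $[U]$, so they cannot index a table directly. However, the decoded structure lies in a very small space: at most $(\delta/3)\log U$ gap lengths, each described in $O(\delta\log U)$ bits. I would therefore let the table take as input only the encoding bits, and return the gap lengths and differences. With the gaps in hand, locating $x$ among the prefix sums, or recomputing the omitted gap as $(y_1-y_0)$ minus the sum of the others, is a constant number of arithmetic operations. This works provided the table returns at most $O(1)$ words. That holds, since the whole decoded description is $O(\delta\log U)$ bits.

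\textbf{Operations on short encodings.} For predecessor/successor, the table packs the prefix sums of the stored differences into $O(1)$ words. I then find the first prefix sum at least $x-y_0$ using word-level comparison tricks, or a second table keyed by the encoding and the high bits of the relevant quantities. Alternatively, I handle the omitted gap $i^*$ separately and compare against sums on either side. Insertion, deletion and special deletion decode, modify the $O(1)$-word description, and re-encode; re-encoding short outputs is again a table lookup keyed by packed small fields. The case $k=1$ is trivial, since the encoding stores $x_1-x_0$ or nothing.

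\textbf{Partitioning.} I scan the long encoding $\delta\log U/10$ bits at a time. A table maps each chunk to how many complete difference fields it contains and their total value, so each word of encoding costs $O(1)$ time. That gives $O(1+\gap/w)$ time overall. I cut greedily whenever the accumulated length reaches about $(\delta/3)\log U/2$, or place a single huge gap alone (the $k=1$ case). This yields $m=O(1+\gap/\log U)$ pieces by additivity and the bound in \cref{lem:diff-encode-larger-interval}. The boundary keys $x_{i_j}$ are recovered from running sums. Each piece is re-encoded with its own $i^*$ via a table, and the single omitted gap of the original encoding is recovered as $(x_{k+1}-x_0)$ minus the sum.

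\textbf{Merging.} Merging is the reverse process: concatenate the $m$ short encodings' differences, inserting the difference $y_j-y_{j-1}$ style separators as needed. Then choose the global maximal gap as $i^*$, at $O(1)$ cost per piece via tables.

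\textbf{Main obstacle.} The main obstacle is that the encoding's variable-length fields and the omitted maximal gap make both sides (decoding and re-encoding) depend on values up to $U$. I would resolve this by designing the lookup tables to operate only on field lengths and packed relative offsets, and doing all full-width arithmetic in registers.
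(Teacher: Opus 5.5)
The paper omits this proof and only says the tables come from ``standard techniques.'' Your word-parallel scanning for partitioning and merging is a reasonable rendering of those techniques. (One slip: a lone huge gap becomes a piece with no interior key, whose encoding is trivially short, not a $k=1$ piece.)

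The genuine gap is in the $O(1)$-time operations on a single short encoding, namely predecessor/successor and locating $x$ for insertion/deletion. You claim a table indexed by the encoding can return the prefix sums of the stored differences packed into $O(1)$ words, because ``the whole decoded description is $O(\delta\log U)$ bits.'' That is true only of the variable-length encoding itself. It is false for any fixed-width form on which word-parallel comparison, or a constant number of arithmetic operations, could locate $x$.

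A short encoding can contain $\Theta(\delta\log U)$ keys. For instance, take one stored difference of about $(\delta/6)\log U$ bits, followed by $\Theta(\delta\log U)$ stored differences equal to $1$, with the omitted maximal gap next to $y_1$. Then there are $\Theta(\delta\log U)$ prefix sums, each needing about $(\delta/6)\log U$ bits. That is $\Theta(\delta^2\log^2U)$ bits, i.e., $\Theta(\delta^2\log U)$ words, not $O(1)$. Your fallback, a table keyed by the encoding and ``the high bits of the relevant quantities,'' is not justified. You also explicitly conclude that $x-y_0$ ranges over $[U]$ and so cannot index a table.

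The missing observation is that the offset is short whenever it matters. Suppose the stored differences have bit lengths $b_i\ge 1$ with $\sum_i b_i\le(\delta/3)\log U$. Then their sum is below $\sum_i 2^{b_i}\le 2^{\sum_i b_i}\le U^{\delta/3}$.

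Let $L$ and $R$ be the keys bounding the omitted gap. They equal $y_0$ plus, respectively $y_1$ minus, a sum read from a table indexed by the encoding alone. Then exactly one of three cases holds:
\begin{itemize}
\item $x\le L$, and $x-y_0<U^{\delta/3}$;
\item $x\ge R$, and $y_1-x<U^{\delta/3}$;
\item $x\in(L,R)$, and the answer is read off from $L$ and $R$.
\end{itemize}
In the first two cases, use a table indexed by the pair (encoding, offset). It has $U^{2\delta/3}$ entries of $O(\log U)$ bits, well within the $U^{\delta}$ budget. It returns the predecessor/successor offset, or the bit position and new fields to splice in for insertion/deletion. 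This is exactly why the threshold in the definition of short is $\delta/3$.

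When an insertion splits the omitted gap itself, finish with arithmetic. Compare the two new pieces against the largest stored difference, which is also read from an encoding-only table, to choose the new $i^*$.
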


The lookup table required by \cref{lem:query-encodings} can be constructed using standard techniques. We omit the details.

\paragraph{The data structure: Storing the set one block at a time.} To store $S$, we first apply \cref{lem:array-intermediate} to maintain a set of $t$ keys, where the $i$-th key is $x_{a_i}$ ($a_i(1\le i\le t)$ denotes the index of the smallest key of the $i$-th block). Next, we reserve a set of $O(U)$ words in the memory and use it to store the encodings $\enc_{x_{a_i},x_{a_{i+1}}}(\{x_{a_i+1},\dots,x_{a_{i+1}-1}\})$. Formally, the set of $O(U)$ words is partitioned into $U$ parts of $O(1)$ words each, and for every $1\le i\le t$, we store $\enc_{x_{a_i},x_{a_{i+1}}}(\{x_{a_i+1},\dots,x_{a_{i+1}-1}\})$ in the $x_{a_i}$-th part. The words that do not correspond to some $x_{a_i}$ are idle.

\paragraph{Correctness.} The information stored in the data structure is sufficient for recovering the set $S$. Indeed, we know the keys $\{x_{a_i}\}$ as they are stored explicitly. Given these keys, we can retrieve the encodings $\enc_{x_{a_i},x_{a_{i+1}}}(\{x_{a_i+1},\dots,x_{a_{i+1}-1}\})$ and decode them to obtain the entire key set.

\paragraph{Time efficiency.} For each operation, we first query the data structure of \cref{lem:array-intermediate} to reach the right block in $O(\varepsilon^{-1})$ time (say it's the $i$-th block), then apply \cref{lem:query-encodings} on the encoding of that block to partition it into short encodings. Formally, the partitioning gives us an integer $m$, keys $x_{a_{i}}=y_0<\dots<y_m=x_{a_{i+1}}$ and short encodings $\{\enc_{y_{j-1},y_j}(S_j)\}_{1\le j\le m}$. Since a block only has $\Theta(\log U)$ total gap entropy, we have $m=O(1)$.

\cref{lem:polylog-dict-warmup} supports insertion, deletion and predecessor/successor queries. These operations can all be answered by first locating the correct short encoding containing the key in question, then using the operations supported by \cref{lem:query-encodings}.

We need to handle some special cases for insertion and deletion: After inserting a key, the resulting encoding grows longer and may not be a short encoding anymore. In this case, we need to partition it again. Also, it may be the case that the key we are trying to delete does not lie in one of the short encodings, but rather equals some $y_i$. In this case, we can use the special deletion supported by \cref{lem:query-encodings}. If the key we are deleting is $x_{a_i}$, then we partition the $i-1$-th block (we must have $i>1$ in this case since we assumed that $0$ is always present in $S$), and run the special deletion operation to merge the last short encoding of block $i-1$ and the first short encoding of block $i$. Therefore, all the operations can be done in $O(1)$ time (aside from the time it takes to locate the correct block).

After performing the queries, we merge the short encodings back together into a constant number of blocks, in a way that maintains the total gap entropy of each block. We then update the dictionary of \cref{lem:array-intermediate}, which again takes $O(\varepsilon^{-1})$ time.

\paragraph{Space efficiency.} Our data structure for  Lemma \ref{lem:polylog-dict-warmup} consists of two parts: A dictionary given by \cref{lem:array-intermediate} that stores the keys $\{x_{a_i}\}$, and the encodings $\enc_{x_{a_i},x_{a_{i+1}}}(\{x_{a_i+1},\dots,x_{a_{i+1}-1}\})$.

The space cost of the dictionary is
\begin{align*}
    \gap(x_{a_1},\dots,x_{a_t})+O(\varepsilon\cdot t\log U)+O(t\log\log U)
\end{align*}
bits by \cref{lem:array-intermediate}.

As for the encodings, by \cref{lem:diff-encode-larger-interval}, the $i$-th encoding has length
\begin{align*}
    &|\enc_{x_{a_i},x_{a_{i+1}}}(\{x_{a_i+1},\dots,x_{a_{i+1}-1}\})|\\
    ={}&\bk*{\sum_{j=a_i+1}^{a_{i+1}}\gap(x_{j-1},x_j)}-\gap(x_{a_i},x_{a_{i+1}})+O\bk*{\sum_{j=a_i+1}^{a_{i+1}}\log \gap(x_{j-1},x_j)}.
\end{align*}
Each encoding occupies $O(1)$ words in the memory, so we need to pay $O(\log\log U)$ bits of extra space by definition of the variable-length word RAM model. The total space usage of the encodings is thus
\begin{align*}
    &\sum_{i=1}^{t}\bk*{\sum_{j=a_i+1}^{a_{i+1}}\gap(x_{j-1},x_j)}-\gap(x_{a_i},x_{a_{i+1}})+O\bk*{\sum_{j=a_i+1}^{a_{i+1}}\log \gap(x_{j-1},x_j)}+O(\log\log U) \\
    =&\gap(x_1,\dots,x_{|S|+1})-\gap(x_{a_1},\dots,x_{a_{t+1}})+O\bk*{\sum_{j=2}^{|S|+1}\log \gap(x_{j-1},x_j)}+O(t\log\log U).
\end{align*}

The overall space usage is
\begin{align}
    \label{equ:reduction-space-cost}
    &\gap(x_1,\dots,x_{|S|+1})-\gap(x_{a_1},\dots,x_{a_{t+1}})+O\bk*{\sum_{j=2}^{|S|+1}\log \gap(x_{j-1},x_j)}+O(t\log\log U) \nonumber\\
    &+\gap(x_{a_1},\dots,x_{a_t})+O(\varepsilon\cdot t\log U)+O(t\log\log U) \nonumber\\
    ={}&\gap(S)+O\bk*{\sum_{j=2}^{|S|}\log \gap(x_{j-1},x_j)}+O(\varepsilon\cdot t\log U)+O(t\log\log U).
\end{align}
Here we slightly changed the index ranges of the summations, which changes the value by $O(\log U)$ and is allowed since $|S|\ge \log U$.

To see that \eqref{equ:reduction-space-cost} satisfies the space requirement of \cref{lem:polylog-dict-warmup}, it remains to show that
\begin{align*}
    O\bk*{\sum_{j=2}^{|S|}\log \gap(x_{j-1},x_j)}+O(\varepsilon\cdot t\log U)+O(t\log\log U)=O(\varepsilon\cdot \gap(S))+O\bk*{|S|\log\gapbar(S)}.
\end{align*}
For the first term, we have by Jensen's inequality that
\begin{align*}
    \sum_{j=2}^{|S|}\log \gap(x_{j-1},x_j)&\le (|S|-1)\cdot\log\bk*{\sum_{j=2}^{|S|}\gap(x_{j-1},x_j)/(|S|-1)} \\
    &=O\bk*{|S|\log\gapbar(S)}.
\end{align*}
For the second term, we have
\begin{align*}
    O(\varepsilon\cdot t\log U)=O(\varepsilon\cdot \gap(S))
\end{align*}
By definition of $t$. For the third term, we have
\begin{align*}
    &O(t\log\log U) \\
    ={}&O(\gap(S)\log\log U/\log U) \\
    ={}&O\bk*{\gap(S)\log\gapbar(S)/\gapbar(S)} \\
    ={}&O\bk*{|S|\log\gapbar(S)}.
\end{align*}
To see that the second equality holds, note that $\gapbar(S)=O(\log U)$ because $\gapbar(S)=\gap(S)/(|S|-1)$, and that $\log x/x$ is decreasing when $x$ is large.

This concludes the proof of \cref{lem:polylog-dict-warmup}.

\section{Difference-Encoded Binary Search Trees}
\label{sec:binary-tree}

Our next major endeavor (in both this section and the next) is to improve the time cost of the warm-up construction from $O(\epsilon^{-1})$ to $O(\log \epsilon^{-1} / \log \log \epsilon^{-1})$. To do this, we first revisit an even more basic problem: how to encode a binary search tree using space close to $\gap(S)$. This leads us to the following lemma, which we believe to be of independent interest.

\begin{lemma}[Difference-Encoded Binary Search Tree]
    \label{lem:binary-tree}
    Let $U$ be an integer, and let $S\subset [U]$. There exists a legitimate data structure in the variable-length word RAM model with word size $w=100\log U$ that maintains the set $S$. In $O(\log |S|)$ worst-case time, the data structure supports the following operations:
    \begin{itemize}
        \item \textbf{Insertion}: Insert a key $x$ to $S$.
        \item \textbf{Deletion}: Delete a key $x$ from $S$.
        \item \textbf{Predecessor/Successor}: Given $x\in [U]$, output the predecessor or successor of $x$ in $S$ (or $\bot$ if it does not exist).
    \end{itemize}
    The data structure also supports an operation called $\textbf{deallocate}$, that deactivates every active word in $O(|S|)$ time. The data structure has address limit $U$. It uses
    \begin{align*}
        \gap(S)+\lceil \log U\rceil+O(|S|(\log\log U+\log|S|))
    \end{align*}bits of space in the worst case.
\end{lemma}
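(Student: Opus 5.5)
We plan to take a balanced binary search tree with worst-case $O(1)$ rotations per update, such as a red-black tree, and store each node at the address equal to its key. The variable-length word RAM model permits this and makes child pointers unnecessary (address limit $U$). The key idea is that a node $v$ whose subtree spans the open interval $(\ell_v, r_v)$, where $\ell_v$ and $r_v$ are its predecessor and successor among its ancestors, does not store its key in absolute form. Instead it stores its children's keys relative to the interval they live in. A search descends from the root, which is stored with $\lceil\log U\rceil$ bits, and maintains the current interval $(\ell,r)$, which is known for free during the descent. A node $v$ stores the offsets $c_L-\ell_v$ and $r_v-c_R$ of its left and right children. Equivalently, the left child is encoded by its distance to the nearest ancestor key on its far side. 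It also stores $O(\log\log U)$ bits giving the lengths of these offsets, plus color or balance bits. The predecessor or successor answer is then read off when the search ends.

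The charging argument is the central step. We pick a charging rule in which each child-offset is bounded by a gap between consecutive keys. The left child $c_L$ of $v$ lies in $(\ell_v, v)$, and we want $\log(c_L-\ell_v)$ to be bounded by the gap entropy of a distinct consecutive pair. Storing $v-c_L$ naively costs up to $\log(v-\ell_v)$, which is the $\Omega(\log n)$-lossy scheme mentioned earlier. Instead we will store the child offset relative to the nearer endpoint in a way that telescopes. The plan is to use additivity of $\gap$ together with a per-node version of \cref{lem:diff-encode-larger-interval}, in which we omit the largest difference within the interval. Concretely, within the interval $(\ell_v,r_v)$ the keys in the subtree satisfy $\gap(\ell_v,\dots,r_v)=\gap(\ell_v,\dots,v)+\gap(v,\dots,r_v)$. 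Node $v$ is encoded by $v-\ell_v$ or $r_v-v$, whichever is smaller, plus one bit saying which. Its cost is then at most $\min(\log(v-\ell_v),\log(r_v-v))+O(1)$. The aim is to show by induction on subtree size that the total cost over a subtree with interval $(\ell,r)$ is at most $\gap(\ell,\text{keys},r)-\gap(\ell,r)+O(\text{size}\cdot\log\text{size})$. The inductive step uses $\min(a,b)\le a+b-\max(a,b)$ and $\gap(\ell,r)\le\max(\gap(\ell,v),\gap(v,r))+1$. The additive slack $O(|S|\log|S|)$, rather than $O(|S|)$, comes from the rounding $+1$ and the logarithmic length fields accumulating along depth, and the statement permits exactly this. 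Summing at the root gives $\gap(S)+\lceil\log U\rceil+O(|S|(\log\log U+\log|S|))$, where the $\log\log U$ term covers the length fields and the per-word overhead $C_{retr}\log w$.

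For updates, note that a node's stored encoding depends only on its own key and its interval $(\ell_v,r_v)$. An insertion therefore changes the interval of only the new leaf. A rotation changes intervals for only $O(1)$ nodes: the two rotated nodes, and the roots of the moved subtrees, whose interval endpoints change. Each such node can be re-encoded in $O(1)$ time, since during the root-to-node descent we know all the relevant absolute keys. Deletion is handled by the standard swap-with-successor step plus $O(1)$ rotations, which again re-encodes $O(1)$ nodes along a known path. Predecessor and successor queries are a single descent. Deallocate visits all nodes by DFS. Red-black trees give $O(\log|S|)$ worst-case time with $O(1)$ rotations.

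I expect the main obstacle to be the charging induction. Specifically, we must show that choosing the nearer endpoint makes the subtraction of $\gap(\ell,r)$ telescope up to only $O(\log)$ slack per node, with the slack bounded by $O(|S|\log|S|)$ rather than depending on the tree's shape. If that fails, I would fall back to storing at each node its key relative to whichever ancestor endpoint is closer, and bound the cost via $\min(\log a,\log b)\le\log(\text{smallest adjacent gap on the root-ward path})$.
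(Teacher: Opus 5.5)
Your final scheme is exactly the paper's $\enc_{L(x),R(x)}(x)$: each key $v$ is stored as $\min(v-\ell_v,\,r_v-v)$ plus a side bit, where $\ell_v,r_v$ are its nearest smaller and larger ancestor keys. Your root-splitting induction is the top-down counterpart of the paper's leaf-removal identity (\cref{lem:induction-binary-tree}), with the per-node bound of \cref{lem:diff-encode-interval} folded into the same step; both rest on additivity of $\gap$. So this is essentially the paper's proof.

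Your inductive step already closes with $O(1)$ rounding loss per node, plus $O(\log\log U)$ per node for the length field and word overhead, regardless of the tree's shape. Nothing accumulates along depth, so the obstacle you anticipate does not arise and the fallback is unnecessary.

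Three slips should be fixed.
\begin{enumerate}
\item The far-side offsets $c_L-\ell_v$ and $r_v-c_R$ in your first paragraph are not equivalent to the nearer-endpoint encoding, and they are not space-efficient. Take a perfect tree whose consecutive gaps follow the pattern $1,1,G,G,1,1,G,G,\dots$. Essentially every leaf and every height-one node is then charged about $\log G$, roughly $\tfrac32\gap(S)$ in total. Only the nearer-endpoint encoding should remain.
\item A rotation changes the intervals of only the two rotated nodes. If the rotated subtree spans $(\ell,r)$, then in the notation of \cref{fig:avl-left-rotate} the subtrees $A$, $B$, $C$ keep the intervals $(\ell,x)$, $(x,y)$, $(y,r)$. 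In your layout, the encoding of $B$'s root merely moves from $y$'s word to $x$'s word.
\item The swap-with-successor step of deletion changes $r_u$ for every node $u$ on the right spine of the deleted node's left subtree. It also changes $\ell_u$ for every node on the left spine of its right subtree. So it needs $O(\log|S|)$ re-encodings rather than $O(1)$, which still fits the $O(\log|S|)$ time bound.
\end{enumerate}
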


One straightforward idea for difference-encoding a binary search tree is to encode the value of each node relative to its parent. This encoding scheme is time-efficient because we always access the parent of a node before accessing the node itself. However, it is very space-inefficient when measured by $\gap(S)$: Imagine a degenerate tree (i.e., a tree in the form of a path) that stores the keys $S=\{1,\dots,n\}$, such that the root has key $1$, its only child has key $n$, followed by $2,n-1,3,n-2,\dots$. In this case, the above encoding scheme requires $O(n\log n)$ bits of space to store the keys, which is no better than the trivial encoding. In comparison, the gap entropy of $S$ is only $\gap(S)=O(n)$.

To improve this scheme, we show that if you store the key of a node relative to not just its parent, but to \emph{all} of its ancestors (storing the difference between the key $k$ and its \emph{closest} ancestor), then you do obtain a space-efficient encoding scheme regardless of the tree structure. Crucially, this encoding scheme is also time-efficient under updates, despite the fact that decoding a node may require knowing the values of all its ancestors.

Moreover, while we use AVL trees as an example in the proof of \cref{lem:binary-tree}, our encoding scheme actually works for any binary search tree that maintains its tree structure using rotations. These include most binary search trees such as splay trees, treaps, red-black trees and so on.

\begin{corollary}[informal]
    Given any binary search tree algorithm for storing a set of keys $S\subset [U]$ that satisfies the following conditions:
    \begin{itemize}
        \item Prior to accessing a node, the algorithm always accesses its parent.
        \item The algorithm maintains its tree structure using rotations only.
    \end{itemize}
    In the word RAM model with word size $w=\Theta(\log U)$, we can encode the tree using $\gap(S)+\lceil \log U\rceil+O(|S|(\log\log U+\log|S|))$ bits of space (not counting the cost of storing auxiliary information such as the size and depth of each node), while preserving the (asymptotic) time cost of any operation on the tree.
\end{corollary}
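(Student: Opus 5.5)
The plan is to encode the tree by storing each node's key relative to its \emph{closest} ancestor, and then prove the space bound by a telescoping potential argument over the intervals that the tree's search paths induce. Throughout, nodes are addressed by their key values, which the variable-length word RAM allows (address limit $U$). For a node $v$, let $L_v<v<R_v$ be its \emph{bounding ancestors}. $L_v$ is the lowest ancestor having $v$ in its right subtree, and $R_v$ is the lowest ancestor having $v$ in its left subtree; use $-1$ or $U$ when no such ancestor exists. The subtree of $v$ consists exactly of the keys of $S$ in $(L_v,R_v)$, and the closest ancestor of $v$ is whichever of $L_v,R_v$ is nearer. At each node $v$ we will store, in $O(1)$ words, the following data for each child $c$:
\begin{itemize}
\item a bit saying whether $c$ exists;
\item a bit saying whether $c$ is closer to $L_c$ or to $R_c$;
\item the length and value of $\min(c-L_c,\,R_c-c)$;
\item the auxiliary information of the host tree.
\end{itemize}
Here $(L_c,R_c)$ is either $(L_v,v)$ or $(v,R_v)$. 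The root key is stored with $\lceil\log U\rceil$ bits. Since the algorithm accesses a parent before a child, it always carries $(L_v,R_v)$ down the search path, so each child is decoded in $O(1)$ time.

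\textbf{Updates.} The key step here is that a rotation at an edge $(x,y)$ changes the bounding pair only of $x$ and $y$. Writing the three hanging subtrees as $A,B,C$, every node in $A$ and $C$ keeps its ancestors' keys restricted to the relevant side. For $B$, the interval is bounded by $x$ and $y$ both before and after the rotation. Hence a rotation requires re-encoding only the $O(1)$ nodes $x$, $y$, and their parent, which takes $O(1)$ time. Leaf insertion and deletion likewise touch $O(1)$ encodings; a deletion of an internal node is first reduced to rotations or a swap with a neighbor, which changes only $O(1)$ stored differences. So the asymptotic time of the host algorithm is preserved.

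\textbf{Space.} This is the main step. The length fields and per-word overhead cost $O(\log\log U)$ per node, so it suffices to show
\[
\sum_{v}\log\min(v-L_v,R_v-v)\le \gap(S)+O(|S|).
\]
To do this, define for an interval $(a,b)$ the potential $\Phi(a,b)=\gap(\{a,b\}\cup(S\cap(a,b)))-\log(b-a)$. Let $v$ be the subtree root for the interval $(L,R)$. By additivity of $\gap$, the left subtree has interval $(L,v)$ and the right subtree has interval $(v,R)$, so it is enough to check
\[
\log\min(v-L,R-v)+\log(R-L)\le \log(v-L)+\log(R-v)+O(1).
\]
This holds because $\max(v-L,R-v)\ge (R-L)/2$. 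Hence $\Phi(L,R)\ge \log\min(v-L,R-v)+\Phi(L,v)+\Phi(v,R)-O(1)$. At the empty intervals (the null children) $\Phi=O(1)$, because of the $+1$ and ceiling slack in $\gap$. Unrolling from the root interval $(-1,U)$, whose potential is at most $\gap(S)+O(\log U)$, gives the claimed sum with $O(|S|)$ slack. The remaining terms, namely the $\lceil\log U\rceil$ for the root and the $O(|S|\log|S|)$ allowance for rounding and for boundary sentinels, are absorbed into the stated bound.

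I expect the obstacle to be getting this inequality exactly right with the sentinels and the $+1$ terms in $\gap$. I will first set it up as the recursive potential above and check the leaf and root base cases carefully.
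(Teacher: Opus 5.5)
Your proposal is essentially the paper's proof of \cref{lem:binary-tree}. It uses the same encoding of each key by its distance to the nearer of its two bounding ancestors, the same $O(1)$ propagation of $(L,R)$ from parent to child, and the same argument (via subtrees $A$, $B$, $C$) that a rotation changes the bounding pair only of $x$ and $y$. The remaining differences are cosmetic:
\begin{itemize}
\item You store the children's encodings in the parent's word, whereas the paper addresses each node by its parent's key plus a side bit.
\item You bound the space by unrolling a potential top-down over subtree intervals, whereas the paper proves \cref{lem:induction-binary-tree} by peeling off a leaf, whose bounding ancestors are its predecessor and successor.
\end{itemize}
The two space arguments are the same telescoping sum. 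If you replace $\log(b-a)$ by $\gap(a,b)$ in $\Phi$, the recursion has zero slack, vanishes on empty intervals, and gives exactly the paper's identity.

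Two small corrections.

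\emph{First}, your accounting pays for the boundary twice. You bound the sum over all $v$, root included, by $\Phi(-1,U)+O(|S|)\le\gap(S)+\log U+O(|S|)$, and then add $\lceil\log U\rceil$ for the explicitly stored root. The extra $\log U$ is not absorbed by $O(|S|(\log\log U+\log|S|))$ when $|S|=o(\log U/\log\log U)$. The fix is to unroll only the non-root nodes, starting from the intervals $(-1,r)$ and $(r,U)$ around the root $r$. Since $x_1\le r\le x_m$, additivity gives $\Phi(-1,r)+\Phi(r,U)=\gap(S)+\gap(-1,x_1)+\gap(x_m,U)-\log(r+1)-\log(U-r)\le\gap(S)+O(1)$. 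This yields exactly $\gap(S)+\lceil\log U\rceil+O(|S|\log\log U)$. Alternatively, encode the root relative to $(-1,U)$ like every other node, as the paper does.

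\emph{Second}, the aside that a deletion via a swap with a neighbor changes only $O(1)$ stored differences is false. Replacing $v$ by its successor $s$ changes the bounding ancestor from $v$ to $s$ for every node on the left spine of $v$'s right subtree and on the right spine of $v$'s left subtree. Splicing out a node with one child likewise changes a whole spine. This does not hurt the corollary, whose hypothesis is that the structure changes only by rotations (plus attaching or detaching leaves). So delete by rotating $v$ down to a leaf and then removing it in $O(1)$. For AVL-style swap deletions, the spine re-encoding costs $O(\text{height})$, which the operation already pays anyway.
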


\subsection{Basic Facts about Binary Search Trees}

Consider using a binary search tree to maintain the set $S$. For concreteness, we use AVL trees (e.g., \cite{adelsonvelskii1962algorithm}) in the following discussion, but the same holds for other trees. Standard implementations of AVL trees answer all queries in $O(\log |S|)$ worst-case time (which is already good enough for us), but require $O(|S|\log U)$ bits of space. In the following, we show that we can improve the space usage of AVL trees to meet the space requirement of \cref{lem:binary-tree}. We assume basic knowledge of AVL trees and binary search trees in general.

Before diving into our algorithm, we first list some basic facts about AVL trees. We are not going to dive into the details of AVL trees; the facts listed here are only those that are relevant to our improvement.

Logically, the AVL tree consists of $|S|$ nodes, where the \defn{content} of each node contains
\begin{itemize}
    \item the key $x\in S$ corresponding to the node (in the following, we refer to this node as $N_x$),
    \item the depth of the subtree rooted at $N_x$ (which is at most $O(\log|S|)$ bits), and
    \item pointers to the children of $N_x$ (if they exist).
\end{itemize}

When processing an update or a query operation on an AVL tree, we would access $O(\log |S|)$ nodes, where if a node is accessed, then so are all its ancestors.

When processing an insertion/deletion operation, the AVL tree maintains its invariants by performing at most $O(\log |S|)$ \defn{rotations}, which are standard subroutines for maintaining binary search trees. See \cref{fig:avl-left-rotate} for an illustration.

\subsection{Physical Implementation of Binary Search Trees in the Variable-Length Word RAM Model (without Difference Encoding)}

Before discussing how to achieve space close to $\gap(S)$, we first describe how to encode a binary search tree in the variable-length word RAM model, without paying any significant space cost for pointers.

In the variable-length word RAM model, we can store an AVL tree as follows:
\begin{itemize}
    \item Each node occupies a word in the memory.
    \item The address of a node in the memory consists of two parts:
    \begin{itemize}
        \item the key of the node's parent ($0$ if it's the root), and
        \item one bit indicating whether the node is a left child ($0$ if it's the root).
    \end{itemize}
    \item In the word of a node $N_x$, we store the content of $N_x$ as defined above, with one twist: Instead of storing pointers to the children of $N_x$, we only need to store $2$ bits denoting whether $N_x$ has a left or right child. This is sufficient since the addresses of the children of $N_x$ are determined by the value of $x$, which is already known.
\end{itemize}

This costs $\lceil \log U\rceil+O(\log\log U+\log |S|)$ bits of space per node, where the first term is the cost of storing the key $x$.

There is one small detail that we need to handle during updates: We need to make sure that the nodes have the right names. This is straightforward since updates to the AVL tree are performed by doing rotations, and we only have to change the names of a constant number of nodes per rotation.

This implementation supports all the operations listed in \cref{lem:binary-tree}. The AVL tree only uses one word per node (which is enough because the word size is $100\log U$), so it has address limit $U$. Thus, we now have a data structure that satisfies all the requirements of \cref{lem:binary-tree} except the space usage, and have eliminated the space needed to store pointers to each node's children. The only remaining problem is that the keys are not difference-encoded. In the following, we improve upon this straightforward implementation.

\subsection{A Tool for Difference Encoding}

We use the following lemma to difference-encode the keys.

\begin{lemma}
    \label{lem:diff-encode-interval}
    Let $L<R$ be integers. For any integer $x\in (L,R)$, there exists an encoding $\enc_{L,R}(x)$ of $x$ using
    \begin{align}
        \label{equ:diff-encode-interval}
        |\enc_{L,R}(x)|=\gap(L,x)+\gap(x,R)-\gap(L,R)+O(\log\gap(L,R))
    \end{align}
    bits. Moreover, in the word RAM model with $w=\Omega(\log (R-L))$, $\enc_{L,R}(x)$ can be encoded and decoded in $O(1)$ time.
\end{lemma}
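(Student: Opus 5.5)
The plan is to mirror the proof of \cref{lem:diff-encode-larger-interval} in the special case $k=1$. Let $x_0=L$, $x_1=x$, $x_2=R$, and let $i^*\in\{1,2\}$ be the index maximizing $\gap(x_{i^*-1},x_{i^*})$, that is, the side on which $x$ is farther from its endpoint. The encoding will consist of three parts:
\begin{itemize}
\item one bit recording $i^*$;
\item a self-delimiting length field storing $\ell=\lceil\log(d+1)\rceil$, where $d$ is the \emph{other} difference ($x-L$ if $i^*=2$, or $R-x$ if $i^*=1$);
\item the $\ell$-bit binary representation of $d$.
\end{itemize}
Decoding is immediate: read the bit, read $\ell$, read $d$, and output $L+d$ or $R-d$ accordingly.

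For the length bound, the payload costs $\ell=\gap(\text{smaller side})-1$ bits. Since $(x-L)+(R-x)=R-L$, the larger difference is at least $(R-L)/2$. Hence $\gap(\text{larger side})\ge \gap(L,R)-O(1)$, so the payload is at most $\gap(L,x)+\gap(x,R)-\gap(L,R)+O(1)$. The length field encodes $\ell\le\lceil\log(R-L+1)\rceil\le\gap(L,R)$, so it needs $O(\log\gap(L,R))$ bits. Here we must be careful that the decoder knows $L$ and $R$. We can therefore write $\ell$ in a fixed-width field of $\lceil\log(\gap(L,R)+1)\rceil$ bits, whose width the decoder can compute from $L,R$. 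This avoids a prefix-free code entirely. The total length is then $\gap(L,x)+\gap(x,R)-\gap(L,R)+O(\log\gap(L,R))$, with the $O(1)$ slack absorbed because $\gap(L,R)\ge 2$ and so $\log\gap(L,R)\ge1$.

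Two details need checking. First, the statement writes the length as an equality. The natural reading is that it is an upper bound, or that the $O(\cdot)$ term hides the exact additive slack. Since the right-hand side minus the payload is already $O(\log\gap(L,R))$, stating it as an equality with a suitably interpreted $O$ term is consistent; I would note this explicitly. Second, the payload is never negative-length, since $\ell\ge1$ whenever $d\ge1$, which holds because $x\in(L,R)$.

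For the running time, all fields fit in $O(1)$ words when $w=\Omega(\log(R-L))$. Computing $\lceil\log(\cdot)\rceil$ is a most-significant-bit computation, which takes $O(1)$ time in the word RAM via standard techniques (multiplication tricks or a small lookup table). Packing and unpacking the fields are shifts and masks. The main obstacle is only bookkeeping: the additive constant from the inequality $\gap(\text{larger})\ge\gap(L,R)-1$ (halving $R-L$ loses at most one bit in the ceiling-log) must be absorbed correctly, which it is, since $O(1)\le O(\log\gap(L,R))$.
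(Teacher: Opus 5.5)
Your proof is correct and follows the paper's argument: store one indicator bit plus the smaller of $x-L$ and $R-x$, then use $\max\{x-L,R-x\}\ge (R-L)/2$ to get $\max\{\gap(L,x),\gap(x,R)\}\ge \gap(L,R)-1$. Your fixed-width length field, whose width the decoder computes from $L$ and $R$, is a concrete instantiation of the paper's unspecified $O(\log\gap(L,R))$ overhead.
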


\cref{lem:diff-encode-interval} is a special case of \cref{lem:diff-encode-larger-interval}, but we restate it here so that this section is self-contained.

\begin{proof}
    In the encoding $\enc_{L,R}(x)$, we store either $x-L$ or $R-x$, depending on which one is smaller. Formally, we first use one bit to indicate whether $x-L<R-x$, then use
    \begin{align*}
        \min\{\gap(L,x),\gap(x,R)\}+O(\log \gap(L,R))
    \end{align*}
    bits to store $\min(x-L,R-x)$.

    It is clear that one can fully recover $x$ from $L,R$ and $\enc_{L,R}(x)$. To compare the space usage to \eqref{equ:diff-encode-interval}, note that
    \begin{align*}
        &\min\{\gap(L,x),\gap(x,R)\} \\
        ={}&\gap(L,x)+\gap(x,R)-\max\{\gap(L,x),\gap(x,R)\} \\
        \le {}&\gap(L,x)+\gap(x,R)-\gap(L,R)+1,
    \end{align*}
    where the last line is due to the fact that $\max\{x-L,R-x\}\ge (R-L)/2$.
\end{proof}

\subsection{Using Difference Encoding in the Binary Search Tree}

\paragraph{The new encoding scheme.} For every node $N_x$ in the AVL tree, define $L(x)$ to be the largest key smaller than $x$ among the ancestors of $N_x$ (if there is no such key, let $L(x)=-1$). Similarly define $R(x)$ (if no such key, $R(x)=U$). Our new encoding scheme is as follows: Instead of storing $x$ explicitly in the description of the node $N_x$, we store the encoding $\enc_{L(x),R(x)}(x)$ obtained by applying \cref{lem:diff-encode-interval}.

\paragraph{Space efficiency.} We now show that the new encoding scheme is space-efficient. Using \cref{lem:diff-encode-interval}, the space cost of encoding key $x$ is
\begin{align}
    \label{equ:single-node-encode}
    \gap(x,R(x))+\gap(L(x),x)-\gap(L(x),R(x))+O(\log\log U+\log |S|).
\end{align}Note that in \cref{lem:binary-tree}, we can afford to use $O(\log\log U+\log |S|)$ bits of extra space per key. Thus, we can ignore the final term for now, and focus on the total cost of the first three terms in \eqref{equ:single-node-encode}. This is done by the following lemma.

\begin{lemma}
    \label{lem:induction-binary-tree}
    Fix any binary search tree, and let $S=\{x_1<\dots<x_m\}\subseteq [U]$ be its set of keys. Define $x_0\defeq -1,x_{m+1}\defeq U$ for notational simplicity. We have
    \begin{align*}
        &\sum_{x\in S}\gap(x,R(x))+\gap(L(x),x)-\gap(L(x),R(x)) \\
        =& \gap(x_0,\dots,x_{m+1})-\gap(x_0,x_{m+1}).
    \end{align*}
\end{lemma}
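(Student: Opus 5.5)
We prove the identity by induction on the tree, via a slightly stronger statement about subtrees. For a node $N_y$ with ancestor bounds $L(y),R(y)$, all keys in the subtree rooted at $N_y$ lie strictly inside $(L(y),R(y))$. Moreover, for every node $x$ in that subtree, $L(x)$ and $R(x)$ are determined by the ancestors of $x$ inside the subtree together with the two values $L(y),R(y)$. This holds because every ancestor of $N_y$ lies outside the interval, and the closest such ancestors on either side are exactly $L(y)$ and $R(y)$.

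The strengthened claim is as follows. For any binary search tree $T$ on keys $z_1<\dots<z_k$, and any bounds $a<z_1$, $z_k<b$, define $L_{a,b}(x)$ (resp.\ $R_{a,b}(x)$) to be the largest (resp.\ smallest) element of $\{a,b\}\cup\{\text{ancestors of }x\}$ that is smaller (resp.\ larger) than $x$. Then
\begin{align*}
\sum_{x\in T}\bigl(\gap(L_{a,b}(x),x)+\gap(x,R_{a,b}(x))-\gap(L_{a,b}(x),R_{a,b}(x))\bigr)=\gap(a,z_1,\dots,z_k,b)-\gap(a,b).
\end{align*}
The lemma is the case $a=x_0=-1$, $b=x_{m+1}=U$, applied to the whole tree. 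This works because the root's ancestor set is empty, so the conventions $L=-1$, $R=U$ match the definitions in the encoding scheme.

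We induct on $k$. The base case $k=0$ is trivial, since both sides are $0 = \gap(a,b)-\gap(a,b)$. For the inductive step, let the root be $r=z_j$, with left subtree on $z_1,\dots,z_{j-1}$ and right subtree on $z_{j+1},\dots,z_k$. The root contributes $\gap(a,r)+\gap(r,b)-\gap(a,b)$. Every node in the left subtree has $r$ as an ancestor and $r>x$, so its bounds computed with $(a,b)$ coincide with those computed within the left subtree using bounds $(a,r)$. Symmetrically, nodes in the right subtree use bounds $(r,b)$. By induction, the left subtree contributes $\gap(a,z_1,\dots,z_{j-1},r)-\gap(a,r)$, and the right subtree contributes $\gap(r,z_{j+1},\dots,z_k,b)-\gap(r,b)$.

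Summing these three contributions, the $\gap(a,r)$ and $\gap(r,b)$ terms cancel. Additivity of the gap entropy then glues the two chains at $r$ into $\gap(a,z_1,\dots,z_k,b)$, which gives the claim.

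The only step requiring care is verifying that the bounds localize correctly, i.e.\ that for $x$ in the left subtree the nearest larger element among $\{b\}\cup\text{ancestors}$ is the nearest larger element among $\{r\}\cup\text{ancestors within the subtree}$. Every ancestor of $r$ and $b$ itself is at least $r$ or at most $a$, because all ancestors of the subtree root lie outside $(a,b)$ relative to this subtree, so nothing between $x$ and $r$ is lost. Once this is in place, the induction is a direct telescoping computation.
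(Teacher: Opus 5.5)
Your proof is correct, but it decomposes the tree differently from the paper. The paper keeps the statement unchanged and inducts on $m$ by deleting a leaf $N_{x_l}$. Deleting a leaf changes no other node's ancestor set, so the remaining terms follow from the induction hypothesis. The only real work is showing, via a least-common-ancestor argument, that a leaf's bounds are its in-order neighbours, $L(x_l)=x_{l-1}$ and $R(x_l)=x_{l+1}$. Additivity then peels off $\gap(x_{l-1},x_l)+\gap(x_l,x_{l+1})-\gap(x_{l-1},x_{l+1})$.

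You instead split at the root. This forces you to strengthen the claim to arbitrary outer bounds $(a,b)$. In exchange, the localization step follows immediately from the BST ordering, and you need no fact about in-order neighbours. Your telescoping also mirrors how the encoding is decoded top-down (the claim that $L(x),R(x)$ are computable from the parent's values). It extends directly to the multiway nodes of \cref{lem:difference-encoded-btree}, where a node's separators are encoded relative to $L(N),R(N)$ and the children get bounds $(y_{i-1},y_i)$. The paper's leaf-removal route avoids strengthening the hypothesis but relies on the in-order-neighbour fact.

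One small remark: your final paragraph's justification is more tangled than necessary. In the strengthened statement the root $r$ of $T$ has no ancestors. For a left-subtree node $x$, the only values in its candidate set beyond its ancestors inside the left subtree are $a$, $b$ and $r$. Since $a<x<r<b$, the value $b$ is dominated by $r$ for $R$, and $r$ is irrelevant for $L$. That is all the localization needs.
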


\begin{proof}
    Proof by induction. When $m=1$, there is only one key $x_1$ in the tree, and we have $L({x_1})=-1,R({x_1})=U$, so the equality holds.

    When $m>1$, let $N_{x_l}$ be a leaf node in the tree. Consider the binary search tree obtained by removing node $N_{x_l}$. On this smaller binary search tree, the values $L(x)$ and $R(x)$ are the same as the original binary search tree for any $x\ne x_l$. Applying the induction hypothesis on the smaller binary search tree, we have
    \begin{align*}
        &\sum_{x\in S,\textcolor{blue}{x\ne x_l}}\gap(x,R(x))+\gap(L(x),x)-\gap(L(x),R(x)) \\
        =& \gap(x_0,\dots,x_{l-1},x_{l+1},\dots,x_{m+1})-\gap(x_0,x_{m+1})\\
        =& \gap(x_0,\dots,x_{m+1})-\gap(x_0,x_{m+1})-\textcolor{blue}{\bk*{\gap(x_l,x_{l+1})+\gap(x_{l-1},x_l)-\gap(x_{l-1},x_{l+1})}}.
    \end{align*}

    Note that the space cost of the smaller binary tree only differs from the space bound that we need to prove for the original binary search tree in the last three terms (marked in blue).  To prove the lemma for the original binary search tree, it remains to show that
    \begin{align*}
        \gap(x_l,R(x_l))+\gap(L(x_l),x_l)-\gap(L(x_l),R(x_l))=\gap(x_l,x_{l+1})+\gap(x_{l-1},x_l)-\gap(x_{l-1},x_{l+1}).
    \end{align*}
    That is, we need to show that $L(x_l)=x_{l-1}$ and $R(x_l)=x_{l+1}$ where $N_{x_l}$ is a leaf node.
    
    In the following, we prove $L(x_l)=x_{l-1}$, and the other equality is symmetric. If $l=1$, then $x_l$ is the smallest key in the binary search tree, and $L(x_l)$ must be $-1=x_{0}$. If $l>1$, then $L(x_l)=x_{l-1}$ if and only if $N_{x_{l-1}}$ is an ancestor of $N_{x_l}$. Now suppose that this is not the case. Let $N_{x'}$ denote the least common ancestor of $N_{x_{l-1}}$ and $N_{x_l}$. Since $N_{x_{l-1}}$ is not an ancestor of $N_{x_l}$, we have $x'\notin \{x_{l-1},x_l\}$, and since we are working on a binary search tree, $x'$ must take value between $x_{l-1}$ and $x_l$, which is impossible.
\end{proof}

Given \cref{lem:diff-encode-interval}, our total space usage is at most
\begin{align*}
    &\gap(\{-1\}\cup S\cup \{U\})-\gap(-1,U)+O(|S|(\log\log U+\log |S|)) \\
    ={}&\gap(S)+O(|S|(\log\log U+\log |S|))+\gap(x_0,x_1)+\gap(x_m,x_{m+1})-\gap(x_0,x_{m+1}).
\end{align*}
Since $\gap(a,b)\le \log U+O(1)$ for any $a<b\in \{-1,\dots,U\}$ and $\gap(x_0,x_{m+1})=\gap(-1,U)\ge\log(U)$, it follows that
\begin{align*}
    \gap(x_0,x_1)+\gap(x_m,x_{m+1})-\gap(x_0,x_{m+1})\le \log U+O(1).
\end{align*}

\paragraph{Time efficiency.} We show that our new encoding scheme can be encoded and decoded efficiently. We only need to show that when accessing a node $N_x$ in the AVL tree, we can efficiently compute the values $L(x),R(x)$, which are needed to decode $\enc_{L(x),R(x)}(x)$.

To compute $L(x),R(x)$, recall that if node $N_x$ is accessed while performing some operation on the AVL tree, then its parent must also be accessed. Therefore, if we can compute the $L,R$ values of a node in constant time given the $L,R$ values of the node's parent, then it is possible to efficiently encode and decode the value of $x$ whenever we access node $N_x$. This is shown in the following claim.

\begin{claim}
    Let $N_x$ be a non-root node in the AVL tree, with $N_y$ being its parent. Given $L(y),R(y)$, we can compute $L(x),R(x)$ in constant time.
\end{claim}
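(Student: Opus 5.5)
Write $A(v)$ for the set of keys of the proper ancestors of a node $N_v$, so that $L(v)=\max\{a\in A(v):a<v\}$ and $R(v)=\min\{a\in A(v):a>v\}$, with the conventions $-1$ and $U$ when no such ancestor exists. The plan is to express $A(x)$ in terms of $A(y)$, use the binary-search-tree ordering to see which elements can be extremal, and then split into two cases according to which child of $N_y$ the node $N_x$ is.

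First, I would note that $A(x)=A(y)\cup\{y\}$, since the ancestors of $N_x$ are exactly $N_y$ together with the ancestors of $N_y$. Next, I would record the key ordering fact: every ancestor $a\in A(y)$ has $N_x$ and $N_y$ in the same subtree of $N_a$. Hence $x$ and $y$ lie on the same side of $a$. So the keys of $A(y)$ below $y$ are exactly those below $x$, and likewise for keys above.

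The case split then goes as follows. If $N_x$ is the left child of $N_y$, then $x<y$. The elements of $A(x)$ smaller than $x$ are exactly the elements of $A(y)$ smaller than $y$, so $L(x)=L(y)$. The elements of $A(x)$ larger than $x$ are $y$ together with the elements of $A(y)$ larger than $y$. All of the latter are at least $R(y)>y$, so $R(x)=y$. Symmetrically, if $N_x$ is the right child of $N_y$, then $L(x)=y$ and $R(x)=R(y)$.

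For the constant-time claim, the algorithm must also know $y$ itself and which child $N_x$ is. When $N_y$ is accessed we decode $y=\enc^{-1}_{L(y),R(y)}$ in $O(1)$ time by \cref{lem:diff-encode-interval}. The child direction is known because the address of $N_x$ encodes it (the parent key together with the left/right bit), or equivalently because the traversal chooses the direction. The only step requiring any care is the same-side fact for the ancestors of $N_y$; it is immediate from the search-tree property. Everything else is bookkeeping.
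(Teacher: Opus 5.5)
Your proof is correct and follows the same route as the paper. In both, the key step is that every proper ancestor of $N_y$ has $x$ and $y$ on the same side, which gives $L(x)=L(y)$, $R(x)=y$ for a left child and $L(x)=y$, $R(x)=R(y)$ for a right child. You also spell out two points the paper leaves implicit: that $y$ is decoded from $\mathrm{enc}_{L(y),R(y)}(y)$ in $O(1)$ time, and that the child direction is known from the node's address.
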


\begin{proof}
    Recall that binary search trees require that any node in the left (resp.~right) subtree of some node $N_x$ has smaller (resp.~larger) key than $x$. Stated differently, any ancestor $N_z$ of $N_y$ for which $z<y$ must also satisfy $z<x$. 

    Therefore, if $x$ is the right child of $y$, then any ancestor of $N_y$ whose key is smaller than $x$ also has key smaller than $x$, which implies that $L(x)=y$. If $x$ is the left child of $y$, then the set of ancestors of $N_y$ having key smaller than $x$ is the same as that of $y$, so $L(x)=L(y)$. Similarly for $R(x)$.
\end{proof}

In addition to efficient encoding/decoding, we also show that (perhaps surprisingly) our new encoding scheme can be efficiently maintained \emph{under updates}. Recall that during each update, the structure of the tree is modified by performing rotations, as described in \cref{fig:avl-left-rotate}.

\begin{figure}[H]
    \centering
    \begin{tikzpicture}[
        scale=0.88,
        every node/.append style={transform shape},
        font=\footnotesize,
        >=Stealth,
        nd/.style={circle, draw, minimum size=7.5mm, inner sep=0pt},
        tri/.style={
            isosceles triangle,
            isosceles triangle apex angle=58,
            draw,
            minimum width=11.5mm,
            minimum height=10.5mm,
            inner sep=0pt,
            shape border rotate=90,
            anchor=north,
        },
    ]
        \begin{scope}[shift={(-4.35,0)}]
            \node[nd] (xb) at (0,0) {$x$};
            \node[tri] (Ab) at (-1.35,-1.25) {$A$};
            \node[nd] (yb) at (1.35,-1.25) {$y$};
            \node[tri] (Bb) at (0.4,-2.55) {$B$};
            \node[tri] (Cb) at (2.3,-2.55) {$C$};
            \draw (xb) -- (Ab.north);
            \draw (xb) -- (yb);
            \draw (yb) -- (Bb.north);
            \draw (yb) -- (Cb.north);
        \end{scope}
        \node[scale=2.5] at (0,-1.68) {$\Longrightarrow$};
        \begin{scope}[shift={(4.35,0)}]
            \node[nd] (ya) at (0,0) {$y$};
            \node[nd] (xa) at (-1.35,-1.25) {$x$};
            \node[tri] (Aa) at (-2.35,-2.55) {$A$};
            \node[tri] (Ba) at (-0.35,-2.55) {$B$};
            \node[tri] (Ca) at (1.35,-1.25) {$C$};
            \draw (ya) -- (xa);
            \draw (ya) -- (Ca.north);
            \draw (xa) -- (Aa.north);
            \draw (xa) -- (Ba.north);
        \end{scope}
    \end{tikzpicture}
    \caption{Left rotation in a binary search tree: $y$ becomes the new root of the subtree, the middle subtree $B$ moves from the left child of $y$ to the right child of $x$. Subtrees $A$, $B$, and $C$ keep their internal structure.}
    \label{fig:avl-left-rotate}
\end{figure}

A rotation can change the set of ancestors of many nodes. However, for all nodes but $x,y$, these changes do not affect their $L,R$ values. Indeed, after performing the left rotation, nodes in $A$ gain $y$ as a new ancestor, but $y>x$ and $x$ was already an ancestor larger than every key in~$A$, so the new ancestor $y$ cannot change the $R$ value of any node in $A$. Similarly for $C$. Finally, nodes in $B$ still have both $x$ and $y$ as ancestors, so their $L$ and $R$ values are unchanged. Thus, only $x$ and $y$ themselves need their encodings recomputed after the rotation. Since each rotation only causes $O(1)$ nodes to change their $L,R$ values, the total time needed to perform an update is asymptotically the same as in a classical AVL tree (i.e., $O(\log |S|)$). This concludes the proof of \cref{lem:binary-tree}.

\section{Difference-Encoded B-trees on Tries}

\label{sec:trie-plus-btree}

If we replace \cref{lem:difference-encode-array} with the previous binary tree solution \cref{lem:binary-tree}, we can improve the time cost of \cref{thm:main-warmup} to $O(\log \varepsilon^{-1})$, which is much better than $O(\varepsilon^{-1})$ but still not optimal. In this section, we further improve \cref{lem:binary-tree} and achieve optimal query time for small key sets (at the cost of only achieving an expected space bound). The main result of this section is as follows.

\begin{lemma}
    \label{lem:trie-plus-btree}    
    Let $U$ be an integer and let $\Omega(\log\log U/\log U)\le \varepsilon<1/4$ be a parameter. Let $S\subset [U]$ be a set whose size is always at most $\varepsilon^{-1}$. There exists a legitimate data structure in the variable-length word RAM model with word size $w=100\log U$ that maintains the set $S$. In $O(\log \varepsilon^{-1}/\log\log\varepsilon^{-1})$ worst-case time, the data structure supports the following operations:
    \begin{itemize}
        \item \textbf{Insertion}: Insert a key $x$ to $S$.
        \item \textbf{Deletion}: Delete a key $x$ from $S$.
        \item \textbf{Predecessor/Successor}: Given $x\in [U]$, output the predecessor or successor of $x$ in $S$ (or $\bot$ if it does not exist).
    \end{itemize}
    The data structure also supports an operation called $\textbf{deallocate}$, that deactivates every active word in $O(|S|)$ time. The data structure has address limit $U^3$. It uses
    \begin{align*}
        \gap(S)+\lceil\log U\rceil+O(|S|\log\log U)
    \end{align*}bits of space in expectation, and uses at most $O(|S|\log U)$ bits of space in the worst case. The data structure assumes access to lookup tables and hash functions of total size at most $U^{\delta}$ bits, where $\delta>0$ is an arbitrary constant parameter.
\end{lemma}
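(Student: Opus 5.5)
The plan is to replace the binary search tree of \cref{lem:binary-tree} by a B-tree of fanout $B=\Theta(\log\varepsilon^{-1})$. Since $|S|\le\varepsilon^{-1}$, its depth is $O(\log_B\varepsilon^{-1})=O(\log\varepsilon^{-1}/\log\log\varepsilon^{-1})$. It then suffices to process each B-tree node in $O(1)$ time, in the spirit of fusion nodes.

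\textbf{Node layout.} Each B-tree node will be split into two parts.
\begin{itemize}
\item A \emph{skeleton word}. It stores the shape of the compressed binary trie (Patricia trie) on the node's at most $B$ keys, including the branching bit positions, which take $O(\log\log U)$ bits each. The skeleton occupies $O(B\log\log U)=O(\log\varepsilon^{-1}\log\log U)$ bits. This fits in one word because $\varepsilon=\Omega(\log\log U/\log U)$ gives $\log\varepsilon^{-1}\le\log\log U$ and hence $O(\log\varepsilon^{-1}\log\log U)\le O((\log\log U)^2)=o(w)$; a stronger lower bound on $\varepsilon$ is not needed.
\item The keys themselves. Each key is stored difference-encoded in its own word.
\end{itemize}
Words are addressed by (parent key, child index) exactly as in the pointer-free implementation of the previous section. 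The address limit becomes $U^3$, since an address records a key, an index, and a slot.

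\textbf{Search.} A search at a node first does a blind trie descent on the skeleton using a lookup table of size $U^{\delta}$; this is a standard fusion-tree trick. The descent identifies one candidate key. We then decode that single key, compare it with the query to find the first mismatching bit, and take a second table step to obtain the correct rank. Hence only $O(1)$ keys are decoded per node, and each level costs $O(1)$.

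\textbf{Encoding and space accounting.} Each key $x$ is encoded by \cref{lem:diff-encode-interval} relative to $L(x)$ and $R(x)$. Here $L(x)$ and $R(x)$ are the nearest smaller and larger keys among all keys already known when $x$ is decoded: the keys in ancestor nodes, together with the keys of the same node that are decoded earlier in a fixed order, namely the order of a binary tree laid over the node's keys. Flattening every node into such a binary tree turns the whole structure into an ordinary binary search tree. \cref{lem:induction-binary-tree} then gives total key cost $\gap(S)+\lceil\log U\rceil+O(|S|\log\log U)$.

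The catch is that a query decodes only one key per node, yet that key's reference points $L$ and $R$ may be siblings we did not decode. To avoid this, I would reference each key only to information already recoverable from the skeleton and the decoded path. Concretely, the skeleton already fixes a common prefix for each trie subinterval. So I would encode each key relative to the interval cut out by its trie branching point, that is, encode only the bits after its lcp with its neighbor. This makes the per-key cost $\lceil\log U\rceil-\mathrm{lcp}$ plus lower-order terms.

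\textbf{Randomization.} The cost $\lceil\log U\rceil-\mathrm{lcp}$ can be far larger than $\log$ of the gap, for example when two close keys straddle a power-of-two boundary. To repair this, apply a uniformly random shift $x\mapsto x+r \bmod U$ to all keys; this is the source of the expected space bound. Under the shift, for neighbors at distance $d$ we have $\Pr[\lceil\log U\rceil-\mathrm{lcp}\ge \log d+j]\le 2^{-j}$, so the expected cost is $\log d+O(1)$ per key. The worst-case bound $O(|S|\log U)$ is immediate because every word holds $O(\log U)$ bits.

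\textbf{Updates.} Updates follow standard B-tree insertion and deletion with splits and merges. A split or merge changes the skeleton of $O(1)$ nodes, and it changes the reference points only of the keys that move between nodes or into the parent. That is $O(B)$ keys, and with table lookups over packed skeletons this re-encoding should take $O(1)$ per node. Keeping the worst-case time $O(\log\varepsilon^{-1}/\log\log\varepsilon^{-1})$ may need a slightly de-amortized B-tree, though this should be routine at this size.

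\textbf{Main obstacle.} I expect the hard step to be reconciling the two requirements of decoding only $O(1)$ keys per node and encoding each key relative to nearby keys, while keeping the gap-entropy accounting tight. My first attempt will be the trie-relative encoding with the random shift just described. Its expected telescoping sum should be bounded by adapting the leaf-removal induction of \cref{lem:induction-binary-tree}, applied to the compressed trie rather than to the search tree.
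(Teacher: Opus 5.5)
There is a genuine gap, and it sits exactly at the step you flagged as the main obstacle. A fusion-style skeleton of $O(B\log\log U)$ bits records two things: the branching \emph{positions} and the shape of the compressed trie. It does not record the bit \emph{values} on the compressed edges, so ``the skeleton already fixes a common prefix for each trie subinterval'' is false.

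Suppose a key stores only the bits below its branching point. Then decoding it needs the labels of the internal compressed edges above it, and the proposal never says where those labels are stored. Each natural choice fails.
\begin{itemize}
\item \textbf{In the skeleton.} The labels can total $\Theta(B\log U)$ bits: take $B/2$ pairs of adjacent keys spread over $(-1,U)$. The node then no longer fits in $O(1)$ words, and the table-driven descent breaks.
\item \textbf{In the keys' own words.} For example, let each key store the bits after its lcp with its predecessor; this is the assignment whose total is the trie entropy. Then decoding $x_i$ needs the prefix of $x_{i-1}$, which needs that of $x_{i-2}$, and so on. A single decode can read $\Theta(B)$ words, which puts you back to decoding the whole node. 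Nothing in the proposal keeps every root-to-leaf path of the node's trie within $O(1)$ words without duplicating labels.
\item \textbf{Duplicated in every key.} This means encoding each key relative to its node interval $(L(N),R(N))$. Decoding is then $O(1)$, but each key of $N$ pays about $\log(R(N)-L(N))$ instead of the node paying it once. For instance, $B$ consecutive integers near $U/2$ sitting in the root can cost about $B\log U$ bits. A joint encoding as in \cref{lem:diff-encode-larger-interval} costs roughly $\log U$ plus lower-order terms. The allowed slack $O(|S|\log\log U)$ is only $O(\log U)$ here, because $|S|\le\varepsilon^{-1}=O(\log U/\log\log U)$.
\end{itemize}
Separately, a split re-encodes $O(B)$ keys that live in separate words. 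That is not $O(1)$ per node in the worst case.

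The missing idea is to make the keys short, rather than trying to decode one long key out of a node.
\begin{itemize}
\item The paper cuts every key into $B=\Theta(\log\varepsilon^{-1}/\log\log\varepsilon^{-1})$ pieces of $\log U/B$ bits. Equivalently, it cuts the binary trie of $S$ into components of depth $\log U/B$.
\item Each non-degenerate component is stored in a difference-encoded biased B-tree (\cref{lem:difference-encoded-btree}) on its reduced keys from $[U^{1/B}]$, weighted by leaf counts. Runs of degenerate components are stored as trailing paths at non-skippable nodes.
\item A B-tree node then has $O(B)$ separators of $\log U/B$ bits each, so the \emph{entire} node fits in $O(1)$ words. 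It is encoded jointly relative to its parent's separators via \cref{lem:diff-encode-larger-interval} and processed by table lookup in $O(1)$ time. The tension you identified never arises.
\item The leaf-count weights make the per-component costs telescope to $O(B+\log_B\varepsilon^{-1})$.
\item The space is the number of trie edges plus $O(\log\log U)$ per non-skippable node, and there are $O(|S|)$ such nodes.
\end{itemize}
Only the final step of your plan coincides with the paper: the random shift that converts trie entropy into $\gap(S)+\lceil\log U\rceil+O(|S|)$ in expectation.
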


\subsection{The Intuition}

If the keys in $S$ are guaranteed to be small (e.g., at most $U^{1/B}$ for some $B$), then we can improve the time cost for maintaining $S$ by replacing the binary tree with a B-tree, where each node stores $B$ keys. This B-tree can be difference-encoded in much the same way as the binary tree in \cref{sec:binary-tree}, giving us a data structure with access time $O(\log\varepsilon^{-1}/\log B)$ (since $|S|$ is always at most $\varepsilon^{-1}$).

This assumption is, of course, not true in general. Surprisingly, the ability to process small keys is useful for solving the general case. In the actual data structure, we set $B=O(\log\varepsilon^{-1}/\log\log\varepsilon^{-1})$ and partition the binary representation of every key evenly into $B$ parts, then use a collection of B-trees to maintain the information of each part. More precisely, for each $1\le i\le B$ and every possible prefix $s\in \{0,1\}^{(\log U/B)\cdot (i-1)}$, we build a B-tree to maintain the $i$-th parts of all the keys in $S$ whose first $i-1$ parts agree with $s$.

A membership query on key $x$ is answered by first querying if any key in $S$ agrees with $x$ on the first part. If true, we then query if any key in $S$ agrees with $x$ on both the first and second parts, and so on. Each of these $B$ queries is answered by querying one of the B-trees.

\paragraph{Time efficiency.} Let $P_i$ denote the number of distinct prefixes $s\in \{0,1\}^{(\log U/B)\cdot (i-1)}$ that appear in $S$. We have $0<P_1\le\dots\le P_{B+1}=|S|$. The time cost of the $i$-th query is, on average over keys in $S$, at most $O(\log_B(P_{i+1}/P_i)+1)$ ($P_{i+1}$ is the total number of keys, and $P_i$ is the number of B-trees). This term telescopes over $i$, giving a total average time cost of $O(\log_B |S|+B)=O(\log\varepsilon^{-1}/\log\log\varepsilon^{-1})$. This is not a worst-case bound because some B-trees might contain more keys than others. To handle this, we use biased B-trees where it is faster to access keys that are ``more important''.

\paragraph{Space efficiency.} To see that it is space-efficient to store this collection of B-trees, we relate each B-tree with a subtree on the trie corresponding to $S$. The space cost of each B-tree is then bounded in terms of the number of edges in the subtree it corresponds to. This allows us to compare the total space usage with the total number of edges in the trie, which, as we will show, is close to $\gap(S)$ (once you randomly shift the keys).

\subsection{Difference-Encoded Biased B-Tree}

We first introduce the biased B-tree construction for maintaining a set of short keys.

\begin{lemma}
    \label{lem:difference-encoded-btree}
    Let $U$ and $1\le B\le \log U$ be integers. Let $U'=2^{\lceil(\log U)/B\rceil}$. There exists a legitimate data structure in the variable-length word RAM model with word size $w=100\log U$ that maintains a set $S\subset [U']$, where each key $x\in S$ is associated with a weight $1\le w_x\le \poly\log U$. The data structure supports the following operations (let $W=\sum_{x\in S}w_x$ denote the total weight of $S$ prior to the operation):
    \begin{itemize}
        \item \textbf{Insertion}: Insert a key $x$ to $S$, whose weight is $w_x$. This operation takes $O(1+\log_B((W+w_x)/w_x))$ time.
        \item \textbf{Deletion}: Delete a key $x$ from $S$. This operation takes $O(1+\log_B(W/w_x))$ time.
        \item \textbf{Predecessor/Successor}: Given $x$, output the predecessor or successor of $x$ in $S$ (or $\bot$ if it does not exist). If the return value is $\bot$, this operation takes $O(1)$ time. Otherwise, letting $y$ denote the return value, this operation takes $O(1+\log_B(W/w_y))$ time.
    \end{itemize}
    The data structure also supports an operation called $\textbf{deallocate}$, that deactivates every active word in $O(|S|)$ time. The data structure has address limit $U$. It uses
    \begin{align}
        \label{equ:btree}
        \gap(S)+\lceil \log U'\rceil+O(|S|(\log\log U+\log|S|))
    \end{align}bits of space. The data structure assumes access to lookup tables and hash functions of total size at most $U^{\delta}$ bits, where $\delta>0$ is an arbitrary parameter.
\end{lemma}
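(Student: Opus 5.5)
The plan is to generalize the difference-encoded binary search tree of \cref{lem:binary-tree} to a biased B-tree, and to use lookup tables to process a single node in $O(1)$ time. The first step is to fix a biased B-tree with branching factor $\Theta(B)$ whose access cost is $O(1+\log_B(W/w_x))$ for an item of weight $w_x$. A natural candidate is a B-ary analogue of biased search trees, or weight-balanced B-trees in which each item sits at depth $O(1+\log_B(W/w_x))$. Such a structure is maintained by $O(1+\log_B(W/w_x))$ local restructurings per update, namely splits, merges and multi-way rotations, each touching $O(1)$ nodes. Returning $\bot$ in $O(1)$ time is handled by caching the minimum and maximum of $S$ at the root.

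Because keys lie in $[U']$ with $\lceil\log U'\rceil\approx \log U/B$, each node holding up to $B$ keys fits in $O(\log U)$ bits. We then store each node in one variable-length word. The node's address is derived from its parent's key range and child index, so no pointers are needed, as in the AVL implementation.

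For the encoding, every node $v$ inherits an interval $(L(v),R(v))$ from its ancestors, defined as the nearest separating keys among all ancestors. The keys $y_1<\dots<y_k$ stored in $v$ are encoded by applying \cref{lem:diff-encode-larger-interval} to $\enc_{L(v),R(v)}(\{y_1,\dots,y_k\})$. The space analysis is then the B-tree analogue of \cref{lem:induction-binary-tree}. Remove a leaf node. Its interval is exactly $(x_{l-1},x_{r+1})$ in sorted order, by the same least-common-ancestor argument. Its contribution therefore equals $\gap(x_{l-1},\dots,x_{r+1})-\gap(x_{l-1},x_{r+1})$, which telescopes to $\gap(\{-1\}\cup S\cup\{U'\})-\gap(-1,U')$. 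The $O(\sum\log\gap)$ overhead in \cref{lem:diff-encode-larger-interval}, together with $O(\log\log U)$ per active word and $O(\log|S|)$ bits of per-node weight and balance metadata (which could be replaced by rank information), yields \eqref{equ:btree}. Note that weights are at most $\poly\log U$, so their logarithms are $O(\log\log U)$.

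For time, $L,R$ of a child are computed from the parent's $L,R$ and the child index in $O(1)$ time. A node is decoded, searched, and re-encoded in $O(1)$ time through a lookup table indexed by the $O(\log U)$-bit content. To keep the table within $U^\delta$ bits, I would either choose $B$ so that nodes are at most $(\delta/3)\log U$ bits, or split each node into $O(1/\delta)$ short pieces as in \cref{lem:query-encodings}. The step I expect to be hardest is showing that a restructuring operation changes the intervals $(L,R)$ of only $O(1)$ nodes, namely the nodes directly involved. The rotation argument from \cref{sec:binary-tree} should extend, because moved subtrees keep their nearest separating ancestors. However, splits and merges in a biased B-tree move keys between levels, and I would need to verify that every node below the affected ones keeps the same nearest bounding ancestor key. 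If this fails, I would fall back to restricting the structure to operations expressible as sequences of binary rotations on an underlying binary tree, and group that tree into B-ary nodes.
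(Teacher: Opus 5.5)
Your proposal follows essentially the same route as the paper. It uses a node-oriented biased B-tree in the style of Feigenbaum--Tarjan with one variable-length word per node addressed through the parent's separators, encodes each node's keys with \cref{lem:diff-encode-larger-interval} relative to its bounding interval, and gets the space bound from the leaf-removal telescoping of \cref{lem:induction-binary-tree}. The step you flag as hardest needs no fallback: every key is stored exactly once and a node's subtree holds exactly $S\cap(L(v),R(v))$, so the pair $(L(v),R(v))$ is determined by the subtree's key set (it is the predecessor and successor, in $S\cup\{-1,U'\}$, of the subtree's extreme keys); hence splits, merges and rotations that move whole subtrees never change it, which is why the paper can simply read $L(N),R(N)$ off the adjacent separators in the parent.
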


\begin{proof}
    This proof is very similar to the proof of \cref{lem:binary-tree}, so we assume that the reader is familiar with \cref{sec:binary-tree}. Similarly to \cref{lem:binary-tree}, we first describe a solution to \cref{lem:difference-encoded-btree} that is time-efficient but not space-efficient, then apply the difference-encoding scheme of \cref{lem:diff-encode-larger-interval} to improve its space usage.

    \paragraph{Biased B-tree in the variable-length word RAM model.} If we loosen the space requirement to $O(|S|\log U)$ bits, then the biased $a,b$-tree of \cite{feigenbaum1983biased} satisfies our time requirement. We formally describe a modified version of the construction in \cite{feigenbaum1983biased} in the following.

    In a biased B-tree, each node $N$ contains as its \defn{content} the following information:
    \begin{itemize}
        \item A set of $m=O(B)$ keys called \defn{separators}, denoted as $y_1<\dots<y_m$, along with their weights. It is guaranteed that $y_1$ (resp.~$y_m$) is the smallest (resp.~largest) key in the subtree of $N$.
        \item The node $N$ has $m-1$ children, where the $i$-th is responsible for maintaining the set of keys in the interval $(y_{i-1},y_i)$. The content stores the total weight of each child. There is no need to store the pointers to the children, as we will explain below.
        
        It is possible that some $(y_{i-1},y_i)$ does not contain any key in $S$. In this case, we will build an \emph{empty} node for this interval, whose content simply states that the node is empty.
    \end{itemize}
    When storing the contents in the variable-length word RAM model with word size $w=100\log U$, we use one word for each node, where the address of node $N$ is defined as the separator in the parent of $N$ that immediately precedes the interval governed by $N$ (if $N$ is the root, then its address is $-1$). Similarly to the binary tree construction, this definition of address is the reason that we do not need to store pointers to the children of a node.

    It is guaranteed that the total weight of a child of $N$ is at most $O(1/B)$ times the total weight of $N$ itself. This means that a key $x$ must be stored in a node whose depth is at most $O(1+\log_B(W/w_x))$, which implies the desired time bounds on the operations. Also, the minimum and maximum values in a node are always directly stored in the node itself, which explains why it only takes $O(1)$ time to report that a predecessor or successor does not exist.

    Also, the content of each node is a bit string of length $O(\log U)$, so we can perform predecessor/successor queries and other operations on the set of separators in $O(1)$ time when given access to a lookup table of size $U^{\delta}$. The details are omitted, as they are similar to that of a standard B-tree.

    Finally, the space cost consists of the cost of encoding the separators, the cost of storing the weights of the children and separators, and the overhead for using the variable-length word RAM model. Note that both the number of nodes and the total number of separators are $O(|S|)$. Therefore, apart from the cost of encoding the separators, the cost of storing the remaining information is at most $O(|S|(\log\log U+\log|S|))$ bits.

    \paragraph{Applying Difference Encoding.} For each node $N$, let $L(N),R(N)$ denote the separators in the parent of $N$ that immediately precedes and succeeds the interval governed by $N$ (if $N$ is the root, we define $L(N)=-1$ and $R(N)=U$). Instead of directly storing the separators $y_1,\dots,y_m$ of $N$, we store the encoding
    \begin{align*}
        \enc_{L(N),R(N)}(y_1,\dots,y_m)
    \end{align*}
    using \cref{lem:diff-encode-larger-interval}. The operations retain their asymptotic time cost under the new encoding, because we can decode the set of separators of a node in $O(1)$ time whenever we access the node.

    \paragraph{Space efficiency.} Similarly to the proof of \cref{lem:binary-tree}, we can show by induction that when using the above encoding, our B-tree uses
    \begin{align*}
        \gap(S)+\lceil \log U'\rceil+O(|S|(\log\log U+\log|S|))
    \end{align*}
    bits of space in total. We omit the details, as they are the same as \cref{lem:induction-binary-tree}.
\end{proof}

\subsection{The Trie Entropy}

In the proof of \cref{lem:trie-plus-btree}, we use the difference-encoded biased B-trees to maintain the trie corresponding to the set $S$. We first introduce some tools for analyzing the trie.

We define the \defn{trie entropy} (also called the trie measure in \cite{gupta2007compressed}) of a set of keys, which is closely related to the gap entropy. Given a non-empty subset of $[U]$ denoted as $S=\{x_1<x_2<\dots<x_{|S|}\}\subset[U]$, the trie entropy of $S$ is defined as the number of edges in the binary trie corresponding to $S$.

Formally, the trie entropy of $S$ is defined as
\begin{align}
    \label{equ:trie_entropy}
    \trie_U(S)\defeq\lceil\log U\rceil+\sum_{i=2}^{|S|}\lceil\log((x_i\oplus x_{i-1})+1)\rceil.
\end{align}
Here $\oplus$ denotes the bitwise exclusive or. The definition of $\trie$ depends on the universe $U$, and we omit the subscript when it is clear from context.

To see that \cref{equ:trie_entropy} is indeed the number of edges in the trie: When we only have one key, the number of edges in the trie is $\lceil\log U\rceil$. Then, when a key $x_i$ is added to the trie, we find the most significant bit on which $\text{bin}(x_i)$ and $\text{bin}(x_{i-1})$ differ, and create a path starting from that bit to add $x_i$ as a leaf of the trie. The length of this path is thus exactly $\lceil\log((x_i\oplus x_{i-1})+1)\rceil$ bits.

\paragraph{Relation between the gap and trie entropies.} We show that the gap and trie entropies are (in a certain sense) almost the same size.

For each~$i$, one always has
\begin{align}
    \label{equ:gap_atmost_trie_single_i}
    \lceil\log((x_i-x_{i-1})+1)\rceil\le \lceil\log((x_i\oplus x_{i-1})+1)\rceil.
\end{align}
Intuitively, this is because when $x_i-x_{i-1}\ge 2^k$ for some integer $k$, we must have that $\lfloor x_i/2^k\rfloor\ne \lfloor x_{i-1}/2^k\rfloor$, or equivalently, $\lfloor x_i/2^k\rfloor\oplus \lfloor x_{i-1}/2^k\rfloor\ge 1$, meaning that $x_i\oplus x_{i-1}\ge 2^k$.

Summing \eqref{equ:gap_atmost_trie_single_i} for every~$i$ gives
\begin{align}
    \label{equ:gap_atmost_trie}
    \gap(S)+\lceil\log U\rceil\le \trie_U(S)+|S|,
\end{align}so the gap entropy is smaller than the trie entropy (up to an $|S|$ term).

However, the other direction of this inequality does not hold in general: For instance, when $x_{i-1}=2^k-1$ and $x_i=2^k$, we have $ \lceil\log((x_i-x_{i-1})+1)\rceil=1$ but $\lceil\log((x_i\oplus x_{i-1})+1)\rceil=k+1$. That is, the trie entropy may be much larger than the gap entropy.

Fortunately, if we apply a random shift to the keys, then the resulting trie entropy would be much closer to the gap entropy. Formally, given a non-empty subset of $[U]$ denoted as $S=\{x_1<x_2<\dots<x_{|S|}\}\subset[U]$, and letting $\Delta$ denote a uniformly random integer drawn from $[2^{\lceil \log U\rceil}]$, we have that
\begin{align}
    \label{equ:shifted_trie_entropy}
    \E[\trie_{3U}(S+\Delta)]\le \gap(S)+\lceil\log U\rceil+O(|S|),
\end{align}
where $S+\Delta=\{x+\Delta:x\in S\}$ (i.e., the keys in $S+\Delta$ are at most $3U$). Intuitively, after applying a random shift, the aforementioned bad case of $x_{i-1}+\Delta=2^k-1,x_i+\Delta=2^k$ is very unlikely to happen. We refer to Section~2.3 of \cite{gupta2007compressed} for a formal proof of this result. The term $\E[\trie_{3U}(S+\Delta)]$ is later on referred to as the \defn{shifted trie entropy} of $S$.

\subsection{Proof of Lemma \ref{lem:trie-plus-btree}}

\subsubsection{Notations for the Trie}

Let $T$ denote the trie built from the keys in $S$. In the following, we first show a data structure that maintains $S$ using
\begin{align*}
    \trie_{U}(S)+O(|S|(\log\log U+\log|S|))
\end{align*}
bits of space. This is slightly worse than \cref{lem:trie-plus-btree} because the trie entropy could be much larger than the gap entropy. In the end, we will improve this space bound by randomly shifting the keys in $S$.

Let $B=O(\log \varepsilon^{-1}/\log\log \varepsilon^{-1})$ be an integer parameter and assume for the sake of simplicity that $\log U$ is an integer, and that $B$ divides $\log U$. Since $\varepsilon=\Omega(\log\log U/\log U)$, this assumption can be achieved by increasing $\log U$ by at most $O(\log\log U)$, which does not affect the time and space guarantees. 

Let $U'=U^{1/B}$. We say that a node in $T$ is \defn{critical}, if its depth is a multiple of $\log(U')$. In particular, the root is critical because its depth is $0$, and every leaf is critical because their depths are $\log U$.

A \defn{component} of $T$ is defined as a depth-$\log(U')$ subtree rooted at some non-leaf critical node. Observe that every edge belongs to exactly one component, and that the leaves of a component are all critical nodes. The \defn{reduced key} corresponding to a leaf in a component is the bit string corresponding to the path from the root of the component to the said leaf.

The \defn{size} of a component refers to its number of leaves (which must be nonzero by our definition). A component is \defn{degenerate} if its size is $1$, and \defn{non-degenerate} otherwise. For each critical node $x$, its \defn{trailing path} $tp_x$ is a bit string defined as follows: If the component rooted at $x$ is non-degenerate, then $tp_x=\emptyset$. Otherwise, letting $y$ denote the (unique) leaf of the component rooted at $x$, we have $tp_x=v\circ tp_y$. In other words, the trailing path is the concatenation of the degenerate components that follow $x$. We say that a critical node is \defn{non-skippable}, if it is a leaf of a non-degenerate component. In addition, the root is also non-skippable.

The \defn{weight} of a node in $T$ is defined as the number of leaves in its subtree.

\begin{figure}[H]
    \centering
    \begin{tikzpicture}[
        font=\footnotesize,
        x=1cm,
        y=1cm,
        >=Stealth,
        nd/.style={circle, draw, minimum size=7pt, inner sep=0pt},
        compnd/.style={draw=black, double, line width=0.35pt, double distance=0.9pt, line join=round},
        compdeg/.style={ellipse, draw=black, line width=0.7pt, inner xsep=4pt, inner ysep=1pt},
        bitlabel/.style={midway, fill=white, inner sep=1pt, font=\footnotesize\bfseries},
        levellabel/.style={font=\scriptsize, anchor=east},
        legend/.style={font=\scriptsize, anchor=west},
    ]
        \foreach \y/\d in {0/0,-1.7/2,-3.4/4,-5.1/6} {
            \draw[gray!35, double, line width=0.35pt, double distance=0.9pt] (-4.3,\y) -- (4.3,\y);
            \node[levellabel] at (-4.45,\y) {depth $\d$};
        }

        \draw[compnd] (-3.45,-2.15) -- (3.45,-2.15)
            .. controls (2.1,1.35) and (-2.1,1.35) .. cycle;
        \draw[compnd] (-4.0,-3.9) -- (-1.4,-3.9)
            .. controls (-1.9,-0.35) and (-3.5,-0.35) .. cycle;
        \node[compdeg, fit={(-0.1,-1.6) (0.1,-3.5)}] {};
        \node[compdeg, fit={(2.6,-1.6) (2.8,-3.5)}] {};
        \node[compdeg, fit={(-3.5,-3.3) (-3.3,-5.2)}] {};
        \node[compdeg, fit={(-2.1,-3.3) (-1.9,-5.2)}] {};
        \node[compdeg, fit={(-0.1,-3.3) (0.1,-5.2)}] {};
        \draw[compnd] (1.45,-5.5) -- (3.95,-5.5)
            .. controls (3.45,-2.25) and (1.95,-2.25) .. cycle;

        \node[nd] (r) at (0,0) {};

        \node[nd] (n0) at (-1.35,-0.85) {};
        \node[nd] (n1) at (1.35,-0.85) {};

        \node[nd] (n00) at (-2.7,-1.7) {};
        \node[nd] (n01) at (0,-1.7) {};
        \node[nd] (n10) at (2.7,-1.7) {};

        \node[nd] (n000) at (-2.7,-2.55) {};
        \node[nd] (n010) at (0,-2.55) {};
        \node[nd] (n101) at (2.7,-2.55) {};

        \node[nd] (n0000) at (-3.4,-3.4) {};
        \node[nd] (n0001) at (-2.0,-3.4) {};
        \node[nd] (n0101) at (0,-3.4) {};
        \node[nd] (n1010) at (2.7,-3.4) {};

        \node[nd] (n00000) at (-3.4,-4.25) {};
        \node[nd] (n00011) at (-2.0,-4.25) {};
        \node[nd] (n01010) at (0,-4.25) {};
        \node[nd] (n10100) at (2.7,-4.25) {};

        \node[nd] (k000000) at (-3.4,-5.1) {};
        \node[nd] (k000111) at (-2.0,-5.1) {};
        \node[nd] (k010101) at (0,-5.1) {};
        \node[nd] (k101000) at (2.0,-5.1) {};
        \node[nd] (k101001) at (3.4,-5.1) {};

        \draw (r) -- (n0) node[bitlabel, above] {\textbf{0}};
        \draw (r) -- (n1) node[bitlabel, above] {\textbf{1}};
        \draw (n0) -- (n00) node[bitlabel, above] {\textbf{0}};
        \draw (n0) -- (n01) node[bitlabel, above] {\textbf{1}};
        \draw (n1) -- (n10) node[bitlabel, above] {\textbf{0}};

        \draw (n00) -- (n000) node[bitlabel, above] {\textbf{0}};
        \draw (n000) -- (n0000) node[bitlabel, above] {\textbf{0}};
        \draw (n000) -- (n0001) node[bitlabel, above] {\textbf{1}};
        \draw (n01) -- (n010) node[bitlabel, above] {\textbf{0}};
        \draw (n010) -- (n0101) node[bitlabel, above] {\textbf{1}};
        \draw (n10) -- (n101) node[bitlabel, above] {\textbf{1}};
        \draw (n101) -- (n1010) node[bitlabel, above] {\textbf{0}};

        \draw (n0000) -- (n00000) node[bitlabel, above] {\textbf{0}};
        \draw (n00000) -- (k000000) node[bitlabel, above] {\textbf{0}};
        \draw (n0001) -- (n00011) node[bitlabel, above] {\textbf{1}};
        \draw (n00011) -- (k000111) node[bitlabel, above] {\textbf{1}};
        \draw (n0101) -- (n01010) node[bitlabel, above] {\textbf{0}};
        \draw (n01010) -- (k010101) node[bitlabel, above] {\textbf{1}};
        \draw (n1010) -- (n10100) node[bitlabel, above] {\textbf{0}};
        \draw (n10100) -- (k101000) node[bitlabel, above] {\textbf{0}};
        \draw (n10100) -- (k101001) node[bitlabel, above] {\textbf{1}};

        \node[compnd, ellipse, minimum width=5mm, minimum height=3mm] at (-1.0,-6.0) {};
        \node[legend] at (-0.65,-6.0) {non-degenerate};
        \node[compdeg, minimum width=5mm, minimum height=3mm] at (2.0,-6.0) {};
        \node[legend] at (2.35,-6.0) {degenerate};

    \end{tikzpicture}
    \caption{Cutting a trie into components when $\log U=6$ and $B=3$.}
    \label{fig:trie-cut-example}
\end{figure}

\subsubsection{The Data Structure}

We use biased B-trees to store the information of the components. Formally, for each \emph{non-degenerate} component, we apply \cref{lem:difference-encoded-btree} to maintain its set of reduced keys, where the weight of each reduced key is simply the weight of its corresponding leaf (as a node in $T$). As an example, the biased B-tree of the top component in \cref{fig:trie-cut-example} stores the keys $00,01,10$, and their weights are $2, 1, 2$, respectively. These B-trees are combined using \cref{clm:virtual_address_combine}.

In addition to the B-trees, we use $O(U)$ separate words to store the trailing paths of the \emph{non-skippable} nodes, where we pre-assign each word to a critical node and store the trailing path of the node in that word (if this node is non-skippable). If the corresponding critical node does not exist or is not non-skippable, then the word is idle. Note that, even when the trailing path of a non-skippable node is empty, we still store $O(\log\log U)$ bits of information in its word (instead of letting the word be idle).

\paragraph{Answering queries.} We answer the queries by traversing the trie $T$. Take predecessor queries as an example. We start with the component $\mathcal{C}$ containing the root, and look for the predecessor of $\text{bin}(x)_{1\dots \log(U')}$ among the set of reduced keys. If $\mathcal{C}$ is a degenerate component, then we simply have to compare the unique reduced key with $\text{bin}(x)_{1\dots \log(U')}$. Otherwise, we query the biased B-tree for the predecessor of $\text{bin}(x)_{1\dots \log(U')}$. After finding the predecessor (say it is $y$), we recurse to the component rooted at $y$, and so on.

One thing to keep in mind is that, the information of a degenerate component is stored in the (unique) non-skippable node whose trailing path contains the component. This node is always easy to find, as we perform the query in a top-down fashion.

\paragraph{Handling updates.} Take insertions as an example. When inserting key $x$, we update all the components on the path from the root to $x$ (i.e., the components that are rooted at $\text{bin}(x)_{1\dots \log(U')\cdot i}$ for some $i$). If the component $\mathcal{C}$ rooted at $\text{bin}(x)_{1\dots \log(U')\cdot i}$ did not exist, then we create it, and update the trailing path of its first non-skippable ancestor. Otherwise, if $\mathcal{C}$ did not contain $\text{bin}(x)_{\log(U')\cdot i+1\dots \log(U')\cdot (i+1)}$ as a reduced key, then:
\begin{itemize}
    \item If $\mathcal{C}$ was previously degenerate, we need to update the trailing path of its first non-skippable ancestor, and build a B-tree for $\mathcal{C}$.
    \item Otherwise, we simply add $\text{bin}(x)_{\log(U')\cdot i+1\dots \log(U')\cdot (i+1)}$ into the B-tree of $\mathcal{C}$.
\end{itemize}

Finally, if $\mathcal{C}$ was non-degenerate and \emph{did} contain $\text{bin}(x)_{\log(U')\cdot i+1\dots \log(U')\cdot (i+1)}$, then we need to update the weight of $\text{bin}(x)_{\log(U')\cdot i+1\dots \log(U')\cdot (i+1)}$ in the B-tree. We achieve this by deleting $\text{bin}(x)_{\log(U')\cdot i+1\dots \log(U')\cdot (i+1)}$ from the B-tree, then re-inserting it with the correct weight.

\subsubsection{Performance Analysis}

\paragraph{Time efficiency.} When performing each operation, we will access $B$ components. These components all lie on the path from the root to some node $y$: For insertion and deletion, $y=x$. For predecessor and successor queries, $y$ is the return value, which we assume is not $\bot$ as otherwise the time cost is trivially $O(1)$.

For each $1\le i\le B$, let $W_i$ denote the total weight of the $i$-th accessed component, excluding $w_y$. In particular, let $W_{B+1}=0$. We have that $0=W_{B+1}\le \dots\le W_1\le O(\varepsilon^{-1})$. For each $1\le i\le B$, the time cost for accessing the $i$-th component can be bounded as follows:
\begin{itemize}
    \item If the $i$-th component is degenerate or did not exist prior to the operation, the time cost is $O(1)$.
    \item If the current operation is a predecessor or successor query, then the time cost is $O(1+\log_B((W_i+w_y)/(W_{i+1}+w_y)))$.
    \item Otherwise, if the current operation is an insertion, then we need $O(1+\log_B((W_i+w_y)/(W_{i+1}+w_y)))$ time to insert the key $\text{bin}(y)_{\log(U')\cdot i+1\dots \log(U')\cdot (i+1)}$ into the B-tree. Furthermore, if $W_{i+1}\ne 0$, then the $\text{bin}(y)_{\log(U')\cdot i+1\dots \log(U')\cdot (i+1)}$ did exist in the B-tree but had a different weight, so in order to update the weight, we need to first delete the old occurrence of $\text{bin}(y)_{\log(U')\cdot i+1\dots \log(U')\cdot (i+1)}$, which takes $O(1+\log_B(W_i/W_{i+1}))$ time. Deletions are similar.
\end{itemize}
To summarize, the time cost for accessing the $i$-th component is at most
\begin{align*}
    O\bk*{1+\log_B\bk*{\frac{W_i+w_y}{W_{i+1}+w_y}}}+[W_{i+1}\ne 0]\cdot O\bk*{1+\log_B\bk*{\frac{W_i}{W_{i+1}}}}.
\end{align*}
Let $i^*$ denote the smallest $i$ for which $W_i=0$. If $i^*=1$, then the total time is $O(B)$, so we assume that $i^*>1$. In this case, the total time is at most
\begin{align*}
    &O\bk*{\sum_{i=i^*-1}^{B}1+\log_B\bk*{\frac{W_i+w_y}{W_{i+1}+w_y}}}+O\bk*{\sum_{i=1}^{i^*-2}1+\log_B\bk*{\frac{W_i}{W_{i+1}}}} \\
    \le{}&O(B)+O\bk*{\sum_{i=i^*-1}^{B}\log_B\bk*{\frac{W_i+w_y}{W_{i+1}+w_y}}+\sum_{i=1}^{i^*-2}\log_B\bk*{\frac{W_i}{W_{i+1}}}} \\
    \le{}&O(B)+O\bk*{\log_B\bk*{\frac{W_{i^*-1}+w_y}{W_{B+1}+w_y}}+\log_B\bk*{\frac{W_{1}}{W_{i^*-1}}}} \\
    \le{}&O(B)+O(\log_B(\varepsilon^{-1})).
\end{align*}
Since $B=\Theta(\log\varepsilon^{-1}/\log\log\varepsilon^{-1})$, the total time cost is $O(\log\varepsilon^{-1}/\log\log\varepsilon^{-1})$.

\paragraph{Space efficiency.} The space cost consists of the cost of applying \cref{lem:difference-encoded-btree} to store the non-degenerate components (whose cost is given by \eqref{equ:btree}), and the cost of directly storing the trailing paths for the non-skippable nodes. We now analyze these costs one by one.

\begin{itemize}
    \item The cost of using a B-tree to store a non-degenerate component (whose set of reduced keys is $S'$) can be written as
    \begin{align*}
        \trie_{U'}(S')+O(|S'|\log\log U),
    \end{align*}
    because $\gap(S')+\lceil \log U'\rceil\le \trie_{U'}(S')+O(|S'|)$. Note that $\trie_{U'}(S')$ is the number of edges in the non-degenerate component, and that the sum of $|S'|$ over all non-degenerate components is exactly the number of non-skippable nodes. Therefore, the total cost of storing the B-trees equals the total number of edges in the non-degenerate components, plus $O(\log\log U)$ times the number of non-skippable nodes.
    \item Each degenerate component contributes to the trailing path of exactly one non-skippable node, so the cost of storing the trailing paths is the total number of edges that belong to degenerate components, plus $O(\log\log U)$ times the number of non-skippable nodes.
\end{itemize}

In total, the space cost is equal to $\trie_{U}(S)$ (the total number of edges in the entire trie) plus $O(\log\log U)$ times the number of non-skippable nodes. The latter term is bounded by the following claim.

\begin{claim}
    The number of non-skippable nodes in $T$ is at most $O(|S|)$.
\end{claim}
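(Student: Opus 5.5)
Non-skippable nodes are of two kinds: the root, and the critical nodes that are leaves of some non-degenerate component. So it suffices to show that the total number of leaves over all non-degenerate components is $O(|S|)$. The plan is a charging argument on the trie $T$, in which every leaf of a non-degenerate component is charged either to a branching node of $T$ or to a leaf of $T$.

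First, contract $T$ into a "critical tree" $T_c$. Its vertices are the critical nodes of $T$, and each critical node $u$ is joined to the leaves of the component rooted at $u$. Each component therefore becomes a star in $T_c$ with at least one child. A component is non-degenerate exactly when its root has at least two children in $T_c$. The leaves of $T_c$ are the critical nodes at depth $\log U$, i.e.\ the keys of $S$, so $T_c$ has exactly $|S|$ leaves.

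Second, apply the standard fact about rooted trees: if a tree has $\ell$ leaves, then the total number of children of vertices with at least two children is at most $2\ell-2$. Proof sketch: let $\ell$ be the number of leaves, $b$ the number of vertices with at least two children, and $c$ the sum of their child counts. Counting edges gives $c\ge 2b$. Counting edges a second way gives $c+(\text{number of unary vertices})=\ell+b+(\text{number of unary vertices})-1$, so $c=\ell+b-1$. Combining, $b\le \ell-1$ and hence $c\le 2\ell-2$. Alternatively, one can argue directly that each such child lies on a distinct root-to-leaf branch.

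With $\ell=|S|$, the number of leaves of non-degenerate components is at most $2|S|-2$. Adding the root gives at most $2|S|-1=O(|S|)$ non-skippable nodes.

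The only step that needs care is checking that $T_c$ is well defined, meaning every non-leaf critical node roots exactly one component, and that its leaves are precisely the elements of $S$. Both follow because every key has depth $\log U$, which is a multiple of $\log U'$. Given that, the proof is routine.
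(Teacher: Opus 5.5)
Your proposal is correct and follows essentially the same route as the paper. The paper also builds the tree $T'$ on the critical nodes, with each node's parent being its nearest critical ancestor. It then bounds $1+\sum_x [d_x>1]\, d_x \le 1+2\sum_x (d_x-[d_x>0])$ using $\sum_x d_x=|V|-1$; this is exactly your edge-counting identity $c=\ell+b-1$, and it gives the same $2|S|-1$ bound.
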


\begin{proof}
    Recall that $|S|$ is equal to the number of leaves in $T$.
    
    Build a tree $T'$ on the critical nodes in $T$, where the parent of a node is its nearest ancestor in $T$. Let $V$ denote the set of critical nodes, and let $d_x,x\in V$ denote the number of children of $x$ in $T'$. The component rooted at critical node $x$ is non-degenerate if $d_x>1$, and in this case, it contributes $d_x$ to the number of non-skippable nodes. Therefore, the number of non-skippable nodes is (we add an extra $1$ to account for the root)

    \begin{align*}
        &1+\bk*{\sum_{x\in V}[d_x>1]\cdot d_x}\le 1+2\bk*{\sum_{x\in V}d_x-[d_x>0]} \\
        \le{}& O(|V|-1-|V|+\sum_{x\in V}[d_x=0]) \\
        ={}&O(\text{\# of leaves in }T')=O(\text{\# of leaves in }T).
    \end{align*}
    Here we used the fact that $\sum_{x\in V}d_x$ counts the number of edges in the tree, and is equal to $|V|-1$.
\end{proof}

Therefore, our space usage is at most

\begin{align*}
    \trie_{U}(S)+O(|S|\log\log U).
\end{align*}

\paragraph{Applying a random shift.} Let $\Delta$ be sampled uniformly at random in $[2^{\lceil\log U\rceil}]$, and recall that $S+\Delta=\{x+\Delta:x\in S\}$. In the real algorithm, instead of storing $S$, we actually store $S+\Delta$ (as a subset of $[3U]$), so our space cost is

\begin{align*}
    \trie_{3U}(S+\Delta)+O(|S|\log\log U).
\end{align*}

By \cref{equ:shifted_trie_entropy}, its expectation is at most

\begin{align*}
    \gap(S)+\lceil \log U\rceil+O(|S|\log\log U).
\end{align*}

Furthermore, by definition of the trie entropy, the worst case space bound is at most $O(|S|\log U)$, as desired.

\section{Further Improvements Using a Better Distributor}

\label{sec:better-distributor}

Recall that the warm-up data structure \cref{thm:main-warmup} partitions the key set into small chunks, then solves each chunk with a small dictionary and uses a distributor to map the keys into their corresponding chunk. Previously, we optimized the time-space tradeoff of the small dictionaries in \cref{sec:binary-tree,sec:trie-plus-btree}. In this section, we instead shift our focus to optimizing the distributor.

Recall from the preliminaries that a set $S$ of keys can be encoded using
\begin{align*}
    \gap(S)+\lceil\log U\rceil+O((|S|-1)\log\gapbar(S))
\end{align*}
bits of space. The warm-up data structure for $\poly\log U$ many keys, \cref{lem:polylog-dict-warmup}, achieves this space bound (up to an $O(\varepsilon\cdot \gap(S))$ term), but the main result \cref{thm:main-warmup} still contains the suboptimal $O(|S|\log\log U)$ term, which directly comes from applying the distributor. In particular, when $\gap(S)=\Theta(|S|)$ and $\varepsilon=O(1)$, \cref{lem:polylog-dict-warmup} only uses $O(|S|)$ bits of space, while \cref{thm:main-warmup} requires a much larger space of $O(|S|\log\log U)$ bits. In the following, we improve \cref{thm:main-warmup} by presenting a more space-efficient distributor.

\begin{theorem}[Improved Distributor]
    \label{thm:better-distributor}
    Let $U$ be an integer. Let $\Omega(\log\log U/\log U)\le \varepsilon<1/4$ be a parameter. There is a data structure in the variable-length word RAM model with word size $w=100\log U$ that maintains two sets $P,S\subset [U]$, where $0\in P$ always holds, and any two keys $p_1,p_2\in P$ must have $|p_2-p_1|\ge \log U$. For each key $x\in [U]$, its \defn{pivot} is defined as the largest key in $P$ smaller than or equal to $x$ (denoted as $p_x$). Our data structure supports the following operations in amortized expected time:
    \begin{itemize}
        \item \textbf{Insertion/Deletion on $P$}: Insert a key to or delete a key from $P$ in $O(\log U)$ time.

        It is guaranteed that this operation would \emph{not} change the pivot of any key in $S$, and that for each insertion (resp.~deletion), the given key did not exist (resp.~did exist) in $P$.
        \item \textbf{Insertion on $S$ with advice}: Given a key $x\notin S$, \emph{its pivot} $p_x\in P$ and the predecessor and successor of $x$ in $S$, insert $x$ to $S$ in $O(\log\log\varepsilon^{-1})$ time.
        \item \textbf{Deletion on $S$ with advice}: Given a key $x\in S$ and the predecessor and successor of $x$ in $S$, delete $x$ from $S$ in $O(\log\log\varepsilon^{-1})$ time.
        \item \textbf{Pivot}: Given a key $x\in S$, return its pivot $p_x$ in $O(\log\log\varepsilon^{-1})$ time. If the given key is not in $S$, the distributor may return anything.
    \end{itemize}
    The data structure has address limit $U^{10}$. It uses
    \begin{align*}
        O(|P|\cdot \poly\log U+\varepsilon \cdot \gap(S)+|S|\log \gapbar(S))
    \end{align*}
    bits of space in expectation, and uses $O(|P|\cdot \poly\log U+|S|\log U)$ bits of space in the worst case. The data structure assumes access to lookup tables and hash functions of total size at most $U^{\delta}$ bits, where $\delta>0$ is an arbitrary parameter.
\end{theorem}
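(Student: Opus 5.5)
We plan to keep the skeleton of \cref{thm:distributor}: a per-key hint that, together with a per-pivot table of prefixes, lets us recover $p_x$. What changes is how much hint we store per key. In \cref{thm:distributor} the hint was $v_x$, the length of the longest common prefix of $\text{bin}(x)$ and $\text{bin}(p_x)$, which costs $\Theta(\log\log U)$ bits. We instead want to pay only about $\log \gap(x^-,x)+\log\gap(x,x^+)$ bits per key, plus $O(\varepsilon\cdot\gap)$ slack, where $x^-,x^+$ are the predecessor and successor of $x$ in $S$. That is why insertions and deletions are given the neighbours as advice.

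The key observation concerns $x^-$, the predecessor of $x$ in $S$. If $x^-\ge p_x$, then $x$ and $x^-$ share the pivot, and $v_x$ is determined up to the position of the highest bit in which $x$ and $x^-$ differ. After a random shift, as in \eqref{equ:shifted_trie_entropy}, that position is $\log(x-x^-)+O(1)$ in expectation. So it suffices to store a \emph{relative} hint. The plan is to store $d_x\coloneqq \lceil\log U\rceil - v_x$, the depth below the divergence point, and to encode it against the scale $g_x\coloneqq\lceil\log(x-x^-+1)\rceil$.

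There are two cases.
\begin{itemize}
\item If $p_x\le x^-$, the pivot $p_x$ lies below the branching point of $x^-$ and $x$ in the trie (when $p_x<x^-$), or it equals $x^-$. Then $v_x\le \lceil\log U\rceil - g_x + O(1)$, and we store the difference $v_x - (\lceil\log U\rceil - g_x)$ rather than $v_x$ itself. This still needs $\log\log U$ bits in the worst case.
\item To avoid that worst case, we aim to store only one bit, saying whether $p_x\le x^-$, together with a pointer ``same pivot as $x^-$''. The pivot of $x$ is then recovered by following predecessors. That chain would be too long, so we instead route through anchor keys.
\end{itemize}

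We mark each key $x$ as an \emph{anchor} if it is the first key of its pivot's chunk, or with probability roughly $\varepsilon$ otherwise. Each anchor stores a full $O(\log\log U)$-bit hint $v_x$, as in \cref{thm:distributor}; these anchors then cost $O(\varepsilon\gap(S))$ bits. To guarantee that a key reaches an anchor within $O(\log\log\varepsilon^{-1})$ steps we need a different mechanism. For a non-anchor key $x$ we store in a retrieval structure (\cref{rem:retrieval-variable-length}) a variable-length value of $O(\log \gap(x^-,x)+\log\gap(x,x^+))$ bits, chosen as follows.

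Since $x\in S$, we know $x$ exactly at query time. The query question is: which prefix of $\text{bin}(x)$ is also a prefix of some nearby anchor in the same chunk? The nearest anchor $a$ to the left agrees with $x$ on its top bits, down to depth roughly $\lceil\log U\rceil-\log(x-a)$. Because anchors are $\varepsilon$-dense, $\log(x-a)$ is within $O(\log\varepsilon^{-1})$ of $\log$ of the local gaps, so encoding this depth relative to $g_x$ costs $O(\log\gap(x^-,x)+\log\log\varepsilon^{-1})$ bits. We then query a dictionary keyed by prefixes. This dictionary is the prefix table of \cref{thm:distributor}, but built on anchors, at a cost of $O(\log U\cdot\log\log U)$ bits per anchor via a z-fast-trie-like sampling of only $O(\log\log\varepsilon^{-1})$ candidate prefix lengths. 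Binary search over those $\log\varepsilon^{-1}$ candidate lengths gives the $O(\log\log\varepsilon^{-1})$ query time. Once the anchor $a$ is found, we retrieve $v_a$ and hence $p_a=p_x$; the equality $p_a=p_x$ holds since no pivot lies in $(a,x]$, which we verify through the pivot dictionary. Pivot updates cost $O(\log U)$ exactly as before, since the pivot prefix table is unchanged. When $x$ is inserted or deleted, only the hints of $x,x^-,x^+$ change, and these neighbours are given as advice. Anchor marks are random per key, so the expected cost is fine, and rebuilds on bad space handle the high-probability caveats, as in the warm-up.

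The main obstacle is making the anchor prefix dictionary cost only $O(\varepsilon\gap(S))$ bits while still supporting $O(\log\log\varepsilon^{-1})$-time searches. Storing all $\log U$ prefixes per anchor is too expensive unless anchors number $O(\varepsilon\gap(S)/\log^2U)$. I would first try choosing anchors with probability proportional to $\varepsilon\cdot\gap(x^-,x)/\log^2 U$, and charging the remaining distance to hint bits via Jensen, which yields the $|S|\log\gapbar(S)$ term.
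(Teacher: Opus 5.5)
\textbf{There is a genuine gap.} It sits at the foundation of your scheme, not only in the anchor dictionary you flag at the end.

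\textbf{The per-key retrieval overhead.} You store a hint for essentially every key of $S$ in one retrieval structure over the whole universe $[U]$, via \cref{rem:retrieval-variable-length}. That structure costs $\sum_x v_x+O(|S|\log\log(U/|S|))$ bits. The second term is charged per stored key no matter how short the stored values are; the $C_{retr}\log w$ per-word charge in the variable-length model reflects the same cost. For $|S|\le U^{1-\Omega(1)}$ this leaves you with an $|S|\log\log U$ term, which is exactly the term of \cref{thm:distributor} that this theorem is meant to remove. For a concrete failure, take $P=\{0\}$ and let $S$ be a run of $\sqrt U$ consecutive integers. The allowed space is $O(|S|)$, but your hints alone cost $\Theta(|S|\log\log U)$.

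\textbf{What is needed instead, and what the paper does.} Encoding shorter values (relative to $g_x$) cannot help. You must reduce the number of keys held in large-universe retrieval structures, or shrink their universes. The paper does both:
\begin{itemize}
\item It randomly shifts the keys and fixes thresholds $T_1=\min(\lceil\varepsilon^{-1}(\log\log U)^2\rceil,\log(2U))$, $T_{j+1}=\lfloor\sqrt{T_j}\rfloor$, down to $T_k<\log\log U$.
\item For each $x$ it finds the critical level $l_x$: the finest level whose aligned $2^{T_j}$-block contains both $x$ and $p_x$.
\item For $l_x=j<k$, it stores $v_x-(\log(2U)-T_j)$ only for the representative $x-(x\bmod 2^{T_{j+1}})$. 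The entry goes into a retrieval structure private to the level-$j$ block, so both value and overhead are $O(\log T_j)$ bits.
\item The random shift gives $\E|S_T^{(j)}|\le\min(|S|,O(|S|\gapbar(S)/T_{j+1}))$. A case analysis on $\gapbar(S)$ then bounds the total by $O(\varepsilon\gap(S)+|S|\log\gapbar(S))$.
\item Keys with $l_x=k$ need no storage at all, because a block of size $2^{T_k}<\log U$ contains at most one pivot. This is where the hypothesis $|p_2-p_1|\ge\log U$ enters; your proposal never uses it.
\item The predecessor/successor advice serves only to decide whether another key shares $x$'s representative. The $O(\log\log\varepsilon^{-1})$ time is just a scan over the $k$ levels.
\end{itemize}

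\textbf{Problems with the anchor layer itself.} Beyond the obstacle you identify:
\begin{itemize}
\item \emph{Anchor cost.} About $\varepsilon|S|$ anchors, each with an $O(\log\log U)$-bit hint and $\Theta(\log U\log\log U)$ bits of prefix entries, cost $\Theta(\varepsilon|S|\log U\log\log U)$ bits. That is not $O(\varepsilon\gap(S))$ when $\gapbar(S)=O(1)$. Making anchors sparser, as you suggest, destroys the locality your hint encoding depends on.
\item \emph{The locality claim is false.} The distance $x-a$ spans about $\varepsilon^{-1}$ gaps. A single gap of size $U^{1/2}$ inside that window, with $x-x^-=1$, gives $\log(x-a)-g_x=\Theta(\log U)$.
\item \emph{Even your own per-key bound is over budget.} The additive $O(\log\log\varepsilon^{-1})$ bits per key already exceed the $O(1)$-per-key allowance when $\gapbar(S)=O(1)$ and $\varepsilon=\Theta(\log\log U/\log U)$.
\item \emph{Updates.} Inserting or deleting an anchor changes the nearest left anchor of every key up to the next anchor. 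The distributor has no way to enumerate those keys, so your claim that only the hints of $x,x^-,x^+$ change does not hold.
\end{itemize}
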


\paragraph{Differences between the statements of \cref{thm:better-distributor} and \cref{thm:distributor}.} \cref{thm:better-distributor} achieves the correct space bound, at the cost of adding more constraints compared to \cref{thm:distributor}. However, these new constraints are not substantial: Despite all these differences, \cref{thm:better-distributor} can still be used in the same way as \cref{thm:distributor}. We briefly discuss the differences in the following.

\begin{itemize}
    \item The time and space costs of maintaining the pivots increases. This is fine because later in the main result, we will partition the key set into chunks of $\Theta(\log^{C_{\text{chunk}}}U)$ keys for a sufficiently large $C_{\text{chunk}}$. As a result, we can allow an extra time and space cost of $\poly\log U$ per pivot.
    \item The time and space bounds of \cref{thm:better-distributor} depend on the parameter $\varepsilon$, whereas \cref{thm:distributor} does not. When proving the main result, we will take the parameter $\varepsilon$ to be the same as the parameter in the overall data structure. Under this instantiation, the time and space bounds of \cref{thm:better-distributor} are acceptable.
    \item \cref{thm:better-distributor} only achieves an expected space bound. We later overcome this by first building a dictionary similar to the warm-up that has expected space guarantee, then in the actual algorithm partition the key set into $\poly\log U$ subsets (maintained in the same way as the chunks), and maintain each subset using the expected-space dictionary. By a simple Chernoff bound, the overall space bound will hold with high probability.
    \item When inserting or deleting a key in $S$, \cref{thm:better-distributor} requires more advice than \cref{thm:distributor}. In particular, we need to provide the distributor with the predecessor and successor of $x$ in $S$. Note that when accessing the distributor, we always know which chunk contains $x$ (this is discussed in more detail later in \cref{sec:final}).
    
    Since the small dictionaries (such as \cref{lem:trie-plus-btree}) support predecessor and successor queries, and we know which chunk contains $x$, the advice can be computed by querying the small dictionaries that correspond to the chunk containing $x$ and its neighboring chunks.
\end{itemize}

\subsection{A Warm-Up to Proving Theorem \ref{thm:better-distributor}}

Recall that the $O(|S|\log\log U)$ term in \cref{thm:distributor} is the cost of using a retrieval data structure to store the values $v_x$ for every $x\in S$, where $v_x$ denotes the length of the largest common prefix of the binary representations of $x$ and $p_x$, and is at most $\lceil\log U\rceil$. To improve this bound, we observe that our goal of $O(|S|\log \gapbar(S))$ is only better than $O(|S|\log\log U)$ when $\gapbar(S)$ is small, so what we need is a way to store $v_x$ more efficiently for small $\gapbar(S)$. In this section, as a warm-up, we present a distributor that uses
\begin{align*}
    O(|P|\cdot \poly\log U)+o(|S|\log\log U)
\end{align*}bits of space in expectation when $\gapbar(S)=\log^{o(1)}U$ and $\gapbar(S)$ is fixed over time (up to a constant factor).

\paragraph{Intuition.} To exploit the fact that $\gapbar(S)$ is small, we set a threshold value $T$ (where $\gapbar(S)\le T\ll \log U$) and consider two cases:

\begin{itemize}
    \item If $x_i$ and its pivot $p_{x_i}$ only differ in the last $T$ bits, then $v_x$ only has $T$ possibilities, so we only have to spend $O(\log T)$ bits of space to store $v_{x_i}$, instead of $\log\log U$ bits.

    One thing to note is that, in order to store the values using $O(\log T)$ bits of space per key, we need to reduce the \emph{universe size} (i.e., the maximum number of keys) of the retrieval data structure as well as the value size (see the definition of \cref{thm:retrieval}). In the algorithm, we will store the $v_x$'s using a collection of retrieval data structures, each of which has a universe size of $O(2^T)$.
    \item If $x_i$ and its pivot $p_{x_i}$ differ in more than the last $T$ bits, we know that $v_{x_i}$ is the same as $v_{x_i-(x_i\bmod 2^T)}$ (because $x_i$ and $x_i-(x_i\bmod 2^T)$ only differ in the last $T$ bits), so we only have to store the value of $v_{x_i-(x_i\bmod 2^T)}$ for every $x_i$. That is, we pay
    \begin{align*}
        O(|\{x-(x\bmod 2^T):x\in S\}|\cdot \log\log U)
    \end{align*}bits of space instead of $O(|S|\log\log U)$ bits.

    When $T$ is much larger than $\gapbar(S)$, for most adjacent pairs of keys $x_{i-1},x_i$, we would have
    \begin{align*}
        x_{i-1}-(x_{i-1}\bmod 2^T)=x_i-(x_i\bmod 2^T)
    \end{align*}
    (if we randomly shift the keys in $S$). This means that the number of distinct keys in
    \begin{align*}
        \{x-(x\bmod 2^T):x\in S\}
    \end{align*}should be much smaller than $|S|$, allowing us to save space.
\end{itemize}

\paragraph{The construction.} We assume for simplicity that $U$ is a power of $2$. We also assume that the sets $P,S$ have been randomly shifted. That is, we randomly sample an integer $\Delta\in [U]$ at the start of the algorithm, and add $\Delta$ to every key. As a result, our actual universe size becomes $2U$.

For any key $x\in [2U]$, we define its \defn{pivot} $p_x$ to be the largest key in $P$ smaller than or equal to $x$, and let $v_x$ denote the length of the largest common prefix of the binary representations of $x$ and $p_x$. As a convention, we always assume that $0\in P$ (after shifting), so that $p_x$ is well-defined.

Let $1\le T\le \log U$ be an integer parameter, whose actual value is to be determined later. A \defn{sub-universe} is an interval of the form $[i\cdot 2^T,(i+1)\cdot 2^T)\subset [2U]$ for some integer $i$. A sub-universe is \defn{nontrivial} if it contains a key in $P$.

Our data structure consists of the following parts:

\begin{itemize}
    \item A dictionary maintaining $P$, supporting predecessor and successor queries on $P$ in $O(\log U)$ time.
    \item \textbf{The sub-universe dictionary}: A key-value dictionary that maintains the nontrivial sub-universes. For each nontrivial sub-universe $W$, we store as its value the smallest key in $P\cap W$.
    \item \textbf{The level-0 retrieval data structure}: Define
    \begin{align*}
        S_T^{(0)}\defeq \{x-(x\bmod 2^T):x\in S\land (x\text{ and }p_x\text{ belong to different sub-universes})\}.
    \end{align*}
    We use a retrieval data structure to maintain the mapping $x\mapsto v_x$, where $x\in S_T^{(0)}$. Note that, as discussed in the intuitions, when $x$ and $p_x$ belong to different sub-universes, $v_x=v_{x-(x\bmod 2^T)}$.
    \item \textbf{The level-1 retrieval data structures}: Define 
    \begin{align*}
        S_T^{(1)}\defeq \{x:x\in S\land (x\text{ and }p_{x}\text{ belong to the same sub-universe})\}.
    \end{align*}
    For each nontrivial sub-universe $W$, we use a retrieval data structure to maintain the mapping $x\mapsto v_x-(\log (2U)-T)$, where $x\in S_T^{(1)}\cap W$. In this case, since $x$ and $p_x$ belong to the same sub-universe, $v_x$ must be at least $\log(2U)-T$.

    For any retrieval data structure that we use, we add $\poly\log U$ dummy keys to satisfy the requirement of \cref{thm:retrieval}.

    \item \textbf{The prefix-to-pivot dictionary}: The above data structures are responsible for finding $v_x$, not $p_x$. Finally, we give a data structure that finds $p_x$ given $x$ and $v_x$.
    
    Formally, the keys in this dictionary correspond to non-empty bit strings $s\in \{0,1\}^{\le \log U+1}$. A bit string $s$ exists in the dictionary if there exists a key in $P$ whose binary representation contains $s$ as a prefix, and the value associated with $s$ is the largest such key. This dictionary is also present in \cref{thm:distributor}.
\end{itemize}

These parts are combined using \cref{clm:virtual_address_combine}. Each part is a data structure in the word RAM model that uses at most $U^2$ words, so the overall data structure has address limit $U^{10}$.

Intuitively, compared with \cref{thm:distributor}, we use multiple retrieval data structures instead of one to store the mapping $x\mapsto v_x$, and use the sub-universe dictionary to figure out which retrieval data structure we should query. This finer partitioning of retrieval data structures is the essence of our space improvement.

\paragraph{Answering queries.} Given a key $x\in S$ in sub-universe $W$, we first query the sub-universe dictionary to see whether $W$ is nontrivial, and to find the smallest key in $P\cap W$ if one exists (denoted as $p$). Given this, we can tell whether $x$ and $p_x$ belong to different sub-universes: If $x$ and $p_x$ belong to the same sub-universe, then in particular $p_x$ must be in $W$, so $W$ is nontrivial. Furthermore, since $p_x\le x$ and $p$ is the smallest key in $P\cap W$, we must have $x\ge p$. Conversely, if $W$ is trivial or $x<p$, then $x$ and $p_x$ must be in different sub-universes.

If $x$ and $p_x$ belong to different sub-universes, then we can learn the value of $v_x$ by querying the level-0 retrieval data structure at $x-(x\bmod 2^T)$. Otherwise, we know that $v_x\ge \log(2U)-T$, and can learn the value of $v_x$ by querying the level-1 retrieval data structure corresponding to $W$.

Next, if $v_x=\log U+1$, then we have $x=p_x$. Otherwise, $p_x$ must be equal to the largest key in $P$ whose prefix equals $\text{bin}(x)_{1\dots v_x}\circ \text{``0''}$, as argued in \cref{thm:distributor}, so we can find $p_x$ by querying the prefix-to-pivot dictionary.

In total, answering a pivot query takes $O(1)$ time.

\paragraph{Updating $P$.} Let $p$ denote the queried key, and let $W$ denote its sub-universe.

\begin{itemize}
    \item  Modifying the dictionary for $P$ takes $O(1)$ time.
    
    \item The sub-universe dictionary can be answered by querying the predecessor and successor of $p$ in $P$, which takes $O(\log U)$ time.

    \item The level-0 retrieval data structure is unchanged, because it is guaranteed that an update to $P$ does not change the pivot of any key in $S$. This implies that the set $S_T^{(0)}$ is unchanged. Furthermore, the value $v_x$ is also unchanged for any $x\in S_T^{(0)}$ because it is equal to some $v_{x'}$ where $x'\in S$, and $v_{x'}$ is unchanged because of the guarantee.

    \item We may need to create or destroy a level-1 retrieval data structure, because the sub-universe $W$ containing $p$ may change its state. Note that, if $W$ changes its state during the current operation, then the level-1 retrieval data structure corresponding to $W$ must be empty (save for the $\poly\log U$ dummy keys). This is because, if $S_T^{(1)}\cap W$ contains some key $x$, then $p_x$ must also be in $W$ and must be distinct from the key $p$ that is being updated (otherwise the pivot of $x$ changes during the update, which contradicts our guarantee). As a result, creating or destroying a level-1 retrieval data structure only involves taking care of the dummy keys, which takes $\poly\log U$ time.
    
    \item Finally, updating the prefix-to-pivot dictionary takes $O(\log U)$ time, which has been described in \cref{thm:distributor} and is omitted.
\end{itemize}
    
In total, an update to $P$ takes $\poly\log U$ time.

\paragraph{Updating $S$.} Let $x$ denote the queried key, and let $W$ denote its sub-universe. When updating $S$, we only have to modify the retrieval data structures.

We first check whether $x$ contributes to $S_T^{(0)}$ and $S_T^{(1)}$. This only depends on the values of $x$ and $p_x$, and we know $p_x$ (when inserting $x$, we are given $p_x$ as advice; when deleting $x$, we can query the distributor to get $p_x$). Therefore, we can decide this at no cost.

If $x$ contributes to $S_T^{(1)}$, then we need to update the level-1 retrieval data structure corresponding to $W$. This retrieval data structure must exist, because in this case, $x$ and $p_x$ belong to the same sub-universe, which in particular means that $W$ is nontrivial.

As for the level-0 retrieval data structure, in addition to checking whether $x$ contributes to $S_T^{(0)}$, we also need to check whether the current update changes $S_T^{(0)}$. That is, whether there exists another key $x'\in S\setminus\{x\}$, such that $x'$ and $p_{x'}$ belong to different sub-universes, and $x'-(x'\bmod 2^T)=x-(x\bmod 2^T)$ (which is equivalent to saying that $x'\in W$). Letting $p$ denote the smallest key in $P\cap W$ (which can be learned by querying the sub-universe dictionary), our task is reduced to asking: Does there exist a key in $S\cap [x-(x\bmod 2^T),p)$ other than $x$? This question is easy to answer because we are given the predecessor and successor of $x$ in $S$ as advice, as stated in \cref{thm:better-distributor}. This concludes the algorithm for updating $S$, which takes $O(1)$ time.

\paragraph{Space efficiency.} 

\begin{itemize}
    \item The dictionary maintaining $P$ uses $O(|P|\cdot \log U)$ bits of space.
    \item The number of nontrivial sub-universes is at most the number of keys in $P$, so the sub-universe dictionary also uses $O(|P|\cdot \log U)$ bits of space.
    \item The space cost of the level-0 retrieval data structure is $O(|S_T^{(0)}|\cdot \log\log U)$, where we bound the size of $S_T^{(0)}$ later. Note that the retrieval data structure of \cref{thm:retrieval} satisfies its space bound with high probability, which also means that the space bound holds in expectation. This is good enough for us.
    \item For each level-1 retrieval data structure, its universe size is at most $2^T+\poly\log U$ (the sub-universe has size $2^T$, and we add $\poly\log U$ dummy keys), and the values are also at most $O(T)$. Therefore, by \cref{thm:retrieval}, the level-1 retrieval data structures only cost $\log T+\log\log\log U$ bits of space per key, which sums up to
    \begin{align*}
        O(|P|\cdot \poly\log U+|S|(\log T+\log\log\log U))
    \end{align*}
    bits (the first term accounts for the dummy keys).
    \item The prefix-to-pivot dictionary costs $O(|P|\cdot \poly\log U)$ bits of space, as in \cref{thm:distributor}.
\end{itemize}

Finally, each dynamic dictionary and retrieval data structure assumes access to a hash function of size $U^{\delta}$ bits. Since we are happy with an expected space bound, we only have to sample one such hash function, and use it on every data structure.

The total space cost is
\begin{align*}
    O(|P|\cdot \poly\log U+|S_T^{(0)}|\cdot \log\log U+|S|(\log T+\log\log\log U)).
\end{align*}

We now bound the expected size of $S_T^{(0)}$ as a function of $|S|$ and $\gapbar(S)$. Actually, we will bound the size of a slightly larger set
\begin{align*}
    S'=\{x-(x\bmod 2^T):x\in S\}.
\end{align*}

To bound $|S'|$, we count the number of pairs of adjacent keys in $S$ for which the values of $\lfloor x/2^T\rfloor$ differ. Recall that $S$ has been randomly shifted by $\Delta$, which is uniformly random from $[U]$, and that $U$ is a power of $2$. Letting $x_1<\dots<x_{|S|}$ denote the (unshifted) keys in $S$, the size of $|S'|$ can be expressed as
\begin{align}
    \label{equ:S'}
    |S'|=1+\sum_{i=2}^{|S|}[\lfloor (x_i+\Delta)/2^T\rfloor\ne \lfloor (x_{i-1}+\Delta)/2^T\rfloor].
\end{align}

For each $i$, we have that
\begin{align*}
    &\Pr[\lfloor (x_i+\Delta)/2^T\rfloor\ne \lfloor (x_{i-1}+\Delta)/2^T\rfloor] \\
    ={}&\Pr[(x_i+\Delta)\oplus (x_{i-1}+\Delta)\ge 2^T] \\
    \le{}&\min(1,(x_i-x_{i-1})/2^T).
\end{align*}
Intuitively, as $x_i-x_{i-1}$ increases, it becomes more likely for $x_i+\Delta$ and $x_{i-1}+\Delta$ to differ in more than the last $T$ bits. Plugging this back into \eqref{equ:S'}, we have that the expected size of $S'$ is
\begin{align*}
    \E[|S'|]&=1+\sum_{i=2}^{|S|}\min(1,(x_i-x_{i-1})/2^T) \\
    &\le 1+\sum_{i=2}^{|S|}\min \bk*{1, 2^{\gap(x_{i-1},x_i)-T}}\tag{$\gap(x,x')=\log(x'-x+1)+O(1)$}
\end{align*}

We want to bound this expression as a function of $|S|$ and $\gapbar(S)$, where $\gapbar(S)$ is the average gap entropy of $S$. That is, we need to maximize $\E[|S'|]$ when $|S|$ and $\gapbar(S)$ are fixed. Observe that it is never beneficial to let $\gap(x_{i-1},x_i)>T$ for any $i$, and that when $\gap(x_{i-1},x_i)$ is always in $[2,T]$, each term $\min \bk*{1,2^{\gap(x_{i-1},x_i)-T}}$ is convex in $\gap(x_{i-1},x_i)$. This means that, we can maximize $\E[|S'|]$ by setting $\gap(x_{i-1},x_i)=T$ for as many $i$ as possible, and setting $\gap(x_{i-1},x_i)=2$ for the others. The maximum number of indices $i$ for which $\gap(x_{i-1},x_i)=T$ is $\gap(S)/T=|S|\cdot \gapbar(S)/T$. Therefore, $\E[|S'|]$ is at most
\begin{align}
    \label{equ:E_S'}
    \min\bk*{|S|,O\bk*{\frac{|S|\cdot \gapbar(S)}{T}}}.
\end{align}

\paragraph{Setting the threshold $T$.} Recall the goal of this warm-up: When $\gapbar(S)=\log^{o(1)}U$ and is fixed over time (up to a constant factor), we want to achieve $O(|P|\cdot \poly\log U)+o(|S|\log\log U)$ bits of space. As we just established, the total space cost is
\begin{align*}
    O\bk*{|P|\cdot \poly\log U+\min\bk*{|S|,O\bk*{\frac{|S|\cdot \gapbar(S)}{T}}}\cdot \log\log U+|S|(\log T+\log\log\log U)}.
\end{align*}
We can set $T=\gapbar(S)\cdot \log\log U$, so that when $\gapbar(S)=\log^{o(1)}U$, the total space cost is at most
\begin{align*}
    &O(|P|\cdot \poly\log U)+O(|S|)+O\bk*{|S|\cdot (\log\gapbar(S)+\log\log \log U)} \\
    ={}&O\bk*{|P|\cdot \poly\log U+|S|\cdot (\log\gapbar(S)+\log\log \log U)} \\
    ={}&O\bk*{|P|\cdot \poly\log U}+o(|S|\cdot \log\log U).
\end{align*}

\subsection{The Proof of Theorem \ref{thm:better-distributor}}

Upon closer inspection, the warm-up already achieves the correct space bound of
\begin{align*}
    O\bk*{|P|\cdot \poly\log U+|S|\cdot \log\gapbar(S)}
\end{align*}
when $\gapbar(S)\ge (\log\log U)^{\Omega(1)}$ and $\gapbar(S)$ is fixed over time. In the following, we remove these two constraints.

To remove the requirement of $\gapbar(S)\ge (\log\log U)^{\Omega(1)}$, we note that any two keys in $P$ are at distance at least $\log U$ from each other. This is explicitly stated in \cref{thm:better-distributor}, and is naturally satisfied by our algorithm because each key in $P$ is the minimal value of a chunk, and each chunk contains $\poly\log U$ keys in $S$. Given this property, we can solve the case of $\gapbar(S)= (\log\log U)^{o(1)}$ by letting $T=(1/2)\log\log U$. Under this parametrization, each sub-universe has size $<\log U$, so it contains at most one pivot. As a result, the level-$1$ retrieval data structures are not needed, so we can save space.

To remove the requirement that $\gapbar(S)$ is fixed over time, we use \emph{multiple thresholds} simultaneously in our data structure. Intuitively, the warm-up can be seen as partitioning the problem into subproblems of universe size $2^T$, then using a retrieval data structure to solve each problem. We can further reduce the problem size by replacing each level-$1$ retrieval data structure with a distributor (using threshold $T'$), and so on and so forth. When the thresholds are set appropriately, we can guarantee that for any value of $\gapbar(S)$, the distributor behaves as if it only has one threshold $T=\gapbar(S)\cdot \log\log U$, which is just like the warm-up. The details are as follows.

\paragraph{Notations.} Define the thresholds as follows: Let $T_1=\min(\lceil \varepsilon^{-1}\cdot (\log\log U)^2\rceil,\log(2U))$, $T_2=\lfloor\sqrt{T_1}\rfloor,T_3=\lfloor\sqrt{T_2}\rfloor,\dots$, all the way until $T_k=\lfloor\sqrt{T_{k-1}}\rfloor\in [\sqrt{\log\log U},\log\log U)$. Note that the number of thresholds is $k=O(\log\log\varepsilon^{-1})$. We also define $T_0=\log(2U)$ and $T_{k+1}=0$ for notational simplicity.

For each $0\le j\le k$, the \defn{level-$j$ sub-universes} are intervals of the form $[i\cdot 2^{T_j},(i+1)\cdot 2^{T_j})\subset [2U]$, where $i$ is some integer. A sub-universe is \defn{nontrivial} if it contains a key in $P$. 

For each $x\in S$, its \defn{critical level} $l_x$ is defined as the largest level $0\le j\le k$ for which $x$ and $p_x$ belong to the same level-$j$ sub-universe. For each $0\le j<k$, define
\begin{align*}
    S_T^{(j)}\defeq\{x-(x\bmod 2^{T_{j+1}}):x\in S\land l_x=j\}.
\end{align*}
Observe that, if we only have one level (i.e., $k=1$), then this definition is consistent with the definitions of $S_T^{(0)}$ and $S_T^{(1)}$ in the warm-up. 

Similarly to the warm-up, we have the following fact.

\begin{fact}
    \label{fact:distributor-level}
    Let $0\le j<k$. For any key $x\in S$ such that $l_x=j$, we have that $v_x=v_{x-(x\bmod 2^{T_{j+1}})}$. Furthermore, $v_x\ge \log(2U)-T_j$.
\end{fact}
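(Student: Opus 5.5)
The plan is to reduce both claims to elementary facts about binary prefixes of aligned intervals. Throughout, the universe is $[2U]$ with $2U$ a power of two, and $\text{bin}(\cdot)$ has length $\log(2U)$. A level-$j$ sub-universe $[i\cdot 2^{T_j},(i+1)\cdot 2^{T_j})$ is exactly the set of keys sharing a fixed prefix of length $\log(2U)-T_j$. Two keys therefore lie in the same level-$j$ sub-universe if and only if their longest common prefix has length at least $\log(2U)-T_j$.

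For the second assertion, take $x$ with $l_x=j$. By definition of the critical level, $x$ and $p_x$ lie in the same level-$j$ sub-universe. By the observation above, $v_x\ge \log(2U)-T_j$.

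For the first assertion, set $x'=x-(x\bmod 2^{T_{j+1}})$. This is the smallest element of the level-$(j+1)$ sub-universe $W'$ containing $x$, and it agrees with $x$ on the first $\log(2U)-T_{j+1}$ bits. Since $l_x=j<j+1$ (and $j+1\le k$ is a valid level), $p_x\notin W'$. Because $p_x\le x$, we get $p_x<x'$. I then argue that $p_{x'}=p_x$. We have $p_x\le x'\le x$, so $p_{x'}\ge p_x$. Conversely, any pivot $q$ with $p_x<q\le x'$ would satisfy $q\le x$ and contradict the maximality of $p_x$.

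It remains to compare the longest common prefixes of $(x,p_x)$ and $(x',p_x)$. Since $p_x\notin W'$, the prefix shared by $x$ and $p_x$ has length $v_x<\log(2U)-T_{j+1}$. In that range, $x$ and $x'$ have identical bits. Hence $x'$ agrees with $p_x$ on exactly the first $v_x$ bits and differs at bit $v_x+1$, just as $x$ does. This gives $v_{x'}=v_x$.

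The only delicate point is the index bookkeeping: checking that $j+1\le k$, so that level $j+1$ is defined, and handling the degenerate case $x=x'$, which is trivial. No real obstacle is expected.
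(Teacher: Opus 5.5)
Your proof is correct and follows the paper's own route. You use $l_x=j$ to put $p_x$ outside the level-$(j+1)$ sub-universe of $x$, so $x'=x-(x\bmod 2^{T_{j+1}})\in(p_x,x]$ has the same pivot and agrees with $x$ past bit $v_x+1$; the bound $v_x\ge\log(2U)-T_j$ then follows from $x$ and $p_x$ sharing a level-$j$ sub-universe. Your version just spells out the prefix characterization of sub-universes and the maximality argument for $p_{x'}=p_x$ in more detail than the paper does.
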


\begin{proof}
    If $l_x=j$, then $x$ and $p_x$ belong to different level-$(j+1)$ sub-universes. This means that $x'=x-(x\bmod 2^{T_{j+1}})$ is in the range $(p_x,x]$, so its pivot must also be equal to $p_x$. Then, since $x$ and $p_x$ differ somewhere above the last $T_{j+1}$ bits, and $x$ and $x'$ agree everywhere above the last $T_{j+1}$ bits, we must have $v_x=v_{x'}$.

    Moreover, since $x$ and $p_x$ belong to the same level-$j$ sub-universe, they agree everywhere above the last $T_j$ bits, so $v_x\ge \log(2U)-T_j$.
\end{proof}

\paragraph{The construction.} Our data structure is very similar to the warm-up, with only the following changes:

\begin{itemize}
    \item \textbf{The sub-universe dictionary}: We now have $k$ sub-universe dictionaries, corresponding to the levels $1\dots k$. The level-$j$ sub-universe dictionary maintains the set of nontrivial level-$j$ sub-universes, where for each sub-universe $W$, we store as its value the smallest key in $P\cap W$.
    \item \textbf{The retrieval data structures}: For each $0\le j< k$, and for each nontrivial level-$j$ sub-universe $W$, we build a level-$j$ retrieval data structure. This retrieval data structure stores the mapping $x\mapsto v_x-(\log(2U)-T_j)$ for every $x\in S_T^{(j)}\cap W$. Note that these values are always nonnegative by \cref{fact:distributor-level}.

    Similarly to the warm-up, we add $\poly\log U$ dummy keys to each retrieval data structure.
\end{itemize}

Note a very important difference between this algorithm and the warm-up: We do not have level-$k$ retrieval data structures. This is because $T_k<\log\log U$, and \cref{thm:better-distributor} guarantees that the distance between any two distinct keys in $P$ is at least $\log U$, which means that a level-$k$ sub-universe can have at most one key in $P$. Therefore, if the critical level of some key $x\in S$ is $l_x=k$, then we can directly know $p_x$ by querying the sub-universe dictionary, without having to build retrieval data structures for such keys.

\paragraph{Answering queries.} Given a key $x\in S$, let $W^{(0)}\supset\dots\supset W^{(k)}$ denote the sub-universes of different levels that contain $x$. 

We first find the critical level $l_x$. For each $0\le j\le k$, we query the sub-universe dictionary to see if $W^{(j)}$ is trivial. If not, then the sub-universe dictionary also tells us $p^{(j)}$, the smallest key in $P\cap W^{(j)}$. Similarly to the warm-up, we have the following condition: $x$ and $p_x$ belong to the same level-$j$ sub-universe, if and only if $W^{(j)}$ is nontrivial and $p^{(j)}\le x$. Using this, we can find the critical level $l_x$.

If $l_x=k$, then we must have $p_x=p^{(k)}$, because the distance between any two keys in $P$ is at least $\log U$, and the size of a level-$k$ sub-universe is $2^{T_k}<\log U$.

Otherwise, we query the level-$l_x$ retrieval data structure corresponding to $W^{(l_x)}$, which gives us the value of $v_{x-(x\bmod 2^{T_{l_x+1}})}$, which is equal to $v_x$ by \cref{fact:distributor-level}.

\paragraph{Handling updates.} The updates are almost the same as the warm-up (except that we have to do the same thing $k$ times), and are omitted.

\paragraph{Time efficiency.} Each level of the distributor can be maintained as in the warm-up. Since we now have $O(\log\log\varepsilon^{-1})$ levels, the time for performing each query blows up by an $O(\log\log\varepsilon^{-1})=o(\log U)$ factor, giving us the final time bounds in \cref{thm:better-distributor}.

\paragraph{Space efficiency.} Everything other than the retrieval data structures uses $O(|P|\cdot \poly\log U)$ bits of space in total. Similarly for the dummy keys in the retrieval data structures.

This only leaves the cost of storing the real keys in the retrieval data structures. A real key in a level-$j$ retrieval data structure costs $O(\log T_j)$ bits of space. In total, they cost at most
\begin{align*}
    O\bk*{\sum_{j=0}^{k-1}|S_T^{(j)}|\cdot \log T_j}
\end{align*}
bits of space. For each $S_T^{(j)}$, the expected space cost can be bounded similarly to \eqref{equ:E_S'}, so the total expected space is at most
\begin{align*}
    &O\bk*{\sum_{j=0}^{k-1}\min\bk*{|S|,\frac{|S|\cdot\gapbar(S)}{T_{j+1}}}\cdot \log T_j} \\
    ={}&|S|\cdot O\bk*{\sum_{j=0}^{k-1}\min\bk*{1,\frac{\gapbar(S)}{T_{j+1}}}\cdot \log T_j}.
\end{align*}
bits (here we ignore the $\log\log\log U$ term that was present in the warm-up, because $T_i\ge \log\log^{\Omega(1)}U$ for any $i$). Comparing this with the space requirement of \cref{thm:better-distributor}, it remains to show that, for any value of $\gapbar(S)$, we have
\begin{align}
    \label{equ:multiple-thresholds-goal}
    \sum_{j=0}^{k-1}\min\bk*{1,\frac{\gapbar(S)}{T_{j+1}}}\cdot \log T_j=O\bk*{\varepsilon\cdot \gapbar(S)+\log \gapbar(S)}.
\end{align}

We consider three different regimes of $\gapbar(S)$.

\begin{itemize}
    \item If $\gapbar(S)\ge \min(\varepsilon^{-1}\cdot \log\log U,\sqrt{\log U})$, we have
        \begin{align*}
            \varepsilon\cdot \gapbar(S)+\log \gapbar(S)\ge \Omega(\log\log U).
        \end{align*}
        In this case, \eqref{equ:multiple-thresholds-goal} holds as long as the LHS is at most $O(\log\log U)$. This is indeed true, because
        \begin{align*}
            &\sum_{j=0}^{k-1}\min\bk*{1,\frac{\gapbar(S)}{T_{j+1}}}\cdot \log T_j\le \sum_{j=0}^{k-1} \log T_j \\
            \le{}& O\bk*{\sum_{j=0}^{k-1}(\log\log U)/2^{j}} \tag{definition of $T_j$} \\
            ={}&O(\log\log U).
        \end{align*}
    \item If $\gapbar(S)\le \min(\varepsilon^{-1}\cdot \log\log U,\sqrt{\log U})$ but $\gapbar(S)\ge (\log\log U)^{1/4}$: Let $j^*$ be the largest level $1\le j\le k$ such that $T_j\ge \gapbar(S)\cdot \log\log U$. Such a level must exist, because $T_1=\min(\varepsilon^{-1}\cdot(\log\log U)^2,\log (2U))$ by our definition, which is at least $\gapbar(S)\cdot \log\log U$. 
    
    Next, we break the LHS of \eqref{equ:multiple-thresholds-goal} into two parts.
    \begin{itemize}
        \item For the levels before $j^*$, we have
        \begin{align*}
            &\sum_{j=0}^{j^*-1}\min\bk*{1,\frac{\gapbar(S)}{T_{j+1}}}\cdot \log T_j\le \sum_{j=0}^{j^*-1}\min\bk*{1,\frac{\gapbar(S)}{T_{j+1}}}\cdot \log \log U \\
            \le{}&\sum_{j=0}^{j^*-1}\frac{\gapbar(S)}{T_{j^*}}\cdot \frac{T_{j^*}}{T_{j+1}}\cdot \log \log U \\
            \le{}&\sum_{j=0}^{j^*-1}\frac{T_{j^*}}{T_{j+1}}\tag{$T_{j^*}\ge \gapbar(S)\cdot\log\log U$}\\
            \le{}&\sum_{j=0}^{j^*-1}2^{(j+1)-j^*}=O(1)=O(\log\gapbar(S)).\tag{$T_j=T_{j+1}^2\ge 2T_{j+1}$ for any $j$}
        \end{align*}
        \item For the levels including and after $j^*$, note that since
        \begin{align*}
            T_{j^*+1}=\sqrt{T_{j^*}}<\gapbar(S)\cdot \log\log U
        \end{align*}
        and $\gapbar(S)\ge(\log\log U)^{1/4}$, we must have $T_{j^*}\le (\gapbar(S)\cdot \log\log U)^2\le \gapbar(S)^{10}$, so $\log T_{j^*}=O(\log \gapbar(S))$. Therefore, we have
        \begin{align*}
            &\sum_{j=j^*}^{k-1}\min\bk*{1,\frac{\gapbar(S)}{T_{j+1}}}\cdot \log T_j \\
            \le{}&\sum_{j=j^*}^{k-1}\log T_j=\sum_{j=j^*}^{k-1}\log T_{j^*}/2^{j-j^*}\tag{definition of $T_j$} \\
            ={}&O(\log T_{j^*})=O(\log\gapbar(S)).
        \end{align*}
    \end{itemize}
    
    \item If $\gapbar(S)\le (\log\log U)^{1/4}$: Recall that $T_k\in (\sqrt{\log\log U},\log\log U]$, so $T_{k-2}\ge T_k^4\ge \gapbar(S)\cdot \log\log U$. Therefore, if we define $j^*$ similarly as in the previous case, then $j^*\ge k-2$, and the levels before $j^*$ contribute at most $O(1)$ to the total sum, as before. It remains to bound the last two terms $j=k-2,k-1$. When $j=k-2$ or $j=k-1$, we have 
    \begin{align*}
        &\min\bk*{1,\frac{\gapbar(S)}{T_{j+1}}}\cdot \log T_j\le \frac{\gapbar(S)}{T_{j+1}}\cdot \log T_j \\
        \le{}&\frac{(\log\log U)^{1/4}}{T_{j+1}}\cdot \log T_j \\
        \le{}&\frac{(\log\log U)^{1/4}}{\sqrt{\log\log U}}\cdot \log T_j\tag{$T_{j+1}\ge T_k>\sqrt{\log\log U}$} \\
        \le{}&\frac{(\log\log U)^{1/4}}{\sqrt{\log\log U}}\cdot 16\log \log\log U \tag{$T_j\le T_{k-2}\le (((\log\log U)+1)^2+1)^2$}\\
        ={}&O(1).
    \end{align*}
\end{itemize}

Therefore, \cref{equ:multiple-thresholds-goal} holds and the total expected space cost of our new distributor is at most
\begin{align*}
    O(|P|\cdot \poly\log U+ \varepsilon\cdot|S|\cdot\gapbar(S)+|S|\log\gapbar(S)),
\end{align*}
as desired.

\section{Proof of the Main Result}

\label{sec:final}

Finally, we prove the main result of this paper.

\MainDict*

Recall the structure of the warmup: We partition the key set into chunks of $\poly\log U$ keys. We then use a distributor (\cref{thm:distributor}) to map keys to chunks, and use a small dictionary (\cref{lem:polylog-dict-warmup}) to solve each chunk. The proof of \cref{thm:main} is similar to the warmup, except that we replace the distributor and small dictionaries with the improved constructions (\cref{thm:better-distributor} and \cref{lem:trie-plus-btree}).

\subsection{Improved Dictionary for Chunks}

Similarly to the warmup, \cref{lem:trie-plus-btree} can be seen as solving the dictionary problem for small key sets in the special regime where $\varepsilon=1/|S|$ and $\gap(S)=\Theta(|S|\log U)$. Therefore, using the same reductions as in \cref{sec:chunk}, we can generalize \cref{lem:trie-plus-btree} as follows.

\begin{lemma}
    \label{lem:polylog-dict}
    Let $U,\varepsilon, S$ be as defined in \cref{thm:main}. If it is guaranteed that $\log U\le |S|=\poly\log U$, then there exists a legitimate data structure in the variable-length word RAM model with word size $w=100\log U$ that maintains $S$. In $O(\log\varepsilon^{-1}/\log\log\varepsilon^{-1})$ expected amortized time, the data structure supports the following operations:
    \begin{itemize}
        \item \textbf{Insertion}: Insert a key $x$ to $S$.
        \item \textbf{Deletion}: Delete a key $x$ from $S$.
        \item \textbf{Predecessor/Successor}: Given $x\in [U]$, output the predecessor or successor of $x$ in $S$ (or $\bot$ if it does not exist).
    \end{itemize}
    The data structure also supports an operation called $\textbf{deallocate}$, that deactivates every active word in $O(|S|)$ time. The data structure has address limit $U^{90}$. It uses
    \begin{align*}
        \gap(S)+O(\varepsilon \cdot\gap(S))+O(|S|\log \gapbar(S))
    \end{align*}
    bits of space in expectation, and uses at most $O(|S|\log U)$ bits of space in the worst case. The data structure assumes access to lookup tables and hash functions of total size at most $U^{\delta}$ bits, where $\delta>0$ is an arbitrary parameter.
\end{lemma}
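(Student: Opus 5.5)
I will prove \cref{lem:polylog-dict} by repeating the two-step reduction of \cref{sec:chunk}, with \cref{lem:trie-plus-btree} playing the role that \cref{lem:difference-encode-array} played there. The plan follows the warm-up line by line, and I only flag where the parameters change.

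\textbf{Step 1 (analogue of \cref{lem:array-intermediate}).} I will partition the universe into groups of $\Theta(\varepsilon^{-1})$ keys, maintained by split/merge so that \cref{clm:dynamic_chunks} holds with $\varepsilon^{-1}$ in place of $\log^{C_{\text{chunk}}}U$. Each group is stored with \cref{lem:trie-plus-btree}; this is legal because a group has at most $\varepsilon^{-1}$ keys once the constants are chosen so that groups never exceed that size. A fusion tree on the group pivots costs $O(\varepsilon|S|\log U)$ bits and $O(\log_w|S|)=O(1)$ time, since $|S|=\poly\log U$. Everything is combined with \cref{clm:virtual_address_combine}; each instance has address limit $U^3$, so the address limit stays polynomial.

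\emph{Time.} A query locates its group in $O(1)$ time and then spends $O(\log\varepsilon^{-1}/\log\log\varepsilon^{-1})$ worst-case time inside it. A split or merge moves $O(\varepsilon^{-1})$ keys and deallocates one structure. It happens only once per $\Omega(\varepsilon^{-1})$ updates, so it adds the same per-key cost amortized.

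\emph{Space.} Linearity of expectation gives expected space $\sum_g(\gap(S_g)+\lceil\log U\rceil+O(|S_g|\log\log U))$. Additivity of $\gap$ bounds this by $\gap(S)+O(\varepsilon|S|\log U)+O(|S|\log\log U)$. The worst case is $O(|S|\log U)$, since each group's worst case is.

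\textbf{Step 2 (blocks).} This step is verbatim the block reduction of \cref{sec:chunk}. I cut $S$ into blocks of gap entropy $\Theta(\log U)$ and store the block leaders $x_{a_1},\dots,x_{a_t}$ with the Step 1 structure. Each block's interior is stored as $\enc_{x_{a_i},x_{a_{i+1}}}(\cdot)$ via \cref{lem:diff-encode-larger-interval} in $O(1)$ words, and operations on it go through \cref{lem:query-encodings}. Locating a block costs one Step 1 operation, manipulating the encoding costs $O(1)$, and re-merging costs $O(1)$ further Step 1 updates.

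\emph{Space.} The computation leading to \eqref{equ:reduction-space-cost} applies unchanged, now in expectation. The only difference is that the leader structure's cost $\gap(x_{a_1},\dots,x_{a_t})+O(\varepsilon t\log U)+O(t\log\log U)$ holds in expectation. This gives $\gap(S)+O(\varepsilon\gap(S))+O(|S|\log\gapbar(S))$, using $t=O(\gap(S)/\log U)$ and the same Jensen and $\log x/x$ estimates as before. The worst case stays $O(|S|\log U)$, because the encodings are deterministic and use $O(\gap(S))+O(t\log\log U)$ bits.

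\textbf{Expected obstacle.} The delicate point is making sure the randomness (the random shift inside \cref{lem:trie-plus-btree}) does not break the amortized time analysis or the linearity argument when groups are rebuilt. My plan is to resample the shift for each newly created group structure, independently of the current contents. The expected-space bound then holds for every group at every moment, regardless of the update history, because the shift is drawn after the adversary's keys are fixed in the oblivious model. The time bounds of \cref{lem:trie-plus-btree} are worst-case, so they are unaffected.

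A minor check is that $\varepsilon^{-1}\le|S|$ may fail. In that case I use a single group, as in the warm-up, and its size is at most $\poly\log U$. Here \cref{lem:trie-plus-btree} needs $|S|\le\varepsilon^{-1}$, which holds precisely when $\varepsilon^{-1}\ge|S|$. Finally, deallocate is implemented by recursively deallocating each group and freeing the block words, in $O(|S|)$ time.
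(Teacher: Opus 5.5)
Your proposal is correct and follows the route the paper intends. The paper omits the proof of \cref{lem:polylog-dict}, stating only that it is the two-step reduction of \cref{sec:chunk} (groups of $\Theta(\varepsilon^{-1})$ keys under a fusion tree, then blocks of gap entropy $\Theta(\log U)$ encoded via \cref{lem:diff-encode-larger-interval}) with \cref{lem:trie-plus-btree} in place of \cref{lem:difference-encode-array}. Your extra care in capping group sizes at $\varepsilon^{-1}$ and in arguing that fresh, independent shifts give the expected-space bound at every moment fills in details the paper leaves implicit.
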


The only difference between \cref{lem:polylog-dict} and its warmup version \cref{lem:polylog-dict-warmup} is that, because we use \cref{lem:trie-plus-btree} instead of the warmup version \cref{lem:difference-encode-array}, \cref{lem:polylog-dict} achieves a better time bound and a slightly worse space bound (expected instead of worst case). The formal proof of \cref{lem:polylog-dict} is omitted, as it is the same as in \cref{sec:chunk}.

\subsection{The Data Structure}

Similarly to the warmup, our data structure consists of the following parts: A buffer dictionary, a distributor, a dictionary maintaining the set of pivots, and a set of small dictionaries.

Updates are first handled using the buffer dictionary. Then, when the buffer dictionary is too large (at least $|S|/\poly\log U$) or when we need to delete a key that is not in the buffer dictionary, we access the rest of the data structure. We first query the distributor for the chunk that contains the queried key, then access the corresponding small dictionary for that chunk.

We now discuss the differences between the new construction and the warmup. Comparing the interfaces of our new constructions \cref{thm:better-distributor} and \cref{lem:polylog-dict} with the warmup constructions \cref{thm:distributor} and \cref{lem:polylog-dict-warmup}, we can see that the interface for the small dictionaries is exactly the same as before. As for the distributor, the new construction requires more advice when inserting or deleting a key $x$ in $S$: Namely, we need to provide the predecessor and successor of $x$ in $S$\footnote{The new distributor also requires that any two pivots differ by at least $\log U$ which was not required in the warmup, but this is naturally satisfied by our construction because each chunk contains $\ge \log U$ keys.}.

In general, the predecessor and successor of a key in $S$ cannot be computed time-efficiently. However, we have additional information that can help us. Whenever we insert or delete a key $x$ in $S$, we know which chunk $x$ is in: For insertions, they only happen when we clear the buffer dictionary, in which case we perform a predecessor query for $x$ on the set of pivots. For deletions, we can know the chunk containing $x$ by querying the distributor.

Given this information, the predecessor and successor of $x$ in $S$ are computed as follows: We first perform predecessor and successor queries in the chunk containing $x$, using $O(\log\varepsilon^{-1}/\log\log\varepsilon^{-1})$ time. If the predecessor or successor does not exist, then we go to the previous chunk or the next chunk (which can be looked up in the dictionary for pivots), and perform a predecessor or successor query there. Note that the dictionary maintaining the pivots also maintains the predecessor and successor of each pivot, which is updated in $O(\log U)$ time whenever we update the set of pivots, and supports $O(1)$ time queries. In total, computing the advice needed for the distributor takes $O(\log\varepsilon^{-1}/\log\log\varepsilon^{-1})$ time.

\subsection{Performance Analysis}

\paragraph{Time efficiency.} Each access to the distributor takes $O(\log\log\varepsilon^{-1})$ time, which is slower than before. However, this is hidden behind the cost of accessing the small dictionaries, so the expected amortized query time is simply $O(\log\varepsilon^{-1}/\log\log\varepsilon^{-1})$.

\paragraph{Space efficiency.} The buffer dictionary and the dictionary maintaining the set of pivots only use $o(|S|)$ bits of space. It remains to analyze the distributor and the small dictionaries.

Letting $\{S_i\}$ denote the collection of chunks, the overall \emph{expected} space cost is

\begin{align*}
    &\bk*{\sum_{i=1}^{O(|S|/\log^{C_{\text{chunk}}}U)}\gap(S_i)+O(\varepsilon\cdot\gap(S_i))+O(|S_i|\log\gapbar(S_i))}+O(\varepsilon\cdot\gap(S)+|S|\log\gapbar(S))+O(U^{\delta}) \\
    \le{}&\gap(S)+O(\varepsilon\cdot\gap(S))+O(|S|\log\gapbar(S))+O(U^{\delta}).
\end{align*}
Here, the $O(|S_i|\log\gapbar(S_i))$ terms sum up to at most $O(|S|\log\gapbar(S))$ because of Jensen's inequality.

\subsection{Achieving a Worst-Case Space Bound}

Note that both the distributor and the small dictionaries only achieve an expected space bound, mainly because we need to randomly shift the keys. We obtain a worst-case space bound on the entire data structure by partitioning the key set into $\poly\log U$ subsets of size $\Omega(|S|/\poly\log U)$ each, and using the expected-space dictionary described above. This partitioning is maintained similarly to the chunks. We then build a dictionary for each subset, and use a \emph{fusion tree} to maintain the smallest keys of each subset. It is important that the number of subsets is only $\poly\log U$, because if it were to be larger, the fusion tree would not be able to support $O(1)$ time queries. Furthermore, $\poly\log U$ subsets is sufficient for us to apply a Chernoff bound\footnote{Note that for both the distributor and the small dictionaries, it is guaranteed that they use $O(|S|\log U)$ bits of space in the worst case, which allows us to apply the Chernoff bound.} on the total space, which means that the space bound of \cref{thm:main} holds for the entire data structure not just in expectation, but with high probability in $U$. Then, in the unlikely event that the total space becomes too large, we simply rebuild the entire data structure in $O(U^2)$ time. Finally, we apply \cref{clm:word_RAM_to_vl} to simulate the variable-length data structure in the word RAM model, while keeping the space and (asymptotic) time guarantees. This concludes the proof of \cref{thm:main}.

\subsection{Supporting Rank/Select or Predecessor/Successor Queries}

Our dictionary can also be made to answer more complicated queries such as rank/select and predecessor/successor, while keeping the time and space guarantees.

\CorMainPredecessor*

\begin{proof}
    Recall that the construction of \cref{thm:main} consists of a buffer dictionary, a distributor and small dictionaries for the chunks. Note that we can afford to perform predecessor/successor queries during updates. As a result, we no longer need the buffer dictionary (which existed because we needed to answer predecessor queries on the set of pivots \emph{in batches}). This simplifies the following exposition.

    In addition to \cref{thm:main}, we build a predecessor data structure on the set of pivots. Using dynamic van Emde Boas trees \cite{DBLP:journals/mst/BoasKZ77}, we can implement this in $O(\log\log U)$ time and $O(|P|\log U)=o(|S|)$ bits of space\footnote{The standard implementation needs $O(U)$ bits of space, but can be reduced using hashing.}. To answer a predecessor/successor query on $x$, we find the chunk containing $x$ by querying the predecessor data structure, then run predecessor/successor queries in this chunk and its neighboring chunks.
\end{proof}

\CorMainRank*

\begin{proof}
    In addition to the previous data structure, we also build the data structure of \cite{patrascu2014dynamic} on the set of pivots, which performs rank/select operations in $O(\log|P|/\log \log U)$ time and uses $O(|P|\log U)$ bits of space. Note that we actually need \emph{weighted} rank/select queries on the pivots, which can be supported by a simple extension of \cite{patrascu2014dynamic}. Here, the weights correspond to the sizes of the chunks, which are at most $\poly\log U$. We also need to augment the small dictionaries \cref{lem:polylog-dict} with the ability to answer rank/select queries, which is easy and omitted.
\end{proof}

\section{Lower Bound}

\label{sec:lb}

In this section, we prove a space lower bound on difference-encoded dictionaries. Our lower bound exactly matches the upper bound of \cref{thm:main} (up to an additive $O(|S|\log\gapbar(S))$ term), despite the fact that the upper bound constructs a dynamic dictionary and the lower bound is for \emph{static} dictionaries. Moreover, we emphasize that our lower bound holds for dictionaries that can only answer \emph{membership} queries.

\begin{theorem}
    \label{thm:lb_main}
    Let $n\ge 1, U\ge n^3$ be integers, and let $\Omega(\log\log U/\log U)\le \varepsilon<1/4$ be a parameter. Suppose that there exists a randomized static data structure in the cell-probe model with word size $w=\Theta(\log U)$ that stores a set $S\subset[U]$ of $n$ keys, and uses $(1+O(\varepsilon))\gap(S)$
    bits of space in the worst case. Then the data structure answers membership queries in $\Omega(\log\varepsilon^{-1}/\log\log\varepsilon^{-1})$ expected time in the worst case.
\end{theorem}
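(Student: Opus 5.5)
We plan to prove \cref{thm:lb_main} by a round-elimination argument in the style of \cite{patrascu2010cellprobe}, exploiting the correlation between positive membership queries. Set $t$ as the target query time, and let $B=\Theta(\log\varepsilon^{-1}/\log\log\varepsilon^{-1})$ be the number of rounds we want to eliminate. We first fix a hard input distribution that is hierarchical, mirroring the trie construction of \cref{sec:trie-plus-btree}. Pick a branching factor $\beta=\poly(\varepsilon^{-1})^{1/B}$, so that $\beta^B\approx \varepsilon^{-1}$. The universe is recursively split into $\beta$ subintervals per level, with geometrically shrinking lengths, for $B$ levels. A random set is generated by planting, inside a block of $\varepsilon^{-1}$ keys, a random subtree of this hierarchy. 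At each level, one chooses a random subset of children to be populated, and the leaf keys are placed at random positions.

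The distribution will be arranged so that $\gap(S)$ is dominated by the ``fine'' leaf-level entropy. The information revealed at each higher level (which children are populated) then costs only about an $\varepsilon$-fraction of $\gap(S)$ in total. The point is that a data structure with space $(1+O(\varepsilon))\gap(S)$ essentially has no slack beyond the leaf-level entropy. Blocks are repeated $n\varepsilon$ times independently, and we take $U\ge n^3$ so there is room.

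The core step is the round-elimination lemma. Suppose a structure answers membership queries with $t$ probes and has redundancy $r$ bits over the entropy of the current distribution at level $i$. Then we obtain a structure for the level-$(i+1)$ distribution with $t-1$ probes and redundancy only slightly larger, $r+O(\text{something per block}\cdot w)$. The argument runs as follows. Look at the cells probed first by queries at a random level-$i$ location. By an encoding argument, either these first-probe cells carry $\Omega(\text{level-}i\text{ entropy})$ worth of information about which children are populated, or they do not. In the first case the remaining cells are an encoding of the rest that saves space. That saving would contradict the redundancy bound, because the total space must be at least the entropy of $S$. In the second case we fix the contents of the first-probe cells by publishing them, which costs $O(n\varepsilon\cdot w\cdot\beta^{O(1)})$ bits, a cost that is affordable against the slack. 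This gives a $(t-1)$-probe scheme for the sub-problem. The correlation between positive queries is what makes the first alternative work: if a query at a key $x$ answers ``yes'', the structure has implicitly learned the populated path above $x$, so cells used by early probes must encode that path.

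Iterating $B$ times, the redundancy grows by roughly a constant factor times the per-level level entropy each round. The budget $O(\varepsilon\gap(S))$ stays sufficient precisely because $\beta^B\approx\varepsilon^{-1}$ and $\log\beta\cdot B\approx\log\varepsilon^{-1}$. Once $t$ drops to $0$, a zero-probe structure cannot answer membership on the final level with nontrivial advantage, which is a contradiction. Hence $t\ge B$. For the randomized, expected-time version, we apply Yao's principle and Markov's inequality, restricting to the set of queries that finish within $2t$ probes.

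The main obstacle is the quantitative balancing inside the elimination step. The published first-round cells cost about $w$ bits each, while the per-level information is only about $\log\beta$ bits per key. We must also ensure that the entropy lost when conditioning on published cells stays within the $O(\varepsilon\gap(S))$ slack across all $B$ rounds, and this forces $\beta\approx(\log\varepsilon^{-1})^{\Theta(1)}$. The first thing to try is to choose level sizes so that level $i$ contributes about $\varepsilon\gap(S)/B$ bits of entropy. We then show by a compression (encoding) argument that any structure not revealing $\Omega(1/t)$ of it in the first probe loses only a $1/\beta$ fraction, which yields the $\log\varepsilon^{-1}/\log\log\varepsilon^{-1}$ bound.
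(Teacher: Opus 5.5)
Your proposal has a genuine gap. It sits exactly at the quantitative step you flag as the main obstacle, and tuning $\beta$ alone does not repair it.

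You charge each round's published cells against the slack $O(\varepsilon\gap(S))$. But $\gap(S)=O(nw)$, so that slack is only $O(\varepsilon n w)$ bits. Publishing even one $w$-bit cell per block over your $n\varepsilon$ blocks already costs $\Omega(\varepsilon n w)$. Your stated $O(n\varepsilon w\beta^{O(1)})$ bits per round, over $B$ rounds, overshoots the slack by a factor $B\beta^{O(1)}$.

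The root cause is how you calibrate the hard distribution. You let the hierarchical levels carry only an $\varepsilon$-fraction of $\gap(S)$, spread evenly as $\varepsilon\gap(S)/B$ per level, and a leaf payload carries the rest. Then the slack, whose constant is arbitrary, is a constant multiple of all the navigational entropy, and no level has enough entropy to absorb the published bits.

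The paper, via \cref{lem:lb_reduction}, calibrates the opposite way. It takes $k$ full binary trees of depth $d=\lfloor\log\varepsilon^{-1}\rfloor$ with an independent $l\approx\log U/d$-bit label on every edge. These are encoded as keys $a\circ s_1\circ Y_{i,s_{1\dots 1}}\circ\dots\circ s_d\circ Y_{i,s_{1\dots d}}$. The entropy of the top $j$ levels is then $\text{OPT}_{k,j,l}\approx 2^{j-d}\,\text{OPT}_{k,d,l}$, which grows geometrically with depth. The redundancy $O(\text{OPT}_{k,d,l}/2^d)$ therefore covers only the top $O(1)$ levels.

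In the paper the slack is charged exactly once, as the initial $p_0=T\cdot\text{OPT}_{k,d,l}/2^d$ published bits. After that, the published bits grow by a factor $O(d^2/\log d)$ per round. They are compared not with the slack but with $\text{OPT}_{k,j_r,l}$, where $j_r$ is the smallest level with $p_{r-1}<\text{OPT}_{k,j_r,l}/10$. Since this quantity doubles per level, each round descends only $O(\log d)$ levels, which is where the bound $d/\log d$ comes from.

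Your elimination step is also missing its actual source of compression. Your dichotomy says that either the first-probe cells carry the level-$i$ information, in which case ``the remaining cells save space'', or you publish them. This does not yield a saving, because information held in those cells is still paid for by those cells.

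The paper's \cref{lem:inner} instead uses a double encoding:
\begin{itemize}
\item Take the small canonical set $Q$ of positive queries along the paths $s\circ 0^{d-j^*}$ with $s\in\{0,1\}^{j^*}$. The cells $\text{Probe}_D(Q)$ determine all of $Y^{\le j^*}$.
\item Suppose fewer than $1/10$ of random positive queries touch $\text{Probe}_D(Q)$. Then the disjoint cells probed by $Q_-$ also reveal all but a $1/10$ fraction of $Y^{\le j^*}$, because a positive query certifies its whole root-to-leaf path.
\item Encoding the remaining cells conditioned on $Y^{\le j^*}$ then saves about $(7/10)\,\text{OPT}_{k,j^*,l}$ bits. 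This beats the published bits and the $M\le\text{OPT}_{k,d,l}/w$-bit bitmap, using $w=\omega(2^d)$.
\end{itemize}
The contrapositive gives the per-round progress: publishing $\text{Probe}_D(Q)$ lowers the \emph{average} probe count of positive queries by $1/10$, so there are $10t$ rounds. It does not remove one probe from every query. Publishing ``the first-probe cells'' of all membership queries could amount to publishing the whole memory.
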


By ``answers queries in $\Omega(\log\varepsilon^{-1}/\log\log\varepsilon^{-1})$ expected time in the worst case'', we mean that there exists an input set $S$ and a key $x\in [U]$, such that when initialized with $S$, the data structure answers the membership query on $x$ in $\Omega(\log\varepsilon^{-1}/\log\log\varepsilon^{-1})$ expected time.

The space bound of \cref{thm:lb_main} is different from \cref{thm:main} in that it did not contain the
\begin{align*}
    O(|S|\log \gapbar(S))+O(U^{\delta})
\end{align*}
terms. It is fine to ignore the $O(|S|\log \gapbar(S))$ term, because it does not matter when $\varepsilon\cdot\gap(S)\ge \Omega(|S|\log \gapbar(S))$, and when $\varepsilon\cdot\gap(S)=o(|S|\log \gapbar(S))$, it is not possible to construct a difference-encoded dictionary, as explained in the intro, so the lower bound is vacuously true. As for the $U^{\delta}$ term, it is negligible as long as $n=U^{\Omega(1)}$.

\subsection{A Reformulation of the Dictionary Problem}

To prove the lower bound, we take a carefully designed hard distribution over possible inputs, and bound the expected space usage of any time-efficient dictionary on this distribution. Instead of describing the hard distribution directly, we formulate a new data structure problem that characterizes the behavior of the dictionary on the hard distribution.

\paragraph{The (Static) Path Verification Problem.} In the path verification problem with integer parameters $k,d,l\ge 1$, the input is a collection of $l$-bit strings $\{Y_{i,s}\}$, where $1\le i\le k$ is an integer and $s\in \{0,1\}^{\le d}$ is a non-empty bit string. We are responsible for storing the input $\{Y_{i,s}\}$, and answering the following queries:

\begin{itemize}
    \item \textsc{Verify}$(i',s',Y'_1,\dots,Y'_d)$: The input consists of an integer $1\le i'\le k$, a $d$-bit string $s'\in \{0,1\}^d$ and $d$ $l$-bit strings $Y'_1,\dots,Y'_d\in \{0,1\}^l$. We are asked to return \textsc{true} if for every $1\le j\le d$, we have $Y_{i',s'_{1\dots j}}=Y'_j$, and \textsc{false} otherwise.
\end{itemize}

Intuitively, the input consists of $k$ trees, where each tree is a full binary tree of depth $d$ with an $l$-bit string attached to each edge. For each query, we are given a leaf from one of the trees and $d$ $l$-bit strings, and are asked to verify whether the values on the path from the leaf to its root match the input bit strings (hence the name of the problem).

\begin{figure}[H]
\centering
\begin{tikzpicture}[
  font=\footnotesize,
  nd/.style={circle, draw, minimum size=5pt, inner sep=0pt},
  el/.style={inner sep=0.5pt, fill=white, font=\scriptsize},
  treetitle/.style={font=\small\bfseries},
]
  \begin{scope}[xshift=-3.65cm]
    \node[treetitle, anchor=south] at (0,0.8) {Tree $1$};
    \node[nd] (r1) at (0,0) {};
    \node[nd] (10) at (-0.9,-1.05) {};
    \node[nd] (11) at (0.9,-1.05) {};
    \node[nd] (100) at (-1.35,-2.1) {};
    \node[nd] (101) at (-0.45,-2.1) {};
    \node[nd] (110) at (0.45,-2.1) {};
    \node[nd] (111) at (1.35,-2.1) {};
    \draw (r1) -- (10) node[el, midway, sloped, above] {$Y_{1,0}$};
    \draw (r1) -- (11) node[el, midway, sloped, above] {$Y_{1,1}$};
    \draw (10) -- (100) node[el, midway, sloped, above] {$Y_{1,00}$};
    \draw (10) -- (101) node[el, midway, sloped, above] {$Y_{1,01}$};
    \draw (11) -- (110) node[el, midway, sloped, above] {$Y_{1,10}$};
    \draw (11) -- (111) node[el, midway, sloped, above] {$Y_{1,11}$};
    \draw[red!75!black, line width=1pt] (r1) -- (10) -- (101);
    \node[red!75!black, font=\scriptsize, align=center] at (-0.1,-2.68) {leaf with\\$s'=\texttt{01}$};
  \end{scope}

  \begin{scope}[xshift=0cm]
    \node[treetitle, anchor=south] at (0,0.8) {Tree $2$};
    \node[nd] (r2) at (0,0) {};
    \node[nd] (20) at (-0.9,-1.05) {};
    \node[nd] (21) at (0.9,-1.05) {};
    \node[nd] (200) at (-1.35,-2.1) {};
    \node[nd] (201) at (-0.45,-2.1) {};
    \node[nd] (210) at (0.45,-2.1) {};
    \node[nd] (211) at (1.35,-2.1) {};
    \draw (r2) -- (20) node[el, midway, sloped, above] {$Y_{2,0}$};
    \draw (r2) -- (21) node[el, midway, sloped, above] {$Y_{2,1}$};
    \draw (20) -- (200) node[el, midway, sloped, above] {$Y_{2,00}$};
    \draw (20) -- (201) node[el, midway, sloped, above] {$Y_{2,01}$};
    \draw (21) -- (210) node[el, midway, sloped, above] {$Y_{2,10}$};
    \draw (21) -- (211) node[el, midway, sloped, above] {$Y_{2,11}$};
  \end{scope}

  \node[font=\Large] at (2.9,-1.05) {$\cdots$};

  \begin{scope}[xshift=5.75cm]
    \node[treetitle, anchor=south] at (0,0.8) {Tree $k$};
    \node[nd] (rk) at (0,0) {};
    \node[nd] (k0) at (-0.9,-1.05) {};
    \node[nd] (k1) at (0.9,-1.05) {};
    \node[nd] (k00) at (-1.35,-2.1) {};
    \node[nd] (k01) at (-0.45,-2.1) {};
    \node[nd] (k10) at (0.45,-2.1) {};
    \node[nd] (k11) at (1.35,-2.1) {};
    \draw (rk) -- (k0) node[el, midway, sloped, above] {$Y_{k,0}$};
    \draw (rk) -- (k1) node[el, midway, sloped, above] {$Y_{k,1}$};
    \draw (k0) -- (k00) node[el, midway, sloped, above] {$Y_{k,00}$};
    \draw (k0) -- (k01) node[el, midway, sloped, above] {$Y_{k,01}$};
    \draw (k1) -- (k10) node[el, midway, sloped, above] {$Y_{k,10}$};
    \draw (k1) -- (k11) node[el, midway, sloped, above] {$Y_{k,11}$};
  \end{scope}

\end{tikzpicture}
\caption{Illustration of an input of the path verification problem for $d=2$: the input consists of $k$ full binary trees. Edges are indexed by the bitstring~$s$ of the child reached from the root; each stores an $l$-bit label~$Y_{i,s}$. A \textsc{Verify} query specifies a tree~$i'$, a leaf path $s'\in\{0,1\}^d$, and candidate strings $Y'_1,\dots,Y'_d$, and checks that they match the labels along the root-to-leaf path (e.g.\ the red path checks $Y'_1=Y_{1,0}$ and $Y'_2=Y_{1,01}$ when $s'=\texttt{01}$ and $i'=1$).}
\label{fig:path-verification}
\end{figure}

For the path verification problem, we have the following lower bound. In the following, let
\begin{align*}
    \text{OPT}_{k,d,l}\defeq 2k\cdot (2^d-1)\cdot l
\end{align*}
denote the information-theoretically optimal space for storing the input of the path verification problem with parameters $k,d,l$.

\begin{restatable}{lemma}{lbPath}
    \label{lem:lb_path_v}
    Let $k,d,l\ge 1$ be integers, such that $dl=\omega(2^d)$ and $\log k=O(dl)$. Suppose that there exists a randomized static data structure in the cell-probe model with word size $w=\Theta(dl)$ that solves the path verification problem with parameters $k,d,l$ and uses $(1+O(1/2^d))\cdot \text{OPT}_{k,d,l}$ bits of space in the worst case. Then this data structure answers queries using $\Omega(d/\log d)$ expected time in the worst case.
\end{restatable}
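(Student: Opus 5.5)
We prove \cref{lem:lb_path_v} by a round-elimination argument in the spirit of \cite{patrascu2010cellprobe}. The plan is to show by induction on the depth that a data structure with redundancy $r$ bits and query time $t$ for parameters $(k,d,l)$ yields a data structure for parameters $(k',d',l)$. Here $d'$ is smaller than $d$ by roughly a $\log d$-type quantity, the query time is $t-1$, and the redundancy relative to $\text{OPT}$ grows by at most a controlled factor. Iterating $\Theta(d/\log d)$ rounds while the redundancy stays below $O(\text{OPT}/2^{d'})$ then contradicts the base case. The base case is that a zero-probe data structure cannot verify paths of length $d'\ge 1$ with constant error, since the answer depends on the input.

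\textbf{Setup of one round.} Fix the hard distribution to be uniform over $\{Y_{i,s}\}$. Queries are the \emph{positive} queries, whose candidate strings equal the true labels, together with perturbed negative ones. The key correlation is that a positive query for $(i',s')$ reveals all labels on its path, so the answer to the first probed cell carries information about the top layers of tree $i'$. We group the input into blocks. For a fixed tree, the top $\Delta\approx\log d$ levels form a subtree of $2^{\Delta}$ nodes, and beneath them hang $2^{\Delta}$ subtrees of depth $d-\Delta$. We view the instance as $k\cdot 2^{\Delta}$ path-verification instances of depth $d-\Delta$, conditioned on the top labels.

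\textbf{Elimination step.} Following \cite{patrascu2010cellprobe}, sample a random subset $C$ of cells, of size about an $\eta$ fraction of memory, and publish their contents. Also publish the top-level labels $Y_{i,s}$ with $|s|\le\Delta$; this costs $k(2^{\Delta+1}-2)l$ bits. The new data structure stores the remaining cells. A query of the smaller instance is simulated by prepending the known top labels and running the original algorithm. If its first probe falls in $C$, we remove one probe. We argue by an encoding/counting argument that, because space is only $(1+O(2^{-d}))\text{OPT}$, the first-probe cell cannot depend much on the specific top labels. Otherwise the published cells would encode more than their length about $\{Y_{i,s}\}$, using the fact that positive queries with distinct top labels must probe distinguishable cells. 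Choosing $\Delta=\Theta(\log d)$ balances the two costs: the published top labels are charged against the $2^{-\Delta}$ fraction of the optimum that the top levels represent, and the overlap failure probability is $1/\mathrm{poly}(d)$. The hypotheses $dl=\omega(2^d)$ and $\log k=O(dl)$ ensure three things: a cell address is at most $O(w)$ bits, $w=\Theta(dl)$ suffices to describe a path query, and publishing $O(1)$ cell addresses per tree is negligible. Error probabilities accumulate to $o(1)$ over $O(d/\log d)$ rounds.

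\textbf{Main obstacle.} The hardest part is the information-theoretic step. We must show that the redundancy $O(\text{OPT}/2^d)$ forces the first probe of most positive queries to be essentially predetermined, or else coverable by a random small cell set. This requires quantifying how much the cells probed first by positive queries reveal about the top labels, and charging any leakage against the tiny slack. I would attempt this by encoding $\{Y_{i,s}\}$ as follows. Write down the memory minus the cells in $C$, then the contents of $C$. For each tree $i$, the encoder then states whether a positive query's first probe lies in $C$. Whenever it does, the decoder can recover part of the top labels more cheaply than $l$ bits each. Comparing this encoding length with $\text{OPT}$ gives the required bound. The delicate point is keeping the per-round loss multiplicative by $1+O(1/d)$ rather than additive in $\text{OPT}$, so that $\Theta(d/\log d)$ rounds fit within the $2^{-d}$ slack.
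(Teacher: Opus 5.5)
\textbf{There is a genuine gap, at exactly the step you flag as the main obstacle, and the mechanism you sketch there cannot work.}

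Publishing a uniformly random $\eta$ fraction of the cells costs about $\eta\cdot\text{OPT}_{k,d,l}$ published bits. That is far beyond the $O(\text{OPT}_{k,d,l}/2^d)$ slack. Worse, $C$ is independent of the queries, so every probe of a fixed query (first or not) lands in $C$ with probability only $\eta$. A round therefore just trades an $\eta$ fraction of the memory for an $\eta$ fraction of the probes. Iterating this never yields a zero-probe structure whose published part is shorter than the input's entropy, so your base case gives no contradiction.

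The step meant to rescue this is the claim that ``the first-probe cell cannot depend much on the specific top labels.'' You cannot prove it. A positive query contains its labels, and structures meeting the space bound compute their first address from them; the paper's own upper bound hashes the entire key.

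Peeling off the top $\Delta$ levels and publishing their labels also does not help: it only shrinks $\text{OPT}$ and removes no probes, so the depth-reducing induction has nothing driving it. Your ``$(1+O(1/d))$ per-round loss'' picture is also off. The slack lets published information grow by a $\mathrm{poly}(d)$ factor per round, and $d/\log d=\log(2^d)/\log(\mathrm{poly}(d))$ is exactly the number of such rounds.

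The missing idea is the paper's inner lemma (\cref{lem:inner}). Instead of random cells, you publish $\text{Probe}_D(Q)$ for the structured set $Q=\{(i,s\circ 0^{d-j}):1\le i\le k,\ s\in\{0,1\}^{j}\}$. This is only $k2^j$ positive queries, costing $O(twk2^j)$ bits, yet their paths jointly cover every label at depth $\le j$. The lemma states: if normal space is $\le\text{OPT}_{k,d,l}$ and published bits are $<\text{OPT}_{k,j,l}/10$, then at least a $1/10$ fraction of all positive queries probe a cell of $\text{Probe}_D(Q)$.

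If not, Alice could encode the input in fewer than $\text{OPT}_{k,d,l}$ bits:
\begin{enumerate}
\item She sends the published bits, a bitmap of $\text{Probe}_D(Q_-)$, and its contents. Bob simulates queries and learns every path of $Q_-$, hence all but a $1/10$ fraction of the top-$j$ labels.
\item She sends the remaining top labels.
\item She sends the other cells conditioned on the top labels. This saves about $\text{OPT}_{k,j,l}$ bits, because these cells contain $\text{Probe}_D(Q)$, which determines every top-$j$ label. The saving beats the $2M$ bookkeeping overhead since $w=\omega(2^d)$.
\end{enumerate}

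The outer argument then never reduces $d$. It first publishes $T\cdot\text{OPT}_{k,d,l}/2^d$ bits. In round $r$ it takes the smallest $j_r$ with $p_{r-1}<\text{OPT}_{k,j_r,l}/10$ and publishes $\text{Probe}(Q)$ at that level. Each round lowers the average positive-query cost by $1/10$ and multiplies the published bits by $O(wt/l)=O(d^2/\log d)$. So $j_r$ advances by $O(\log d)$ per round, which is the correct form of your ``$\Delta=\Theta(\log d)$'' intuition. After $10t<10\eta d/\log d$ rounds, positive queries cost zero probes while fewer than $\text{OPT}_{k,d,l}$ bits are published, a contradiction.
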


The meaning of ``using $\dots$ expected time in the worst case'' is the same as in \cref{thm:lb_main}.

\paragraph{Reducing the dictionary problem to the path verification problem.} Before proving \cref{lem:lb_path_v}, we show that lower bounds on difference-encoded dictionaries are indeed implied by lower bounds on the path verification problem.

\begin{lemma}
    \label{lem:lb_reduction}
    Let $n\ge 1, U\ge n^3$ be integers, and let $\Omega(\log\log U/\log U)\le\varepsilon<1/4$ be a parameter. If \cref{lem:lb_path_v} holds on parameters $d=\lfloor \log \varepsilon^{-1}\rfloor,k=\lfloor n/2^d\rfloor,l=\lfloor(\lfloor \log U\rfloor-\lceil\log k\rceil)/d\rfloor-1$, then \cref{thm:lb_main} holds on parameters $n,U,\varepsilon$.
\end{lemma}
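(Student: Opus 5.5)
We argue by contraposition: from a dictionary $\mathcal{D}$ violating \cref{thm:lb_main} we build a path-verification structure violating \cref{lem:lb_path_v}. Fix $d,k,l$ as in the statement. Split the $\lfloor\log U\rfloor$-bit universe into a block of $\lceil\log k\rceil$ high bits naming a tree $i$, followed by $d$ fields of $l+1$ bits each, padded with low-order zeros.

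\textbf{Encoding an input as a set.} Given $\{Y_{i,s}\}$, put into $S$ one key $x_{i,s'}$ for every tree $i$ and leaf $s'\in\{0,1\}^d$. Its high bits are $i$, and its $j$-th field is the bit $s'_j$ followed by $Y_{i,s'_{1\dots j}}$. This gives $k2^d\le n$ keys; pad with up to $n-k2^d<2^d$ dummy keys placed far away in an otherwise unused region (or repeat a fixed pattern) so that $|S|=n$ exactly. The resulting map is injective, and \textsc{Verify}$(i',s',Y'_1,\dots,Y'_d)$ is exactly a membership query on the key built from $(i',s',Y')$. So a membership time bound transfers verbatim to the query time of the path-verification structure. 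Word sizes match, since $dl=\Theta(\log U)$ when $\log k\le\log n\le\frac13\log U$.

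\textbf{Gap accounting (main step).} Within tree $i$, list the $2^d$ keys in sorted (lexicographic) order. Two consecutive leaves whose paths first differ at depth $j$ (the deepest differing field) differ by at most $2^{(d-j+1)(l+1)+O(1)}$ in value. Their gap entropy is therefore at most $(d-j+1)(l+1)+O(1)$. A full binary tree has $2^{j-1}$ consecutive pairs that diverge at depth $j$. Summing gives
\[
\sum_{j=1}^d 2^{j-1}\bigl((d-j+1)(l+1)+O(1)\bigr),
\]
which is $(l+1)(2^{d+1}-d-2)+O(2^d)$. The edge count is $2(2^d-1)$, so this equals $2(2^d-1)l+O(2^d+d\,l)$.

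The gap between consecutive trees costs at most $\log U+O(1)=O(dl)$. Dummy keys cost $O(2^d\log U)$ bits in total, which is $O(2^d\,dl)$, i.e.\ $O(2^{-d})$ relative to one tree's content times $k$ as long as $k\ge 2^{2d}$. When $k$ is small, $n$ is small and we may assume $n\ge\varepsilon^{-3}$, or else argue the bound is vacuous; this boundary case I would handle separately. Summing everything yields
\[
\gap(S)\le\text{OPT}_{k,d,l}\bigl(1+O(1/l+2^d/(dl)+d/2^d)\bigr).
\]

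\textbf{Matching the constants.} The point to verify carefully is that the error terms are $O(2^{-d})=O(\varepsilon)$.
\begin{itemize}
\item The term $d\,l$ per tree against $2^{d+1}l$ gives $O(d/2^d)$. That is off by a factor of $d$, so I would tighten the count: each tree also saves the first key, and pairs diverging at depth $j$ actually change only the lower fields. To absorb the $d$ factor, I would redefine $d=\lfloor\log\varepsilon^{-1}\rfloor-c\log\log\varepsilon^{-1}$, which leaves $d/\log d$ asymptotically unchanged.
\item The $O(1)$ per pair also gives $O(2^d)$ per tree, i.e.\ relative error $O(1/l)$. This is $O(\varepsilon)$ since $l=\Theta(\log U/\log\varepsilon^{-1})$ and $\varepsilon\ge\log\log U/\log U$.
\end{itemize}
Then $(1+O(\varepsilon))\gap(S)\le(1+O(2^{-d}))\text{OPT}$, so $\mathcal{D}$ composed with the encoding is a path-verification structure within the space bound of \cref{lem:lb_path_v}. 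Its hypotheses also hold: $dl=\omega(2^d)$ follows from $\varepsilon\gg\log\log U/\log U$, and $\log k=O(dl)$ is immediate. The lemma then forces $\Omega(d/\log d)=\Omega(\log\varepsilon^{-1}/\log\log\varepsilon^{-1})$ query time.

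The main obstacle is this constant-factor bookkeeping, getting every lower-order loss down to $O(\varepsilon)\cdot\text{OPT}$. In particular, the per-pair $\lceil\cdot\rceil+1$ overhead and the $d\,l$ deficit of the root path must be charged precisely rather than loosely.
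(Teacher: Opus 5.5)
Your reduction (key layout and query map) is the same as the paper's. The gap is in the space accounting, which is the real content of the lemma. The bound you end with, $\gap(S)\le\text{OPT}_{k,d,l}\bigl(1+O(1/l+2^d/(dl)+d/2^d)\bigr)$, is not $(1+O(2^{-d}))\,\text{OPT}_{k,d,l}$. The factor-$d$ loss comes from two loose steps, not from the construction.

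First, the within-tree sum $(l+1)(2^{d+1}-d-2)+O(2^d)$ equals $2(2^d-1)(l+1)-d(l+1)+O(2^d)$. So the $dl$ term enters with a \emph{minus} sign: the first key of each tree, i.e.\ its root-to-leaf path of $d(l+1)$ bits, is not paid for inside the tree.

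Second, that path is paid for by the gap from the last key of tree $i$ to the first key of tree $i+1$. This gap is at most $d(l+1)+2$, not merely $\log U+O(1)$: consecutive tree codes differ by exactly one, so the difference is below $2^{d(l+1)+1}$. Charging $\log U$ instead wastes about $\lceil\log k\rceil$ bits per tree. That is $\Theta(dl)$ when $n=U^{\Theta(1)}$, and it would by itself bring back a $\Theta(d/2^d)$ relative loss.

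With both corrections, each tree costs at most $2(2^d-1)(l+1)+O(2^d)$. Hence $\gap(S)\le(1+O(1/l))\,\text{OPT}_{k,d,l}=(1+O(2^{-d}))\,\text{OPT}_{k,d,l}$, since $l=\Omega(\log U/\log\log U)=\Omega(2^d)$. The paper gets the same bound more cleanly from $\gap(S)\le\trie_U(S)+|S|$ and a direct count of trie edges. That count is $O(k)$ edges for the tree codes plus $l+1$ for each $i$ and each nonempty $s\in\{0,1\}^{\le d}$; the trie view charges each root path exactly once.

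Your proposed remedy, redefining $d=\lfloor\log\varepsilon^{-1}\rfloor-c\log\log\varepsilon^{-1}$, is not available. The lemma fixes $d=\lfloor\log\varepsilon^{-1}\rfloor$ (with $k,l$ defined from it), and its hypothesis is \cref{lem:lb_path_v} at exactly those parameters. It also would not work: \cref{lem:lb_path_v} at depth $d'$ tolerates only $(1+O(2^{-d'}))\,\text{OPT}$, so a real relative excess of $\Theta(d'/2^{d'})$ would be too large for every choice of $d'$.

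There are three smaller problems.
\begin{itemize}
\item Padding with low-order zeros multiplies every difference by $2^p$, where $p=\lfloor\log U\rfloor-\lceil\log k\rceil-d(l+1)$ can be as large as $d-1$. This adds up to $d-1$ bits per key, a relative loss of order $d/l=\Theta(d^2/\log U)$. That is not $O(2^{-d})$ when $\varepsilon$ is near $\log\log U/\log U$, so the unused bits should go on the high side, as in the paper.
\item The $2^d/(dl)$ term in your final bound is never justified. It is not $O(2^{-d})$ once $\varepsilon=o(1/\sqrt{\log U})$.
\item The dummy-key padding is left unfinished for $k<2^{2d}$. The paper does not pad at all; it simply stores the $k2^d\le n$ keys.
\end{itemize}
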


\begin{proof}
    We first verify the conditions of \cref{lem:lb_path_v}.
    When initialized as described, we have $2^d=\Theta(\varepsilon^{-1})$, $k=\Theta(n/\varepsilon^{-1})$ and
    \begin{align*}
        l&=\lfloor(\lfloor \log U\rfloor-\lceil\log k\rceil)/d\rfloor-1 \\
        &=\Theta(\log U/d)-1\tag{$k\le n\le U^{1/3}$} \\
        &=\Theta(\log U/d).\tag{$d\le \log\log U$}
    \end{align*}
    
    Therefore, $dl=\Theta(\log U)$ and $2^d=O(\varepsilon^{-1})=O(\log U/\log\log U)$, so we have $dl=\omega(2^d)$ and $\log k=O(dl)$. Moreover, the required word size of \cref{lem:lb_path_v} is $w=\Theta(dl)=\Theta(\log U)$, which is the same as that of \cref{thm:lb_main} on parameters $n,U,\varepsilon$.
    
    In the following, we prove the contrapositive of the lemma: Let $w=\Theta(dl)=\Theta(\log U)$. In the cell-probe model with word size $w$, given a dictionary $D$ storing $n$ keys from $[U]$ that uses $(1+O(\varepsilon))\gap(S)$ bits of space and answers membership queries using $T$ probes, we can construct a data structure that solves the path verification problem with parameters $k,d,l$ using $(1+O(1/2^d))\cdot \text{OPT}_{k,d,l}$ bits of space and answers queries using $T$ probes. This proves the lemma because, if \cref{lem:lb_path_v} holds on parameters $d,k,l$, then $T=\Omega(d/\log d)=\Omega(\log \varepsilon^{-1}/\log\log \varepsilon^{-1})$, so \cref{thm:lb_main} holds on parameters $n,U,\varepsilon$.

    \paragraph{The construction.} We solve the path verification problem by building a dictionary that stores a set of $k\cdot 2^d\le n$ keys $\{x_{i,s}\}$. For each $1\le i\le k,s\in \{0,1\}^d$, letting $a$ denote the length-$\lceil\log k\rceil$ binary representation of $i$, the binary representation of $x_{i,s}$ is
    \begin{align*}
        a\circ s_1 \circ Y_{i,s_{1\dots 1}} \circ \dots \circ s_d \circ Y_{i,s_{1\dots d}}.
    \end{align*}
    Note that the length of this binary representation is
    \begin{align*}
        \lceil \log k\rceil+d\cdot (l+1)\le \lfloor\log U\rfloor,
    \end{align*}
    by definition of $l$.

    On query $\textsc{Verify}(i',s',Y'_1,\dots,Y'_d)$, letting $a$ denote the length-$\lceil\log k\rceil$ binary representation of $i'$, we answer the query by querying the dictionary on the key whose binary representation is
    \begin{align*}
        a\circ s'_1 \circ Y'_1 \circ \dots \circ s'_d \circ Y'_d,
    \end{align*}
    i.e.\ the same interleaving as in the definition of the keys $\{x_{i,s}\}$, but with $s'$ in place of $s$ and with each block $Y_{i',s'_{1\dots j}}$ replaced by $Y'_j$.
    
    \paragraph{Correctness.} When answering $\textsc{Verify}(i',s',Y'_1,\dots,Y'_d)$, by our construction, the queried key must be distinct from any $x_{i,s}$ for which $(i,s)\ne (i',s')$, so the dictionary exactly reports whether the queried key is equal to $x_{i,s}$. This happens if and only if $Y'_j=Y_{i',s'_{1\dots j}}$ for every $j\in\{1,\dots,d\}$, which matches the definition of $\textsc{Verify}$.

    \paragraph{Efficiency.} Since we answer the query by directly accessing the dictionary, the total number of probes is exactly the same as the dictionary.
    
    It remains to bound the space usage. Since the dictionary uses $(1+O(\varepsilon))\gap(S)$ bits of space and $\varepsilon=O(1/2^d)$, we only have to show that $\gap(\{x_{i,s}\})\le(1+O(1/2^d))\cdot \text{OPT}_{k,d,l}$. Recall the definition of the trie entropy \eqref{equ:trie_entropy}, in particular, recall that $\gap(S)\le \trie_U(S)+|S|$ for any $S\subset[U]$. Therefore, to upper bound $\gap(\{x_{i,s}\})$, it suffices to upper bound $\trie_U(\{x_{i,s}\})$, which is the number of edges in the trie corresponding to $\{x_{i,s}\}$. \cref{fig:lb-reduction-trie} illustrates the trie in question. Note that each edge for some $Y_{i,s}$ in the figure corresponds to a path of $l$ edges in the trie.

    \begin{figure}[H]
\centering
\begin{tikzpicture}[
  font=\footnotesize,
  nd/.style={circle, draw, minimum size=5pt, inner sep=0pt},
  el/.style={inner sep=0.5pt, fill=white, font=\scriptsize},
]
  \node[nd] (R) at (0,0) {};

  \node[nd] (A0) at (-3.35,-0.58) {};
  \node[nd] (A1) at (2.25,-0.58) {};
  \draw (R) -- (A0) node[el, midway, sloped, above] {$\texttt{0}$};
  \draw (R) -- (A1) node[el, midway, sloped, above] {$\texttt{1}$};

  \node[nd] (T1) at (-5.3,-1.32) {};
  \node[nd] (T2) at (-1.4,-1.32) {};
  \node[nd] (T3) at (1.65,-1.32) {};
  \node[nd] (T4) at (2.85,-1.32) {};
  \draw (A0) -- (T1) node[el, midway, sloped, above] {$\texttt{0}$};
  \draw (A0) -- (T2) node[el, midway, sloped, above] {$\texttt{1}$};
  \draw (A1) -- (T3) node[el, midway, sloped, above] {$\texttt{0}$};
  \draw (A1) -- (T4) node[el, midway, sloped, above] {$\texttt{1}$};

  \node[nd] (B0) at (-6.0,-1.74) {};
  \node[nd] (B1) at (-4.6,-1.74) {};
  \draw (T1) -- (B0) node[el, midway, sloped, above] {$\texttt{0}$};
  \draw (T1) -- (B1) node[el, midway, sloped, above] {$\texttt{1}$};

  \node[nd] (J0) at (-6.55,-2.79) {};
  \node[nd] (J1) at (-4.05,-2.79) {};
  \draw (B0) -- (J0) node[el, midway, sloped, above] {$Y_{1,0}$};
  \draw (B1) -- (J1) node[el, midway, sloped, above] {$Y_{1,1}$};

  \node[nd] (C00) at (-7.05,-3.24) {};
  \node[nd] (C01) at (-6.05,-3.24) {};
  \node[nd] (C10) at (-4.55,-3.24) {};
  \node[nd] (C11) at (-3.55,-3.24) {};
  \draw (J0) -- (C00) node[el, midway, sloped, above] {$\texttt{0}$};
  \draw (J0) -- (C01) node[el, midway, sloped, above] {$\texttt{1}$};
  \draw (J1) -- (C10) node[el, midway, sloped, above] {$\texttt{0}$};
  \draw (J1) -- (C11) node[el, midway, sloped, above] {$\texttt{1}$};

  \node[nd] (L00) at (-7.35,-4.29) {};
  \node[nd] (L01) at (-5.85,-4.29) {};
  \node[nd] (L10) at (-4.75,-4.29) {};
  \node[nd] (L11) at (-3.25,-4.29) {};
  \draw (C00) -- (L00) node[el, midway, sloped, above] {$Y_{1,00}$};
  \draw (C01) -- (L01) node[el, midway, sloped, above] {$Y_{1,01}$};
  \draw (C10) -- (L10) node[el, midway, sloped, above] {$Y_{1,10}$};
  \draw (C11) -- (L11) node[el, midway, sloped, above] {$Y_{1,11}$};

  \draw[red!75!black, line width=1pt] (R) -- (A0) -- (T1) -- (B1) -- (J1) -- (C10) -- (L10);
  \node[red!75!black, font=\scriptsize, align=center] at (-4.55,-5.05) {leaf\\$a{=}\texttt{00}$, $s{=}\texttt{10}$};

  \node[font=\Large] at (-1.4,-2.05) {$\cdots$};
  \node[font=\Large] at (1.65,-2.05) {$\cdots$};
  \node[font=\Large] at (2.85,-2.05) {$\cdots$};
  \node[font=\scriptsize, align=center] at (0.85,-3.85) {analogous\\subtries\\for $i{=}2,3,4$};
\end{tikzpicture}
\caption{Binary trie of the $k\cdot 2^d$ keys stored in the reduction from \cref{lem:lb_reduction}, illustrated for $d=2$ and $k=4$ (so $\lceil\log k\rceil=2$ bits for~$a$). Edge labels $\texttt{0}$ and $\texttt{1}$ denote single bits of~$a$ or of the path~$s$; each $Y_{i,s}$ marks an edge of length $l$ bits in the key (cf.\ \cref{fig:path-verification}). The red path reaches the third leaf in the leftmost drawn subtrie (tree code $a=\texttt{00}$, leaf path $s=\texttt{10}$).}
\label{fig:lb-reduction-trie}
\end{figure}

    From \cref{fig:lb-reduction-trie}, we see that the trie consists of the following parts:
    \begin{itemize}
        \item The bits that make up the codes $a$, which use $O(k)$ edges in total.
        \item For each pair of $1\le i\le k$ and non-empty $s\in\{0,1\}^{\le d}$, there is one bit followed by an $l$-bit string $Y_{i,s}$. These use $2k\cdot(2^d-1)\cdot (l+1)$ edges in total.
    \end{itemize}
    The total number of trie edges is
    \begin{align*}
        O(k)+2k\cdot(2^d-1)\cdot (l+1) = (1+O(1/l))\cdot \text{OPT}_{k,d,l}.
    \end{align*}
    Since $d=O(\log U/\log\log U)$ and $l=\Theta(\log U/d)$, we have $l=\Omega(\log U/\log\log U)=\Omega(\varepsilon^{-1})=\Omega(2^d)$. Therefore, the trie entropy is at most $(1+O(1/l))\cdot \text{OPT}_{k,d,l}$, and
    \begin{align*}
        \gap(\{x_{i,s}\})&\le (1+O(1/l))\cdot \text{OPT}_{k,d,l}+k\cdot 2^d \\
        &\le (1+O(1/l))\cdot \text{OPT}_{k,d,l}\tag{$k\cdot 2^d=O(\text{OPT}_{k,d,l}/l)$} \\
        &=(1+O(1/2^d))\cdot \text{OPT}_{k,d,l},
    \end{align*}
    as desired.
\end{proof}

\subsection{Outer Lemma}

In this section, we prove the lower bound on the path verification problem. We adopt the framework of~\cite{patrascu2010cellprobe}, and tweak the arguments to only focus on the positive queries (i.e., the queries that return $\textsc{true}$).

\lbPath*

\paragraph{Notations and definitions.} Fix some constant $T$ that is independent of $k,d,l$, and let $\eta>0$ be a sufficiently small constant that only depends on $T$. Assume for the sake of contradiction that there exists a data structure $D$ that uses $(1+T\cdot (1/2^d))\cdot \text{OPT}_{k,d,l}$ bits of space, and answers queries using $t<\eta\cdot d/\log d$ expected probes in the worst case. We will show that such $D$ cannot exist for $d\ge T+10$ (for $d<T+10$, the bound $t<\eta\cdot d/\log d$ is trivial since we can take $\eta$ to be sufficiently small). In the following, the big-O notations do not hide dependencies on $\eta$ and $T$.

For $1\le i\le k,s\in \{0,1\}^d$, let $\text{Probe}_{D}(i,s)$ denote the words probed when answering the \emph{positive} query $\textsc{Verify}(i,s,Y_{i,s_{1\dots 1}},\dots,Y_{i,s_{1\dots d}})$ on data structure $D$. For a set of tuples $S=\{(i,s):1\le i\le k,s\in \{0,1\}^d\}$, let $\text{Probe}_{D}(S)$ denote the union of the sets $\text{Probe}_{D}(i,s)$ where $(i,s)\in S$. Given a set of words $C$, let $\text{Cont}_D(C)$ be the concatenation of the contents in $C$, which is a bit string of length $w\cdot |C|$.

Following \cite{patrascu2010cellprobe}, our lower bound is going to construct data structures with \textbf{published bits}. That is, in addition to the normal memory, the data structure also has access to a string of published bits, which the query algorithm can access at no cost. For simplicity, we assume that the published bits also encode their own size, so that when reading the published bits, the query algorithm knows when to stop.

\paragraph{The round elimination process.} We prove the lower bound by showing that we can convert $D$ into a data structure that has $<\text{OPT}_{k,d,l}$ published bits and answers any \emph{positive} query using $0$ probes, which contradicts the information-theoretic lower bound for encoding the input. Similarly to \cite{patrascu2010cellprobe}, we need the following inner lemma.

\begin{restatable}{lemma}{InnerLem}
    \label{lem:inner}
    Let $k,d,l\ge 1$ be integers. Let $1\le j^*\le d$ be an integer. Suppose that there exists a data structure $D$ in the cell-probe model with word size $w=\omega(2^d)$ for the path verification problem with parameters $k,d,l$, that uses $\le \text{OPT}_{k,d,l}$ bits of normal space along with $< \text{OPT}_{k,j^*,l}/10$ published bits. Let $Q\defeq\{(i,s):1\le i\le k,s\in \{0,1\}^{j^*}\circ 0^{d-j^*}\}$. We have that
    \begin{align*}
        \Pr_{1\le i\le k,s\in \{0,1\}^d}[\text{Probe}_D(i,s)\cap\text{Probe}_D(Q)\ne \emptyset]\ge \frac{1}{10},
    \end{align*}
    where the probability is over the randomness of the positive query $(i,s)$ and the input $\{Y_{i,s}\}$.
\end{restatable}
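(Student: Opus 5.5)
We argue by contradiction, by an encoding argument in the style of \cite{patrascu2010cellprobe}. Suppose the probability in \cref{lem:inner} is less than $1/10$. We will design an encoding of the whole input $\{Y_{i,s}\}$ that uses fewer than $\text{OPT}_{k,d,l}$ bits in expectation, which is impossible since the input is uniform over $2^{\text{OPT}_{k,d,l}}$ values.

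\textbf{Step 1: the subtrees hanging below $Q$.} Let $F\defeq\{Y_{i,s}: |s|\le j^*\}$ be the labels on the top $j^*$ levels of every tree. Every positive query $(i,s)$ with $s\in Q$ has its whole path inside the top part, so the answers on $Q$ depend only on $F$. The remaining labels $Y_{i,s}$ with $|s|>j^*$ form $k2^{j^*}$ subtrees of depth $d-j^*$. Call a positive query $(i,s)$ \emph{good} if $\text{Probe}_D(i,s)$ avoids $\text{Probe}_D(Q)$. By assumption and Markov's inequality, with constant probability over the input, a $\ge 8/10$ fraction of leaves $(i,s)$ are good.

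\textbf{Step 2: the encoding.} The encoder writes the published bits ($<\text{OPT}_{k,j^*,l}/10$ bits). It then writes the contents of $\text{Probe}_D(Q)$ (addresses and contents, $O(|Q|\,t\,w)$ bits in expectation) and all normal memory cells outside $\text{Probe}_D(Q)$ verbatim. The latter costs at most $\text{OPT}_{k,d,l}$ minus the bits of the cells in $\text{Probe}_D(Q)$.

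This alone saves nothing, so the saving must come from $F$. The decoder can recover $F$ without it being written separately, because $F$ is determined by the full memory together with the query algorithm on the good leaves. However, writing the memory costs the same as before. The actual gain has to be that the top labels are \emph{never written}: they are reconstructed by simulating positive queries.

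Concretely, the decoder enumerates candidate strings $(Y'_1,\dots,Y'_d)$ for each $(i,s)$ and runs \textsc{Verify}. For a good query, the probes never touch $\text{Probe}_D(Q)$. So if the encoder omits the cells of $\text{Probe}_D(Q)$ instead of writing them, the decoder can still simulate every good query on the remaining memory, while treating the $Q$-queries as having been answered through the published bits.

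The intended accounting is as follows. We delete the $\le t|Q|$ cells of $\text{Probe}_D(Q)$. The labels in $F$ (about $\text{OPT}_{k,j^*,l}$ bits of entropy) are then recovered through the positive $Q$-queries, because the decoder can test candidate top labels against the published bits and the remaining memory. The cost of recording which leaves are good is $O(k2^d)$ bits, which is $o(\text{OPT})$ since $w=\omega(2^d)$ and $l$ is large. Balancing these terms gives
\[
\text{total}\le \text{OPT}_{k,d,l}-\Omega(\text{OPT}_{k,j^*,l})+\text{OPT}_{k,j^*,l}/10+o(\text{OPT}_{k,d,l})<\text{OPT}_{k,d,l}.
\]

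\textbf{Main obstacle.} The hard step is making the decoder correctly identify the true top labels while using only good queries. A positive query succeeding on the true path does not certify uniqueness, since false candidates could also return \textsc{true} if the simulation reads cells the decoder does not know. I would handle this as Pătraşcu does: the encoder also records a short hint (the set of probed cells outside the kept memory, $O(t\log)$ bits per queried leaf) for a carefully chosen $1/10$-dense sample of leaves. I would then argue that for a good leaf, the simulation with the omitted cells being unknown is equivalent to the real execution. This equivalence is what makes the recovered $F$ correct, and verifying it is where I expect most of the work.
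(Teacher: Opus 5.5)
\textbf{There is a genuine gap: your encoding never produces a saving, and the $-\Omega(\text{OPT}_{k,j^*,l})$ term in your final accounting has no source.}

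Once the words of $\text{Probe}_D(Q)$ are omitted, the decoder cannot run any $Q$-query. Each one probes an omitted word, and fewer than $\text{OPT}_{k,j^*,l}/10$ published bits cannot stand in for them. So $F$ cannot be ``recovered through the positive $Q$-queries''. Note also that the $Q$-paths continue to depth $d$ along zero tails, so their answers do not depend only on $F$.

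Moreover, the omitted words are the only source for the labels below level $j^*$ on bad leaves, including those zero tails. Re-sending these labels can cost up to about $\text{OPT}_{k,d,l}/10$ bits. All you save by deleting is $w|\text{Probe}_D(Q)|$, and nothing forces this to be anywhere near $\text{OPT}_{k,d,l}/10$ when $j^*$ is small. If you instead encode the omitted words conditioned on $F$ and the rest of memory, you gain only $H(F\mid\text{rest})$. That quantity is small precisely because the good queries already reveal most of $F$.

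The same scale issue breaks your error term. Since the available saving is of order $\text{OPT}_{k,j^*,l}$, which for $j^*=1$ is about $\text{OPT}_{k,d,l}/2^d$, every overhead must be $o(\text{OPT}_{k,d,l}/2^d)$, not merely $o(\text{OPT}_{k,d,l})$. An $M$-bit bitmap over memory words meets this, since $M\le \text{OPT}_{k,d,l}/w$ and $w=\omega(2^d)$. A $k2^d$-bit list of good leaves need not.

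\textbf{The missing idea} is that $F$ is determined by each of two \emph{disjoint} parts of memory, so the memory carries roughly $\text{OPT}_{k,j^*,l}$ bits of redundancy. The paper's protocol exploits this as follows.

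Split the words into $\text{Probe}_D(Q_-)$, the union of the probe sets of the good queries, and its complement $C\supseteq \text{Probe}_D(Q)$. Send $P$, an $M$-bit bitmap of $\text{Probe}_D(Q_-)$, and the contents of these words verbatim.

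The decoder then simulates every \textsc{Verify} query, aborting as soon as a word outside the known set is probed. Every query in $Q_+$ hits $\text{Probe}_D(Q)$, so this identifies exactly the good queries together with their full paths. Your ``main obstacle'' therefore disappears and no hints are needed. The labels of $F$ lying on no good path form an expected fraction below $1/10$, so they cost less than $\text{OPT}_{k,j^*,l}/10$ bits to send.

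Finally, send $\text{Cont}_D(C)$ encoded conditioned on $P$, $C$ and $F$. Every $s\in\{0,1\}^{\le j^*}$ is a prefix of the $Q$-leaf $s\circ 0^{d-|s|}$, so simulating the $Q$-queries inside $C$ recovers $F$; that is, $H(F\mid \text{Cont}_D(C),P,C)=0$. Hence this step costs less than $w|C|-H(F)+H(P)+H(C)\le w|C|-\frac{9}{10}\text{OPT}_{k,j^*,l}+M$.

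Summing, the total is below $wM+2M-\frac{7}{10}\text{OPT}_{k,j^*,l}$. This is less than $\text{OPT}_{k,d,l}$ because $wM\le\text{OPT}_{k,d,l}$, $\text{OPT}_{k,j^*,l}\ge \text{OPT}_{k,d,l}/2^d$, and $w=\omega(2^d)$.
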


Note that we can apply \cref{lem:inner} on our data structure because in our case, $w=\Theta(dl)=\omega(2^d)$.

The proof of \cref{lem:inner} is deferred to \cref{sec:inner}. Here, we design the round elimination process using \cref{lem:inner}.

Initially, we publish the last $p_0=T\cdot\text{OPT}_{k,d,l}/2^d$ bits of $D$, so that $D$ only uses $\text{OPT}_{k,d,l}$ bits of normal space (allowing us to apply \cref{lem:inner}). We call this new data structure $D_0$. 

Our process consists of at most $10t$ rounds. For each round $r=1\dots 10t$, we start with the data structure $D_{r-1}$ that answers \emph{positive} queries using $\le t-(r-1)/10$ probes on average (in expectation over a uniformly random input), and has $p_{r-1}$ published bits. We find the smallest level $1\le j_r\le d$ such that $p_{r-1}<\text{OPT}_{k,j_r,l}/10$ (that is, $2k\cdot(2^{j_r}-1)\cdot l=\Theta(p_{r-1})$). If there does not exist such a level, we terminate the round elimination process. Let
\begin{align*}
    C_r\defeq\text{Probe}_{D_{r-1}}(\{(i,s):1\le i\le k,s\in \{0,1\}^{j_r}\circ 0^{d-j_r}\}).
\end{align*}
We publish the words in $C_r$ (i.e., both the addresses and the contents) in $D_{r-1}$, and call the resulting data structure $D_r$. In $D_r$, when answering a query, we will not probe a word in the normal memory if it's already published. By \cref{lem:inner}, for $D_{r-1}$, at least $1/10$ of all positive queries probe some word in $C_r$. Therefore, the average number of probes needed to answer a positive query on $D_r$ is at most $t-r/10$, i.e., $1/10$ less than that of $D_{r-1}$.

\paragraph{Bounding the number of rounds.} Note that $C_r$ is the union of the probe sequences of $k\cdot 2^{j_r}$ positive queries, so $|C_r|\le t\cdot k\cdot 2^{j_r}$. The number of published bits in $D_r$ is thus
\begin{align*}
    p_r&=p_{r-1}+O(w\cdot t\cdot k\cdot 2^{j_r}) \\
    &=p_{r-1}+O(dl\cdot (d/\log d)\cdot k\cdot 2^{j_r}\cdot p_{r-1}/(2k\cdot(2^{j_r}-1)\cdot l)) \\
    &=O(d^2/\log d)\cdot p_{r-1}.
\end{align*}
Here we used the fact that $t<d/\log d$ (since $\eta$ is sufficiently small), so the big-O notation does not depend on $\eta$.

The only times that the round elimination process terminates are when $r=10t$ and when $p_{r-1}\ge \text{OPT}_{k,d,l}/10$ at some round $r$. Since $t<\eta\cdot d/\log d$, we have
\begin{align*}
    p_{10t}&=(O(d^2/\log d))^{10t}\cdot T\cdot(\text{OPT}_{k,d,l}/2^d) \\
    &\le 2^{10\eta\cdot (d/\log d)\cdot O(\log (d^2/\log d))}\cdot T\cdot(\text{OPT}_{k,d,l}/2^d) \\
    &=\text{OPT}_{k,d,l}\cdot T\cdot(2^{O(\eta d)}/2^d) \\
    &<\text{OPT}_{k,d,l}/10. \tag{$\eta$ is sufficiently small given $T$, and $d\ge T+10$}
\end{align*}
Therefore, we will only terminate when $r=10t$, at which time we have published $p_{10t}<\text{OPT}_{k,d,l}$ bits, so the final data structure $D_{10t}$ answers positive queries using $0$ probes. This is a contradiction. Therefore, we must have $t\ge \eta\cdot d/\log d$.

\subsection{Inner Lemma}
\label{sec:inner}

Finally, it remains to prove the inner lemma.

\InnerLem*

\begin{proof}
    Assume for the sake of contradiction that \cref{lem:inner} does not hold. In this case, we design an impossibly efficient one-way communication protocol that allows Alice to send the entire input $\{Y_{i,s}\}$ to Bob using fewer than $\text{OPT}_{k,d,l}$ bits.

    \paragraph{Notations.} Suppose that $D$ is a counterexample of the lemma, and define $k,d,l,j^*,Q$ as in the statement of the lemma. Let $P$ denote the published bits of $D$, and let $M\le\lfloor\text{OPT}_{k,d,l}/w\rfloor$ denote the number of words in $D$.
    
    Define
    \begin{align*}
        Q_{+}\defeq\{(i,s):1\le i\le k,s\in \{0,1\}^d,\text{Probe}_D(i,s)\cap \text{Probe}_D(Q)\ne\emptyset\}
    \end{align*}
    to be the set of positive queries whose probe sequence intersects $\text{Probe}_D(Q)$, and $Q_{-}\defeq\{(i,s):1\le i\le k,s\in \{0,1\}^d\}\setminus Q_{+}$ denote all the remaining positive queries. Note in particular that $Q_+$ contains $Q$. Since $D$ is a counterexample of the lemma, we have $\Pr_{1\le i\le k,s\in \{0,1\}^d}[(i,s)\in Q_+]< 1/10$.

    \paragraph{The communication protocol.} Initially, Alice knows the input $\{Y_{i,s}\}$. She builds the data structure $D$ on $\{Y_{i,s}\}$ and sends the following information to Bob in the listed order:
    \begin{enumerate}
        \item The published bits $P$, using $<\text{OPT}_{k,j^*,l}/10$ bits.
        \item The set of words $\text{Probe}_D(Q_-)$. Formally, we send a bitmap of $M$ bits denoting whether each word belongs to $\text{Probe}_D(Q_-)$.
        \item The contents $\text{Cont}_D(\text{Probe}_D(Q_-))$ of the words in $\text{Probe}_D(Q_-)$, using $w\cdot |\text{Probe}_D(Q_-)|$ bits.

        \item The values $Y^{\le j^*}\defeq \{Y_{i,s}:1\le i\le k,s\in \{0,1\}^{\le j^*}\}$, encoded optimally conditioned on previous messages (note the unusual range of $s$).

        \item The contents $\text{Cont}_D([M]\setminus\text{Probe}_D(Q_-))$ of the words outside of $\text{Probe}_D(Q_-)$, encoded optimally conditioned on previous messages.

        The costs of steps 4 and 5 are analyzed in more detail later.
    \end{enumerate}
    After receiving the above information, Bob can easily recover $\{Y_{i,s}\}$ because he knows the entire data structure $D$ from steps 1, 3 and 5. It remains to bound the communication cost.

    \paragraph{The communication cost of step 4.} The cost of directly sending $Y^{\le j^*}=\{Y_{i,s}:1\le i\le k,s\in \{0,1\}^{\le j^*}\}$ is exactly $\text{OPT}_{k,j^*,l}$ bits. However, we can do better than this by conditioning on the information sent in the previous steps.
    
    Note that, after step 3, Bob can privately simulate all queries $\textsc{Verify}(\cdot)$, and learn the identities of all the \emph{positive} queries that can be answered by only probing the words in $\text{Probe}_D(Q_-)$. Also note that, since $\text{Probe}_D(Q_-)$ is disjoint from $\text{Probe}_D(Q)$, the positive queries that Bob can identify are exactly the positive queries in $Q_-$. In other words, after step 3, Bob knows $Q_-$ as well as the values $Y_{i,s_{1\dots j}}$ for every $(i,s)\in Q_-$ and $1\le j\le d$.

    Given this knowledge, in step 4, Alice does not need to send the value of some $Y_{i,s}$ if there exists $(i,s')\in Q_-$ for which $s$ is a prefix of $s'$ (in which case Bob already knows $Y_{i,s}$). That is, the fraction of values $Y_{i,s}$ that Alice needs to send in step 4 is
    \begin{align*}
        &\Pr_{1\le i\le k,s\in \{0,1\}^{\le j^*}}\Bk*{\text{for any } s'\in\{0,1\}^d \text{ such that }s\text{ is a prefix of }s',(i,s')\notin Q_-} \\
        \le{}&\E_{1\le i\le k,s\in \{0,1\}^{\le j^*}}\Bk*{\Pr_{s'\in \{0,1\}^d,s\text{ is a prefix of }s'}\Bk*{(i,s')\notin Q_-}} \\
        \le{}&\Pr_{1\le i\le k,s'\in \{0,1\}^{\le d}}\Bk*{(i,s')\notin Q_-},
    \end{align*}
    which is smaller than $1/10$ since
    \begin{align*}
        \Pr_{1\le i\le k,s\in \{0,1\}^d}[(i,s)\notin Q_-]< \frac{1}{10}
    \end{align*}by our assumption that $D$ is a counterexample. Therefore, the cost of step 4 is smaller than $(1/10)\cdot \text{OPT}_{k,j^*,l}$ bits.

    \paragraph{The communication cost of step 5.} Let $C\defeq [M]\setminus\text{Probe}_D(Q_-)$ for notational simplicity. The cost of step 5 under an optimal encoding is at most
    \begin{align*}
        H(\text{Cont}_D(C)\mid P,C,Y^{\le j^*}),
    \end{align*}
    where $P$ is given by step 1, $C=[M]\setminus\text{Probe}_D(Q_-)$ is given by step 2 and $Y^{\le j^*}$ is given by step 4. We can rewrite this cost as
    \begin{align}
        \label{equ:cost_step_5}
        &H(\text{Cont}_D(C)\mid P,C,Y^{\le j^*}) \nonumber\\
        ={}&H(\text{Cont}_D(C),Y^{\le j^*}\mid P,C) -H(Y^{\le j^*}\mid P,C) \nonumber\\
        ={}&H(\text{Cont}_D(C)\mid P,C)+H(Y^{\le j^*}\mid\text{Cont}_D(C),P,C)-H(Y^{\le j^*}\mid P,C) \nonumber\\
        \le{}&w\cdot |C|+H(Y^{\le j^*}\mid\text{Cont}_D(C),P,C)-H(Y^{\le j^*})+H(C)+H(P) \nonumber\\
        <{}&w\cdot (M-|\text{Probe}_D(Q_-)|)+H(Y^{\le j^*}\mid\text{Cont}_D(C),P,C)-\text{OPT}_{k,j^*,l}+M+\text{OPT}_{k,j^*,l}/10.
    \end{align}
    
    All other terms are known, so it remains to bound $H(Y^{\le j^*}\mid\text{Cont}_D(C),P,C)$. Observe that the set of words $C=[M]\setminus\text{Probe}_D(Q_-)$ contains $\text{Probe}_D(Q)$. Indeed, this is because no positive query in $Q_-$ probes any word in $\text{Probe}_D(Q)$. Therefore, given $\text{Cont}_D(C),P,C$, we can simulate the queries as in step 4, and learn everything about the positive queries in $Q$. In particular, by definition of $Q$, for every $1\le i\le k,s\in \{0,1\}^{\le j^*}$, we always have $(i,s\circ 0^{d-\text{length}(s)})\in Q$, which means that we know the value of $Y_{i,s}$ from the simulation. Therefore, $H(Y^{\le j^*}\mid\text{Cont}_D(C),P,C)=0$ and \eqref{equ:cost_step_5} is equal to
    \begin{align*}
        &w\cdot (M-|\text{Probe}_D(Q_-)|)-\text{OPT}_{k,j^*,l}+M+\text{OPT}_{k,j^*,l}/10 \\
        ={}&w\cdot M-w\cdot |\text{Probe}_D(Q_-)|-(9/10)\cdot\text{OPT}_{k,j^*,l}+M.
    \end{align*}

    \paragraph{Putting everything together.} Summing up the cost of every step, the communication protocol costs
    \begin{align*}
        &(1/10)\cdot\text{OPT}_{k,j^*,l}+M+w\cdot |\text{Probe}_D(Q_-)| \\
        +{}&(1/10)\cdot\text{OPT}_{k,j^*,l}+w\cdot M-w\cdot |\text{Probe}_D(Q_-)|-(9/10)\cdot \text{OPT}_{k,j^*,l}+M \\
        ={}&w\cdot M+2M-(7/10)\cdot \text{OPT}_{k,j^*,l} \\
        \le{}&\text{OPT}_{k,d,l}+2(\text{OPT}_{k,d,l}/w)-(7/10)\cdot \text{OPT}_{k,j^*,l}\tag{by definition of $M$} \\
        \le{}&\text{OPT}_{k,d,l}+2(\text{OPT}_{k,d,l}/w)-(7/20)\cdot(\text{OPT}_{k,d,l}/2^{d-j^*})\tag{by definition of $\text{OPT}$} \\\le{}&\text{OPT}_{k,d,l}+2(\text{OPT}_{k,d,l}/w)-(7/10)\cdot(\text{OPT}_{k,d,l}/2^{d})\tag{$j^*\ge 1$} \\
        <{}&\text{OPT}_{k,d,l}\tag{$w=\omega(2^d)$}
    \end{align*}
    bits of communication, which is a contradiction.
\end{proof}

\section*{Acknowledgements} 

This research was funded by NSF grant CCF-2504471, NSF grant CCF-2119069, Jane Street Graduate Research Fellowship and MongoDB PhD Fellowship.

The use of AI tools in this paper is limited to checking the correctness of the results, checking the grammar and math formulas, and creating the figures.

\bibliographystyle{alpha}
\bibliography{reference.bib}

\appendix

\section{Tools for the Variable-Length Word RAM Model}
\label{app:vl_word_RAM}

In this appendix, we formally present some tools for the variable-length word RAM model that are used in the main text.

\paragraph{Relation between the word RAM model and the variable-length word RAM model.} The variable-length word RAM model is at least as powerful as the word RAM model. Indeed, any data structure in the word RAM model can also be implemented in the variable-length word RAM model (with a small space overhead).

\begin{claim}
    \label{clm:word_RAM_to_vl}
    Let $w$ be a parameter. Any data structure in the word RAM model with word size $w'=\Theta(w)$ that uses less than $2^w$ bits of space can be implemented as a legitimate data structure in the variable-length word RAM model with word size $w$. The new data structure has asymptotically the same running time as the old data structure for performing the operations, and its space usage is at most $(1+O(\log w/w))$ times the space usage of the old data structure. Furthermore, the address limit of the new data structure is at most
    \begin{align*}
        O((\text{maximum space usage})/w).
    \end{align*}
\end{claim}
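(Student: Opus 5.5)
We need to prove \cref{clm:word_RAM_to_vl}: simulate a word RAM data structure with word size $w'=\Theta(w)$ in the variable-length model with word size $w$.

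The plan is a direct block simulation. First, fix a constant $c$ so that $w'\le c\,w$. Each word-RAM word of $w'$ bits is split into at most $c$ pieces of length at most $w$. Word-RAM address $a$ is mapped to variable-length addresses $ca,\dots,ca+c-1$, and all of these are kept permanently active with full-length content. A read or write of one old word then costs $O(1)$ variable-length operations, and arithmetic on $w'$-bit integers is simulated by $O(1)$ operations on $w$-bit integers. Running time is therefore preserved asymptotically.

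The one real issue is space. In the variable-length model, every active word costs $C_{retr}\log w$ extra bits, and we must never touch memory the old structure does not use.

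1. \emph{Which cells to activate.} The old structure's space usage is its number of cells times $w'$ bits, and it accesses only the first $m$ words, where $m\cdot w'$ is its space usage. So we activate exactly the physical words corresponding to old cells $0,\dots,m-1$.

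2. \emph{When the space bound grows.} When $m$ increases, we activate the new cells' words and initialize their contents to zero. When $m$ decreases, we deactivate them. This keeps the simulation legitimate, because no idle word is ever read or written, and no active word is ever re-activated.

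3. \emph{Space accounting.} The content lengths sum to $m w'$ bits. The active-word overhead is $c\,m\,C_{retr}\log w = O(m w'\log w/w)$. Together the space is $(1+O(\log w/w))$ times the old usage.

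4. \emph{Address limit.} The address limit is $c\,m = O(\text{max space}/w)$. The hypothesis that the old space is below $2^w$ bits guarantees these addresses fit in the $2^w$-word address space.

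The main obstacle is defining the old structure's space usage when its memory is not a prefix. If the intended notion is "highest address touched," the argument above goes through directly. Otherwise, I would first use a standard translation to compact the memory, for example a dynamic retrieval or hashing layer, before applying the block simulation; this is the step I would check most carefully against the paper's conventions.
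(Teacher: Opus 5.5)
The paper states \cref{clm:word_RAM_to_vl} without proof, treating it as routine, so there is no written argument to compare yours against. Your block simulation is the natural argument, and its space and address-limit parts are right. Split each $w'$-bit cell into $c=\lceil w'/w\rceil=O(1)$ variable-length words. Then the contents total $mw'$ bits, and the active-word overhead is $c\,m\,C_{retr}\log w=O(mw'\log w/w)$. Every address lies below $c\,m_{\max}=O(m_{\max}w'/w)$, where $m_{\max}w'$ is the maximum space usage. Each cell access and each $w'$-bit arithmetic operation costs $O(1)$ variable-length operations.

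The step that needs an explicit assumption is the running time of resizing. When the prefix grows from $m$ to $m+\Delta$, you perform $c\Delta$ activations. This is $O(\text{old time})$ only if the word-RAM structure pays for the memory it claims, for instance if it writes every cell it adds to its prefix, or if claiming a cell costs unit time. Deactivations can then be charged to the matching earlier activations. That charging is amortized; for worst-case bounds, releases must be paid for as well. You should state this convention. If the old structure might read beyond its prefix, you should also return a default value instead of touching an idle word. With both, your proof is complete.

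Your remark that under the ``highest address touched'' notion the argument ``goes through directly'' is false for the time bound. One $O(1)$-time write at an address $a\gg m$ raises the high-water mark by $a-m$ and forces $\Theta(a-m)$ activations. Activating lazily does not fix this without extra machinery. The model gives no way to test whether a word is idle, so reading a never-written cell inside the prefix would be undefined behaviour.

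Drop the fallback through a retrieval or hashing layer as well. That is the mechanism of \cref{thm:memory_allocate}, which simulates the opposite direction. It would also make your simulation randomized, with only high-probability bounds and extra $2^{\delta w}$-bit tables, and the claim's unconditional guarantees allow none of these.
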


On the other hand, using dynamic retrieval data structures, it is also possible to simulate the variable-length word RAM model in the classical word RAM model.

\begin{lemma}
    \label{thm:memory_allocate}
    Let $w$ be a parameter. There exists a data structure in the word RAM model with word size $w'=\Theta(w)$ that simulates the memory of the variable-length word RAM model with word size $w$. Assume the number of active words is always at least $\poly w$. Then, this data structure supports the operations listed in \cref{def:vl_word_RAM} in $O(1)$ time with high probability in the number of active words, and its space usage is at most the space usage of the memory. The data structure assumes access to lookup tables and hash functions of total size at most $2^{w\delta}$ bits, where $\delta>0$ is an arbitrary parameter.
\end{lemma}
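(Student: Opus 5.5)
We simulate the variable-length memory with the resizable retrieval structure of \cref{rem:retrieval-variable-length}, keyed by word address. Concretely, we maintain one dynamic retrieval data structure $R$ whose key set is the set $A\subseteq[2^w]$ of currently active addresses. The value associated with address $a\in A$ is the content of word $a$. Its length $v_a\le w$ is chosen at insertion time, and the universe is $2^w$, so $\log U$ in \cref{thm:retrieval} becomes $w$.

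The operations are simulated as follows:
\begin{itemize}
  \item Activation of $a$ inserts $a$ with an empty value.
  \item Deactivation deletes $a$.
  \item A read of $a$ is a value query. It returns the stored bit string together with its length; the extension stores variable-length values, so the length must be recoverable.
  \item A write to $a$ is implemented as delete-then-reinsert with the new length.
  \item Arithmetic is native.
\end{itemize}
Because the memory model's behaviour is undefined on illegal accesses, we never need membership information. This is exactly the retrieval guarantee: correct answers on present keys, arbitrary answers otherwise.

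By \cref{thm:retrieval}, each operation takes $O(1)$ time with high probability in $|A|$. The requirement $|A|\ge \poly w$ of that theorem is the stated assumption. The upper bound $|A|\le U/2=2^{w-1}$ can be ensured by restricting addresses to half the space, or by doubling the universe, which only changes constants. The lookup tables and hash functions have size $U^{\delta}=2^{w\delta}$, as required.

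The space used by $R$ is
\[
\sum_{a\in A}v_a+O\bigl(|A|\log\log(2^w/|A|)\bigr)\le \sum_{a\in A} v_a + O(|A|\log w).
\]
Since $C_{retr}$ in \cref{def:vl_word_RAM} is chosen sufficiently large, the second term is at most $C_{retr}\log w\cdot|A|$. Hence the simulated space is at most the space charged to the variable-length memory.

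The main obstacle is whether the variable-length extension genuinely returns the length of each value in $O(1)$ time without extra per-key cost beyond $O(\log w)$. If it does not, we store the length explicitly. This is done either as a separate fixed-width retrieval structure with $\lceil\log(w+1)\rceil$-bit values, or by prefixing each value with its length. Either option costs $O(\log w)$ bits per active word, which the $C_{retr}\log w$ term absorbs.

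A second subtlety is that the retrieval structure's worst-case behaviour can fail with small probability. The lemma only claims its guarantees with high probability in $|A|$, so we inherit them directly, using the footnote to \cref{thm:retrieval} to note that correctness holds always.
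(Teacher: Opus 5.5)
Your proposal is correct and matches the paper's proof: a single variable-length retrieval structure keyed by the active addresses, with activate/deactivate/read/write mapped to insert/delete/query/delete-and-reinsert, and the $O(|A|\log\log(U/|A|))=O(|A|\log w)$ overhead absorbed by the $C_{retr}\log w$ charge per active word. The paper likewise treats each content as self-delimiting, which is your length-prefix fix. One nit: of your two fixes for the $|A|\le U/2$ requirement, only doubling the universe works (the paper takes $U=2^{w+1}$), because the simulated program may use all $2^w$ addresses, so restricting it to half of them would change the model being simulated.
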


\begin{proof}
    We simulate the variable-length word RAM memory as follows. Identify word addresses with integers in $[U]$, where $U=2^{w+1}$ (thus the number of active words is always at most $U/2$). Let $S\subseteq [U]$ denote the set of active words, and for each $x\in S$ let $s_x$ be the content stored at address~$x$ (a bit string of length at most~$w$ that encodes its own length). We maintain the resizable variable-length retrieval data structure from \cref{rem:retrieval-variable-length} on universe~$[U]$, storing the value $s_x$ (of length $|s_x|\le w=O(\log U)$) at each key $x\in S$.

    The four memory operations in \cref{def:vl_word_RAM} are implemented as follows:
    \begin{itemize}
        \item {Activate} word~$x$: insert the pair $(x,s_x)$ with $s_x$ the empty string.
        \item {Deactivate} word~$x$: delete~$x$.
        \item {Read} word~$x$: query the retrieval data structure at~$x$.
        \item {Write} content~$s$ to word~$x$: delete~$x$ and re-insert $(x,s)$.
    \end{itemize}
    Arithmetic and bitwise operations on $w$-bit integers are performed directly in the word RAM. Each simulated memory operation performs $O(1)$ retrieval operations, each of which takes $O(1)$ time with high probability in~$|S|$ by \cref{rem:retrieval-variable-length}; this yields $O(1)$ time per simulated operation.

    For space, \cref{rem:retrieval-variable-length} gives a bound of
    \begin{align*}
        \sum_{x\in S}|s_x|+O\bk*{|S|\log\log(U/|S|)}
    \end{align*}
    bits. Since $U=2^w$, we have $\log\log(U/|S|)\le \log w$, so the overhead is $O(|S|\log w)$. Choosing $C_{\mathrm{retr}}$ in \cref{def:vl_word_RAM} large enough, this is at most the space usage of the simulated memory.
\end{proof}

\paragraph{Combining multiple data structures.} One benefit of working in the variable-length word RAM model is that we can easily combine multiple variable-length data structures without having to worry about how to compactly pack their memories. We only have to make sure that these data structures never access the same words, which corresponds to limiting the total address limit of the combined data structure.

\begin{claim}
    \label{clm:virtual_address_combine}
    Given $B$ (``logical'') legitimate data structures in the variable-length word RAM model with word size $w$, such that the $i$-th data structure has address limit $M_i$ and $\sum_{i=1}^{B}M_i\le 2^w$, we can simulate them using a (``physical'') legitimate data structure in the variable-length word RAM model with word size $w$ whose space usage is equal to the total space usage of the logical data structures, and has address limit $\sum_{i=1}^{B}M_i$. The physical data structure supports operations to the memory of each logical data structure in unit time.
\end{claim}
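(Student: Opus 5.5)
The plan is to realize the physical memory by address translation: the $B$ logical address spaces are laid out as consecutive, disjoint intervals of the physical address space. Concretely, define prefix sums $O_i \defeq \sum_{i'<i} M_{i'}$ (so $O_1=0$). A memory operation that the $i$-th logical data structure issues on its local address $a\in[M_i]$ is then carried out on the physical word $O_i+a$. Because $\sum_i M_i\le 2^w$, each such physical address is at most $2^w-1$, so it is a valid address in the variable-length word RAM with word size $w$. The offsets $O_i$ are fixed, so we either precompute them or regard them as part of the static description of the combined structure. In the typical uses, such as the $p$-th logical structure in \cref{sec:outer-reduction}, the $M_i$ are equal or follow a simple pattern, so $O_i$ is an $O(1)$-time arithmetic expression. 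Hence every logical memory operation costs one physical operation plus $O(1)$ arithmetic.

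Next we check that the physical structure is legitimate. The intervals $[O_i,O_i+M_i)$ are pairwise disjoint. Since logical structure $i$ has address limit $M_i$, it only ever touches addresses in $[M_i]$, so its translated addresses stay inside its own interval. Therefore the active/idle status of every physical word in interval $i$ coincides exactly with the status of the corresponding word of logical structure $i$. Any activate, deactivate, read, or write that is defined for the logical structure is therefore defined for the physical one. Since each logical structure is legitimate, the physical one never triggers undefined behavior. The address limit is $\sum_i M_i$, because no physical address at or beyond that value is ever used.

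For space, the active physical words are in bijection with the disjoint union of the active logical words, and each keeps the same content. Hence both the total content length and the number of active words add up across the $B$ structures. The space measure of \cref{def:vl_word_RAM} is linear in these two quantities, so the physical space equals the sum of the logical space usages exactly.

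The only step that needs care is computing $O_i$ in unit time when the $M_i$ are arbitrary. I expect this to be the main (mild) obstacle. My first approach is to note that the combination is fixed when the algorithm is designed, so the offsets are constants of the algorithm and need not be stored in memory. Alternatively, I would assume, as in every application in this paper, that the $M_i$ are equal within each family of structures, so that $O_i$ is a closed-form expression computable in $O(1)$ time.
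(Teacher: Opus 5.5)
Your proposal is correct and is the argument the paper intends. The paper gives no formal proof beyond the remark ``just assign disjoint sets of addresses to each data structure,'' and your prefix-offset translation fills that in, together with your checks of legitimacy, exact additivity of the space measure, and the address limit $\sum_i M_i$.

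The one subtlety you flag, computing $O_i$ in unit time without storing anything, is also left implicit by the paper. Of your two fixes, only the closed-form offsets for uniform families actually work when $B$ is as large as $U+3$, since a hard-coded offset table would itself be memory. That fix covers every instantiation in the paper.
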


\end{document}